\documentclass[runningheads]{llncs}

\usepackage{graphicx}
\usepackage{xcolor}
\usepackage{amsmath}
\usepackage{amssymb}
\usepackage{adjustbox}
\usepackage{nameref}
\usepackage{tabularx}
\usepackage{tikz}
\usetikzlibrary{arrows.meta, positioning, shapes.geometric, fit, calc, backgrounds, decorations.pathreplacing}

\usepackage{xurl}

\usepackage{framed}
\usepackage{enumitem}

\usepackage{amsthm}
\usepackage{thmtools}
\usepackage{thm-restate}

\ifcsname c@lemma\endcsname\else\newtheorem{lemma}{Lemma}\fi
\ifcsname c@theorem\endcsname\else\newtheorem{theorem}{Theorem}\fi
\ifcsname c@corollary\endcsname\else\fi
\ifcsname c@definition\endcsname\else\newtheorem{definition}{Definition}\fi
\ifcsname c@remark\endcsname\else\newtheorem*{remark}{Remark}\fi
\declaretheorem{observation}

\usepackage{xcolor}

\newcommand{\melissa}[1]{}
\newcommand{\esha}[1]{}
\newcommand{\ana}[1]{}
\newcommand{\copilot}[1]{}

\newcommand{\DCSetup}{\mathsf{DC.Setup}}
\newcommand{\DCCommit}{\mathsf{DC.Commit}}
\newcommand{\DCProve}{\mathsf{DC.Prove}}
\newcommand{\DCVerify}{\mathsf{DC.Verify}}
\newcommand{\pp}{\mathit{pp}}
\newcommand{\digest}{\mathit{d}}

\newcommand{\ServerInit}{\mathsf{Server.Init}}
\newcommand{\UserInit}{\mathsf{User.Init}}

\newcommand{\ServerProcessMessages}{\mathsf{Server.ProcessMessages}}
\newcommand{\UserProcessServerMessages}{\mathsf{User.ProcessServerMessages}}

\newcommand{\NeighborShare}{\mathsf{NeighborShare}}
\newcommand{\VerifyPoM}{\mathsf{VerifyPoM}}
\newcommand{\OOBshare}{\mathsf{OOBshare}}

\newcommand{\serverst}{\mathit{server\_st}}
\newcommand{\serverparams}{\mathit{serverparams}}
\newcommand{\username}{\mathit{username}}

\newcommand{\st}{\mathit{st}}

\newcommand{\msgs}{\mathit{msgs}}

\newcommand{\accept}{\mathit{accept}}
\newcommand{\reject}{\mathit{reject}}
\newcommand{\PoM}{\mathit{PoM}}
\newcommand{\PoI}{\mathit{PoI}}

\newcommand{\sgFORT}{\textsf{sg-FORT}}
\newcommand{\fortress}{\textsf{Fortress}}

\newcommand{\post}{\mathit{post}}
\newcommand{\thresh}{\mathit{thresh}}
\newcommand{\negl}{\mathsf{negl}}
\newcommand{\Adv}{\mathbf{Adv}}
\newcommand{\vk}{\mathit{vk}}
\newcommand{\sk}{\mathit{sk}}

\newcommand{\Sign}{\mathsf{Sign}}

\newcommand{\Sigkeygen}{\mathsf{Sig.Keygen}}
\newcommand{\delay}{\mathit{delay}}
\newcommand{\lastreset}{\mathit{lastreset}}
\newcommand{\mingap}{\mathit{mingap}}

\newcommand{\Init}{\mathsf{Init}}
\newcommand{\Intermediate}{\mathsf{Intermediate}}
\newcommand{\End}{\mathsf{End}}
\newcommand{\phase}{\mathsf{Phase}}
\newcommand{\setup}{\mathsf{Setup}}

\usepackage{hyperref}
\usepackage[nameinlink,capitalize,noabbrev]{cleveref}

\title{Fortress: \underline{Fo}rk \underline{R}esistant \underline{T}ransparency logs \underline{Re}lying on \underline{S}ocial graph \underline{S}tructure}

\titlerunning{Fortress: Fork-Resistant Transparency Logs}

\author{}
\institute{}

\author{Anastasiia Kucherenko\inst{1} \and
  Melissa Chase\inst{2} \and
  Esha Ghosh\inst{2}}

\authorrunning{A. Kucherenko et al.}

\institute{Institute of Entrepreneurship and Management, HES-SO Valais-Wallis, Switzerland\\
  \email{anastasiia.kucherenko@hevs.ch}
 \and
  Microsoft Research, Redmond, WA, USA\\
  \email{\{melissac,esha.ghosh\}@microsoft.com}}

\begin{document}

\maketitle

\begin{abstract}
Transparency systems---such as Key Transparency (KT), Certificate Transparency, and binary transparency---require that all honest participants see the same view of an untrusted server's directory,
or, failing that, that the misbehaviour is publicly detected.
Every existing design closes this gap with an \emph{external} trust anchor: a publicly available bulletin board (BB),
a blockchain, or an out-of-band (OOB) gossip network among auditors. 

We observe that transparency systems already have a built-in resource that has so far been overlooked: the \emph{social graph} induced by the users themselves---for example,
contact lists in end-to-end encrypted messaging. We ask whether this social graph can replace the bulletin board or the external gossip channel entirely, and answer in the affirmative (if occasional out-of-band interactions are allowed).

We formalise a new primitive, \emph{Social-Graph based Forking-Resilient Transparency} (\sgFORT),
which captures the security guarantees (Consistency and Non-frameability) of a BB-free transparency system.
We then construct \fortress\ (\underline{Fo}rk \underline{R}esistant \underline{T}ransparency logs \underline{Re}lying on \underline{S}ocial graph \underline{S}tructure),
to our knowledge the first such protocol. In every epoch either all users output the same post, or at least one honest user produces a publicly verifiable Proof of Misbehaviour.
Our guarantees hold against an arbitrary malicious server, including a \emph{snapshot adversary} that may compromise a user in and learn its secret state, as long as the server does not compromise users that are too close together in the social graph.
Users communicate only through the server's own channels supplemented by occasional out-of-band interactions \emph{between direct contacts}; no external infrastructure is required.
In honest executions, OOB interactions occur only at user resets their secrets and only with a single direct contact, so the total OOB cost approaches the natural lower bound of one call per reset---dramatically below the
full-gossip alternative. We prove security under standard cryptographic assumptions on the signature scheme and the underlying dictionary commitment. Our analysis relies on several simplifying assumptions
(e.g.\ synchronous epochs and always-online users) that we discuss as targets for future work; we view our result as a first feasibility theorem that opens up the design space of social-graph-based transparency systems.
\end{abstract}


    

\section{Introduction}

Transparency systems---whether for key distribution (Key Transparency), certificate management (Certificate Transparency~\cite{DBLP:journals/rfc/rfc6962,rfc9162}), or binary transparency~\cite{DBLP:conf/esorics/Al-BassamM18,newman2022sigstore}---share a common architecture: a server that is trusted for \emph{functionality} and \emph{availability}, but \emph{not} for security. The server maintains a data directory and serves it to its participants, but it may attempt to show different participants different views of that directory.

These systems achieve security in the \emph{eventual detection} model: either all participants always have the same view of the server-maintained data directory, or a \emph{split-view attack}---where the server presents inconsistent views to different participants---is eventually detected by the honest participants. 
To achieve this, transparency systems typically rely on one of two external mechanisms: a trusted third-party \emph{bulletin board} (BB), on which the server publishes commitments (digests) that all participants can independently verify, or an \emph{external gossip protocol} through which participants exchange and compare the data they received from the server. \melissa{should we also mention the option where a trusted party signs a unique commitment per epoch?}\esha{I think that unnecessary complicates the story}  The disadvantage of all of these external mechanism is that they are costly and not always realistically available, reliable or trusted. 

In this work, we make a key observation: in every transparency system, there exists an underlying ecosystem of users who are already interacting with each other to share information---and these interactions induce a \emph{social graph}. For instance, in an end-to-end encrypted messaging system, users have their contact lists; in a binary transparency ecosystem, software publishers and vendors interact with users. We then ask: 

\begin{quote}
\emph{Can we leverage this implicit social graph to replace the third-party bulletin board or external gossip mechanism in transparency systems, and still achieve the same level of security?}
\end{quote}

We 
answer the posed question in the affirmative.
More precisely, at every epoch, the server produces a \emph{post}---an arbitrary blob of data that it distributes to all users. The post is not itself a cryptographic object; it is simply some data on which all participants want to agree. For example, we can {instantiate the post as a Certificate Transparency~\cite{DBLP:journals/rfc/rfc6962,rfc9162} log root, a binary transparency~\cite{DBLP:conf/esorics/Al-BassamM18,newman2022sigstore} commitment, a Key Transparency~\cite{keybase-acquisition,ZoomWhitepaper,whatsapp-kt,CONIKS,seemless,merkle2,versa,RZKS,verdict,parakeet,hafezi2025irondict,ietf-keytrans} post or any other external blob that a transparency system needs its participants to reach consensus on. The goal of our protocol is simple to state:

\begin{quote}
\textbf{Either} in every epoch, all users output the same post, \textbf{or} a publicly verifiable \emph{Proof of Misbehavior} (PoM) is produced.
\end{quote}


Recall that, in existing transparency systems, this guarantee is achieved by having the server publish its posts on a trusted bulletin board that all users can read (or through an external gossip protocol). Our challenge is to achieve the \emph{same} guarantee without either of these, using only the communication channels provided by the (potentially malicious) server itself, supplemented by occasional out-of-band interactions between users.


As we will discuss, this problem is very challenging.  To make the analysis more tractable, we make a number of simplifying assumptions; for example we assume that the protocol proceeds through a number of synchronous epochs and that all users are online in every epoch. While this means that the protocol we propose is not immediately deployable, we believe that our problem formulation and protocol approach will form a good basis for extensions removing these assumptions in future work.

\subsection{Why is the problem hard?}
At first glance, removing the bulletin board seems straightforward: just have users exchange posts with each other. But there is a fundamental circularity. The communication channels between users go through the very server we are trying to audit. If the server can tamper with messages, how can users trust the posts they receive from each other?

The server could mount a \emph{split-view} (or \emph{forking}) attack: show one post to Alice and a different one to Bob, and then intercept all messages between them so they never notice the discrepancy. In a system with a bulletin board (or if Alice and Bob are constantly gossiping through an external channel that does not go through the server), this is easy to prevent---both Alice and Bob check that they see the same post. Without either of these, we need a different mechanism.

To appreciate the many subtleties of the problem we will gradually go through a number of proposed solutions before arriving at our final solution.

\paragraph{Attempt \#1: unforgeable digital signatures.} A natural idea is to have users authenticate the posts they forward to each other using digital signatures. If Alice and Bob each held an authentic copy of the other's current signature verification key, then Bob could sign the post he received from the server and forward the signed message to Alice (via the server). Alice would then verify Bob's signature and, since the server cannot forge signatures on Bob's behalf, any attempt by the server to tamper with the forwarded post would be detected.

This reduces the problem to a seemingly simpler one: how do Alice and Bob obtain each other's authentic, \emph{current} verification keys? Note that over the lifetime of the system, users reset their keys multiple times---reinstalling the app, switching devices, or recovering from loss. We wantto allow for the possibility that the user could lose all of their devices and associated keys, so signing new devices with an existing device is not a viable solution. 
Users could attempt to distribute their new keys with the help of the server, however then we would need some way to ensure that this distribution is correct.  In fact, adding auditability to the server's key distribution is precisely the purpose of Key Transparency (KT), which in turn relies on a transparency log  which presents a consistent view to all users. So we find ourselves back in the same circularity: we cannot use KT to obtain authentic verification keys, because any KT scheme itself relies on a bulletin board or an external gossip mechanism to prevent split-view attacks---exactly the external resources we are trying to eliminate.


This points to a fundamental question: even if we assume that each user starts in a ``good'' state, meaning they have the ``correct'' verification keys for all their neighbours at the very start, is it feasible to achieve our goal in the model where the users only ever communicate through the server? Unfortunately, if honest users are allowed to reset their keys, the above goal is unachievable in the absence of some out-of-band (OOB) interaction. The intuition here is simple: when some honest user resets her key, instead of distributing her real new key, the server could simply distribute a fake key on her behalf and effectively cut her off from the network. Similarly, when her device attempts to retrieve keys for her neighbors, the server can instead respond with fake keys which it controls.   Since she has no OOB channel with any of her neighbours, this misbehaviour will go undetected. Using the fake key, the server can distribute a differing post to this user versus the rest of the network, creating a split-view attack.


\paragraph{Attempt \#2: signatures + OOB interactions.} 
Our next attempt is to bootstrap authenticity through the social graph itself: we assume that neighbours exchange correct verification keys at the very beginning, and then use OOB interactions to handle key distribution when users reset their keys.  For example whenever a user resets, they could contact all of their neighbors OOB and provide their new keys.  

Of course, we do not want to have a user contact all their neighbors out-of-band each time they reset their key; that would make the protocol completely infeasible. 
However, if a user only needs to do one (or a few) OOB interaction with one of their close contacts, that is completely feasible. This is especially true since some of the user's close contacts are likely geographically close to the user---people who live in the same household, work in the same office, or belong to the same neighbourhood or community group---making a physical out-of-band channel realistic.
For example, it would not be unreasonable to say that when a user loses their device and must reinstall a transparency app on a new device they would be required to meet one contact in person to scan a QR code on their phone.  That is the approach we will take in our next attempt: 

\paragraph{Attempt \#3: signatures + OOB interactions + trust in neighbours} A natural idea, in the spirit of PGP's ``web of trust''~\cite{zimmermann1995pgp}, is to chain authenticity through \emph{transitive endorsements}. In PGP, each user maintains a personal keyring of public keys they have verified in person, and they sign the keys of people they trust. These signatures are then published, so that a third party who has never met the key's owner can still be convinced of its authenticity by finding a chain of signatures connecting someone they already trust to the new key. In other words, authenticity propagates: ``I trust Alice's key because I verified it in person; Alice has signed Bob's key; therefore I trust Bob's key,'' and so on. Translated to our setting, this would mean: when a user resets her key, she makes a single OOB interaction to just \emph{one} trusted neighbour, who then vouches for the new key by signing it; that neighbour's signature convinces the remaining neighbours, who in turn sign and propagate the vouch further, and so on.

\melissa{this paragraph maybe could be clearer/more explicit} Unfortunately, this is very fragile.   A single compromised user along the chain---or, more precisely, the single compromised user whose signature is being relied on to vouch for a reset---is enough to break the entire chain of trust. The server can simply use the compromised user's key to sign a \emph{fake} key for the resetting user $u$, that fake key can be used to sign a reset for one of $u$'s neighbors and so on; every downstream neighbor who relies on these signatures will accept the fake keys as authentic. The compromise of one key thus leads to an unbounded cascade of fake keys being accepted across the network, re-enabling the split-view attack. Any robust protocol must therefore avoid relying on any \emph{single} user's signature to endorse a reset---which motivates our next attempt.


\paragraph{Attempt \#4: signatures + OOB interactions + majority confirmation from neighbours.} \melissa{this paragraph was a bit confusing} The failure of the PGP-style chain points directly to the fix: instead of trusting a \emph{single} vouch, we require \emph{many} neighbours to independently vouch for a user's key. Concretely, when a user $v$ resets her key, each of her neighbours $u$ does not immediately accept the new key as authentic. Instead, $u$ waits until a \emph{majority} of $v$'s other neighbours--- whose keys $u$ already considers authentic---have independently reported receiving the same new key for $v$. Only then does $u$ accept the new key. Crucially, these confirmations are not obtained via OOB interactions at all: in each epoch, 
every user routinely forwards (via the server) all of her neighbors keys to one another, signed under her own key. The majority check at $u$ is therefore performed over these  signed messages---the server cannot forge them for honest neighbors, and so the only way to obtain a majority of confirmations for a fake key is to compromise or use a fake key for a majority of the resetting user's neighborhood.



Where are OOB interactions actually used, then? In our protocol they appear in two places. 
The first is when  a user $u$ has just reset their device/app and has lost all her state, and must re-obtain the recent post and her neighbours' current keys from \emph{some} trusted source. In this case, the user does an OOB interaction with a neighbour who has not reset too recently (specifically, not in the last $5$ epochs). If any of her other neighbours reset very recently then a small number of additional OOB interactions with those neighbours to cross-check their state. Second, OOB interactions are triggered \emph{reactively} during normal epoch processing whenever a local check fails: if a message from a neighbour fails signature verification, if an expected message is missing, or if too many of $u$'s neighbours have unaccepted keys (i.e., keys that has not been vouched for by sufficient number of neighbor)
If the server is behaving honestly, then none of these triggers should fire, so no reactive OOB interactions are made; when the server does misbehave, the triggered OOB interactions either resolve the discrepancy or produce a publicly verifiable PoM, e.g. by finding two different digests signed by the server for the same epoch.

In summary, each key reset is accepted only after a majority of the resetting user's neighbours confirm the new key over the ordinary server-routed channels, with OOB interactions used only at reset time and as a reactive fallback when local consistency checks fail.



Unfortunately, in practice Attempt \#4 contains a caveat because it assumes that $u$ will receive confirming messages from many of $v$'s neighbors, but most social graphs are sparse.  What if $u$ is not in contact with any of $v$'s other neighbors?


\paragraph{Attempt \#5: signatures + OOB interactions + majority confirmation from neighbours + boosting effective neighborhood} 
While we cannot change the social graph itself, we boost the effective neighborhood size, by providing each user with \emph{contacts of contacts}. Essentially this requires an additional setup set where each user sends its contacts one another's usernames and public keys. Now, when user $v$ resets, her neighbors will get vouches from each other and need neither blindly rely on server nor use OOB interactions to $v$.  The formal specification of every step appears in Section~\ref{sec:fortress-formal}; below we describe the protocol at a high level.  Crucially, each user only communicates with these additional "2-hop" neighbors through the server: all OOB interactions are still only made between direct neighbors in the original social graph. 

\subsection{Our protocol}

Our final protocol uses the main ideas from Attempt \#5 with one interesting cryptographic primitive for achieving efficiency and for achieving ``vouches" from the neighbors in a clean way.
This primitive is called a \emph{dictionary commitment scheme} (DC). At a high level, it is a primitive that commits to an entire dictionary of label-value pairs, and allows for efficient proofs of inclusion (PoI) for individual label-value pairs.
This primitive is binding: meaning a malicious prover cannot prove two different values for the same label with respect to the same commitment. We use DC in our protocol in the following way.
The server maintains the key directory $D$ and at every epoch $t$ re-commits to it using a dictionary commitment scheme (see Section~\ref{sec:digest-scheme}): the resulting commitment $d(t)$ is the \emph{digest} of epoch $t$. For any user $v$, the scheme produces a short $\PoI$
that $v$'s entry in $D$ is a specific public key. Since DC is binding, once the digest is fixed, there is at most one public key for each username that the server can open to, so the server cannot equivocate on a user's key without producing two different digests for the same epoch.
To send multiple different neighbors' keys to a user, the server can just send the $d(t)$ (for epoch $t$) and the corresponding $\PoI$s for those keys, and just sign $d(t)$ instead of signing each key individually. Moreover, the same digest also lets each user's signed per-epoch message serve as an \emph{implicit vouch} for the keys its own neighbours (and neighbours-of-neighbours) hold, without any extra signatures. Recall from Attempt~\#5 that, to accept a reset key for $v$, a user $u$ needs a majority of $v$'s neighbours to confirm the new key. A naive implementation would have each such neighbour $w$ sign the specific public key of $v$ it observed and forward that signed endorsement to $u$. Using dictionary commitments, this becomes essentially free: in every epoch each user $w$ simply signs the respective digest it saw, and asks the server to forward that signed digest to all of its contacts. When $u$ later receives $w$'s signed digest and checks it to be equal to its own, 
the binding property of the commitment means $w$ has essentially vouched for all the keys $w$ checked for (using PoI) including $v$'s public key. 

\newcommand{\Update}{} 

\subsection{Our Contribution} 
We formalize a new primitive called \emph{Social-Graph based Forking Resilient Transparency system(\sgFORT)} that defines a forking-resilient Bulletin-Board(BB) and gossip free transparency system along with formal security properties (Consistency and Non-frameability). To the best of our knowledge, this is the very first work that gives a rigorous definition for a BB/gossip-free transparency system.

We then propose \emph{Fortress} (Fork-Resistant Transparency logs Relying on Social-graph Structure).


While we make several simplifying assumptions, our protocol provides the following strong guarantees:
\begin{itemize}
    \item \emph{Deterministic, bounded-time consistency.} By the time the users output a given post, either all users output the same post, or at least one user outputs a verifiable proof that the server misbehaved. Our guarantees hold against an arbitrary adversarial server, including a \emph{snapshot adversary} who may compromise a user in any two-hop neighborhood and learn their secret key. 
    
    
    \item \emph{No out-of-band communication beyond direct contacts.} Detection is enabled by giving each user limited information about their contacts' contacts; however, this information is exchanged exclusively through the system's communication channels, and out-of-band (OOB) interactions are only ever required between users who are already direct contacts.
    
    \item \emph{Minimal OOB overhead in the honest case.} When the server behaves correctly, OOB interactions are needed only at user resets, and only with a single direct contact. This is realistic in practice, since users typically have at least one direct contact in their physical vicinity with whom an OOB exchange is easy to perform (e.g. a friend, roommate, neighbor, family member, or colleague).
    This is dramatically lower than that of full OOB gossip protocols, which require communication between all pairs of neighbors at every epoch, or the traditional E2EE approach of recommending that a user perform an OOB interaction with all of their contacts every time they change their key.  Moreover, we argue in ~\ref{sec:OOB_calls} that at least one OOB call pre reset is required, so when the server behaves honestly our OOB cost is in fact optimal.
\end{itemize}

\subsection{Paper Organization}

The rest of the paper is organized as follows. Section~\ref{sec:related-work} describes related work. Section~\ref{sec:system_model}  describes the system model, assumptions, and notation. Then, in Section~\ref{sec:sgFORT} we present the security model including the formal security game. Section~\ref{sec:fortress-formal} specifies the auditing protocol in detail, including the server protocol, the dictionary commitment primitive, the standard epoch protocol for users, and the procedures for handling key resets for both direct and two-hop neighbors. Section~\ref{sec:all_proofs} presents all the proofs for our protocol's security: first a sequence of game hops, then the proof establishing the protocol's correctness properties: that inconsistencies introduced by the server are detected within a bounded number of epochs, and finally the main security theorem. Finally, we conclude and show future work directions in Section~\ref{sec:future_work} and devote the Appendix for full proofs and protocol's details, additional remarks, and discussion.

\section{Related Work} 
\label{sec:related-work}

As we mentioned above, all transparency systems use either a bulletin board or an external gossip protocol. In this section, we discuss the existing works that are most closely related to us.

\paragraph{Removing the bulletin board.}
The most closely related line of work to ours concerns removing the dependence on a trusted bulletin board. Yadav et al.~\cite{yadav2022automatic} proposed a system (KTACA) in which a verifiable key directory information is distributed among clients and fake key attacks are detected automatically. They showed that if the server equivocates (presents different views to different users), it partitions the social graph into subgraphs with differing views, and once two clients from different subgraphs form a secure connection, the attack is detected. However, their detection guarantees are \emph{eventual}: they depend on when users happen to communicate across partition boundaries. Moreover, their protocol guarantees security in components of the graph that are connected by secure connections, but it is impossible for the user to know whether they are in one of those components. Their problem statement, protocol and analysis are also quite hand-wavy; we found this problem to be very subtle, as it is easy to miss important cases and circularities without a formal analysis. In contrast, we have a rigorously defined problem statement. We formalize a new primitive to capture the problem and the security precisely. Then we give a protocol and prove rigorously that it satisfies our definitions. \ana{Also this work, probably, doesn't consider snapshot adversary as well as that resets of users could have hidden some undetected inconsistencies.}

Brorsson et al.~\cite{brorsson2024cod} proposed Consistency-or-Die (CoD), a consistency protocol for key transparency that does not rely on external auditor committees, out-of-band channels, or blockchains. Their approach uses verifiable random functions to cryptographically select a small, initially undisclosed committee of random users who endorse the current view of the log. The main issue with this protocol is the following: their solution relies on an external Identity provider (IdP) to sign all the participant identities to prevent sybil attacks. So, while the paper is claiming to get rid of 3rd party BB, they are adding back the assumption of a trusted 3rd party through a trusted IdP.

\paragraph{Gossip-based approaches.} 
An alternative to a bulletin board is to have users (or designated auditors/witnesses) gossip among themselves out-of-band to exchange digests. CONIKS~\cite{CONIKS} originally suggested this direction, and Meiklejohn et al.~\cite{mog} studied gossip and client audits in verifiable data structures, analyzing how to detect split views when clients and monitors exchange signed tree heads. However, gossip-based approaches face significant practical challenges: finding a suitable out-of-band network is difficult, detection is eventual and depends on communication patterns, and gossip may leak metadata about which users are communicating. Our protocol avoids these issues by leveraging the structure of the social graph for deterministic, bounded-time detection, with only occasional out-of-band calls as a fallback.

\paragraph{Accountable Byzantine consensus.}
At a high level, a reader can also relate our problem to the accountable Byzantine consensus: the classical Byzantine agreement problem~\cite{lamport1982byzantine} asks $n$ parties to agree on a common value despite some fraction behaving arbitrarily, and accountable variants additionally require that, when safety is violated, the misbehaving parties be identified with cryptographic proofs of misbehavior~\cite{peerreview,polygraph,bftforensics}. 
Our setting differs from this body of work in a fundamental way: rather than the participants being potentially Byzantine, in our model the participants (i.e., users) are honest and it is the communication infrastructure between any two of them that may be corrupted. 




\section{System Model}\label{sec:system_model}

\paragraph{System model} As in traditional transparency systems, in our system model there is a single server who is trusted for availability but not for security. The users/clients (terms used interchangeably) are identified with a username and linked with each other in a connected social graph $G$. Each pair of users connected in the graph is called neighbors and can communicate through the server. For each user $u\in G$ we denote by $N_0(u)$ - this initial ``direct'' neighborhood of $u$. The users can also make (occasional) \emph{secure} out-of-band (OOB) interactions with their direct contacts $N_0$. (We will use the terms OOB calls and OOB interactions interchangeably.) Later in protocol (Section~\ref{sec:fortress-formal}) we enhance the graph with additional new connections -  $N_1(u) = \bigcup_{v\in N_0(u)} N_0(v)$ - two-hop neighbors. The union $N_0(u)\cup N_1(u)$ is called extended neighborhood of user $u$.

\paragraph{Assumptions} We make some simplifying assumptions on the system and believe they can be relaxed in future while keeping our protocol as a solid foundation (Section~\ref{sec:future_work}).

\begin{enumerate}
    \item The social graph is connected, remains static throughout the system's lifetime and each user has $d$ direct neighbors.
    \item We assume our system is fully synchronous and it proceeds in epochs. Furthermore, each epoch proceeds in fully synchronized phases and all the users are always online.
    \item We assume that the system starts up in a good state; more precisely, this means, before the first epoch, all the users have setup their signing keys and have received the authentic signature verification keys of their neighbors (immediate and 2-hop). 
    \item We also assume that each user knows the usernames and verification keys of their direct neighbors and their 2-hop neighbors. However, we do not ever require them to contact their 2 hop neighbors directly through secure OOB calls. All secure OOB calls happen between direct neighbors only.
    \item We also assume that honest users do not reset their keys too often (more concretely, in our protocol honest users do not reset within 5 epochs).
    \item We limit the number of users who can be compromised, so that at most one user in each (direct or 2-hop) neighborhood can be compromised.  Moreover, we only consider snapshot compromise as described below.
\end{enumerate}

\paragraph{Compromise model} For our main security property, we consider an adversarial server who can also perform snapshot compromise on users: a snapshot compromise allows the adversary to learn the internal state of the user's device. The user's device continues to run the honest protocol even after this compromise has occurred.  In our protocol for example, the compromise would reveal the user's signing key which would allow the adversary to sign fake messages on behalf of the honest user.  However, the honest user's device is still running the protocol and will detect if it does not receive the appropriate incoming messages.  Finally, we assume that OOB calls cannot be intercepted or impersonated.  

Compared with a traditional transparency system, like key transparency, this trust model is somewhat incomparable.  On the one hand, it is weaker in that we do not have to rely on a trusted bulletin board.  On the other hand, we replace this with trust in the other users in the system.  


\section{Social-Graph based Forking Resilient Transparency system (\sgFORT)}
\label{sec:sgFORT}

Here we present a new primitive: \emph{Social-Graph based Forking Resilient Transparency system (\sgFORT)} (with detailed notation and diagram in Section~\ref{sec:security_game_notation} and Figure~\ref{fig:security-game} respectively). In \sgFORT, a collection of users interact through a server to arrive at a consensus on an opaque blob of data (called $\post$). New post is disseminated by server at every epoch, so to specify a post stored by user $u$ at epoch $e$ we denote $post^{(e)}_u$ or $post(e, u)$. All communication between users is routed through the server except $\OOBshare$ for secure out-of-band calls between direct neighbours when needed. Any user algorithm may additionally output a Proof of Misbehaviour ($\PoM$), which can be publicly verified using $\VerifyPoM$. Now, we present the API with an informal
description of what each entry point does.

\begin{description}
    \item[$\setup(1^\lambda) \to \pp$] gets the security parameter as input and generates system parameters $\pp$ for the system. This needs to be run by a trusted party. Looking ahead, in our instantiation it will just mean choosing a collision resistant hash function.
    \item[$\ServerInit(\mathit{pp}, G, \{\username_u\}_{u\in G}  ) \to (\serverst, \serverparams)$.]  
    Initializes the server given system parameters, a social graph $G$ and a username  for each user in $G$.  
    Initializes the server's internal state $\serverst$ and the public
    parameters $\serverparams$ that all users use to verify subsequent
    messages.

    \item[$\UserInit(\serverparams, \username) \to (\st,
    \msgs)$.]
    Initialises a user given their username and the server parameters. The user $u$ sets up its local state $\st$ (sometimes for precision denoted $\st_u$)
    and produces a registration message $\msgs$ for the server.

    \item[$\NeighborShare(\st_u, \st_v) \to (\st_u', \st_v')$.]
    A secure exchange of information between two neighbours $u, v$ (direct
    or two-hop) during initialization. This can be thought of as a one-time trusted setup phase in cryptographic primitives (Such as for setting up CRS  in SNARKS).

    \item[$\ServerProcessMessages(\serverst, \msgs, \post) \to (\msgs',
    \serverst')$.]
    Run by the server once per epoch. It consumes the incoming user
    messages $\msgs$ and the public post $\post$ for the current epoch,
    advances the epoch counter, updates its internal state to
    $\serverst'$, and produces an outgoing message $\msgs'_u$ for every
    user $u$.

\end{description}
 The users receive the server's messages every epoch and run three algorithms in order
    to parse the received messages, process them and update their states and finally, to output messages and posts for every epoch.
        The three algorithms and two assisting calls are as follows:

\begin{description}

\item[$\UserProcessServerMessages.\mathsf{Init}(\st, \msgs) \to \st'$.]

    First algorithm run by a user on receiving its  messages for
 the epoch from the server. It takes as input  the user's
    local state $\st$, and the received messages $\msgs$. 

    \item[$\UserProcessServerMessages.\mathsf{Intermediate}^{\OOBshare}(\st,
    \mathsf{phase}) \to \st'$.]
    The processing (and necessary OOB calls) happens in this algorithm.
    This algorithm is further divided into multiple phases, parameterised by a $\mathsf{phase}$
    counter. Each of the phases is executed in order, and with oracle access to $\OOBshare$. 
    The user verifies the contents of its parsed state, updates the records it stores for its
    neighbours and two-hop neighbours, and triggers $\OOBshare$ calls or
    emits a $\PoM$ whenever necessary. 

    \item[$\UserProcessServerMessages.\mathsf{End}(\st) \to (\st', \msgs',
    \post)$.]
    Final user run algorithm per epoch. The user finalises its state, emits
    its outgoing messages $\msgs'$ that includes messages for the server and messages to be forwarded by the server to its
    neighbours, and outputs its current $\post$ for the epoch.

    \item[$\OOBshare(\st_u, \st_v) \to (\st'_u,\st'_v)$.]
    A secure out-of-band call from user $u$ to a direct neighbour $v$ that does not go through the server.
    If $u \notin N_0(v)$ the call aborts. Otherwise the two parties
    exchange the relevant pieces of state needed by the calling
    sub-routine to make progress, and update their local states
    accordingly.

    \item[$\VerifyPoM(\serverparams, \PoM) \to \accept/\reject$.]
    Public verification of a Proof of Misbehaviour against the server
    parameters. Outputs $\accept$ iff $\PoM$ is a well-formed proof that
    the server deviated from its specification.
\end{description}

    \paragraph{\textbf {Resets.}} If the user needs to \emph{reset} their key (e.g. because they reinstall the app or get a new device), then this is captured by setting the user state to $\bot$ before $\UserProcessServerMessages.\mathsf{Init}$ is called.

\subsection{Correctness}\label{sec:correctnessdef}

The correctness property of \sgFORT ~captures the guarantee that, in an honest execution of the protocol, every user eventually outputs the post that the server disseminated.  More precisely, we require that the post output by any user~$u$ at epoch~$e$ equals the post provided to the server $\delay$ epochs earlier, where $\delay$ is a fixed protocol parameter representing the inherent propagation delay.

We formalise this via the game in Figure~\ref{fig:correctness-game}.  Both the server and all clients execute the protocol honestly; the only freedom given to the adversary is to choose the environmental inputs: the social graph~$G$, the sequence of posts, a reset set~$\mathcal{R}$ modelling device loss, and the processing orders.  A trusted bookkeeping component runs $\setup$ to generate public parameters, tracks epoch counters and reset constraints, and records each user's output.  The adversary wins if any user~$u$ outputs a post at epoch~$e \geq \delay$ that differs from the post the server used at epoch $e - \delay$.  A scheme is \emph{correct} if no adversary can win this game.

\subsection{Security}


\sgFORT ~security properties are ~\emph{Consistency} and \emph{Non-frameability}.

\paragraph{Consistency}
\emph{Consistency} captures security against a malicious server: an
adversarial server cannot make two uncompromised users output
\emph{different} posts in the same epoch without being caught. Formally
(Fig.~\ref{fig:security-game}), the adversary plays the role of the
server while the challenger runs the honest clients on a
social graph $G$ chosen by the adversary. In each epoch the
adversary may compromise a subset of users $C \subseteq G$, subject to
the trust assumption that no honest user has more than $\thresh$ of
its neighbours in $C$, and may choose the per-user incoming
messages, the reset set, and the processing orders. The adversary wins
if there exist an epoch~$e$ and two uncompromised users $u, v$ with
$\post_u^{(e)} \neq \post_v^{(e)}$ and \emph{no} $\PoM$ generated
during the game verifies under $\VerifyPoM$. Consistency thus says:
either every honest user agrees on the post, or some honest user
produces a publicly-verifiable proof of the server's cheating.

\paragraph{Non-frameability.}
\emph{Non-frameability} captures the property that malicious
clients cannot frame an honest server. Formally
(Fig.~\ref{fig:pom-security-game}), the adversary controls the entire
client population while the challenger runs the server honestly via
$\ServerInit$ and $\ServerProcessMessages$. In each epoch the
adversary submits arbitrary incoming messages and a post; the
challenger processes them faithfully and returns the resulting
outgoing messages. At any point the adversary halts and outputs a
candidate $\PoM^{*}$. The adversary wins if
$\VerifyPoM(\serverparams, \PoM^{*}) = \accept$. Non-frameability thus
says: an honest server cannot be falsely accused --- no probabilistic polynomial-time
adversary can fabricate a verifying $\PoM$ against a server that
faithfully followed the protocol.


\subsection{Notation used in the security games} \label{sec:security_game_notation}
We refresh the notation we use in the two figures below.
The social graph chosen by the adversary is denoted~$G$, with
$N_0(u)$ the set of direct neighbours of user~$u$ and $\thresh$ the
maximum tolerated number of corrupted direct neighbours per honest
user. Each user~$u$ has a username $\username_u$ and a local state
$\st_u$ initialised by $\UserInit$ and updated each epoch by
$\UserProcessServerMessages$ and, when needed, by the
out-of-band oracle $\OOBshare$. Users generate per-epoch messages
$\msgs_u$ and an epoch post $\post_u$, recorded by the challenger as
$\post_u^{(e)}$ for epoch~$e$. The adversary additionally specifies a
reset set $\mathcal{R} \subseteq G$ each epoch, with $\lastreset_u$
tracking the most recent reset of user~$u$ and $\mingap$ the minimum
allowed gap between two resets of the same user. The honest server is
initialised by $\ServerInit$, holds state $\serverst$ and public
parameters $\serverparams$, and processes per-epoch messages via
$\ServerProcessMessages$. Finally, $\PoM$ denotes a candidate proof
of misbehaviour, public-verifiable via $\VerifyPoM$, which outputs
$\accept$ or $\reject$.

\begin{figure*}[tbp]
\centering
\begin{adjustbox}{width=\textwidth}
\begin{tikzpicture}[
    >=Stealth,
    every node/.style={font=\small},
    adv/.style={rounded corners, text width=3.5cm, align=left, inner sep=4pt},
    chal/.style={rounded corners, text width=9.8cm, align=left, inner sep=4pt},
    phase/.style={draw, dashed, rounded corners, inner xsep=8pt, inner ysep=12pt},
    arrowlabel/.style={font=\footnotesize, inner sep=2pt, text width=3.0cm, align=center},
]
\pgfdeclarelayer{background}
\pgfsetlayers{background,main}

\def\Ax{0}
\def\Cx{9.0}

\node[rounded corners, text width=3.5cm, align=center, font=\bfseries]
    (advhdr) at (\Ax, 0) {Adversary (Server $\mathcal{A}$)};

\node[rounded corners, text width=10.0cm, align=center, font=\bfseries]
    (chalhdr) at (\Cx, 0) {Challenger (Clients $\mathcal{C}$)};

\node[chal, below=0.6cm of chalhdr] (chal-pp) {\begin{minipage}{9.8cm}
    \textbf{Parameter generation:}
    \begin{itemize}[leftmargin=1.2em, itemsep=1pt, topsep=4pt, parsep=0pt]
        \item Run $\setup(1^\lambda) \to \pp$.
    \end{itemize}
\end{minipage}};

\node[adv, anchor=center] (adv-pp) at (advhdr.center |- chal-pp.center) {
    Receive $\pp$.
};

\draw[->] (chal-pp.west) -- node[arrowlabel, above]
    {$\pp$}
    (adv-pp.east);

\node[chal, below=0.4cm of chal-pp] (chal-setup) {\begin{minipage}{9.8cm}
    \textbf{Setup:}
    \begin{itemize}[leftmargin=1.2em, itemsep=1pt, topsep=4pt, parsep=0pt]
        \item For each user $u$:\\
              Run $(\st_u, \msgs_u) \gets \UserInit(\serverparams, \username_u)$.
        \item For each (direct or 2-hop) neighbor pair $(u,v)$ in $G$:\\
              Run $(\st_u', \st_v') \gets \NeighborShare(\st_u, \st_v)$ to let neighbors exchange identity information. \hfill $\triangleright$ \textit{captures trusted setup assumption}
        \item Store $G$, $\serverparams$. For each $u$, store $\username_u$ and the updated state $\st_u'$. Also store $\lastreset_u = 0$.
        \item Initialise $\mathsf{curEpoch} \gets 0$ and mark all users as \emph{uncompromised}.
        \item Return the initial messages $\{\msgs_u\}_{u \in G}$ to the adversary.
    \end{itemize}
\end{minipage}};

\node[adv, anchor=center] (adv-setup) at (advhdr.center |- chal-setup.center) {
    Choose a social graph $G$, a set of usernames $\{\username_u\}_{u\in G}$, and server parameters $\serverparams$.
};

\draw[->] ([yshift=3pt]adv-setup.east) -- node[arrowlabel, above]
    {$G,\, \{\username_u\}_{u\in G},$\\$\serverparams$}
    ([yshift=3pt]chal-setup.west);

\draw[->] ([yshift=-3pt]chal-setup.west) -- node[arrowlabel, below]
    {$\{\msgs_u\}_{u \in G}$}
    ([yshift=-3pt]adv-setup.east);

\node[chal, below=1.6cm of chal-setup] (chal-corrupt) {\begin{minipage}{9.8cm}
    \textbf{Compromise users (for each $u \in C$):}
    \begin{itemize}[leftmargin=1.2em, itemsep=1pt, topsep=4pt, parsep=0pt]
        \item Mark each $u \in C$ as \emph{compromised}.
        \item Reveal the current state $\st$ of each compromised user to the adversary.
        \item If there exists $u$ such that more than $\thresh$ of the users in $N_0(u)$ are compromised, then abort and the adversary loses. \hfill $\triangleright$ \textit{captures trust assumption on social graph}
    \end{itemize}
\end{minipage}};

\node[adv, anchor=center] (adv-corrupt) at (advhdr.center |- chal-corrupt.center) {
    Choose a subset $C \subseteq G$ of users to compromise this epoch.
};

\draw[->] ([yshift=3pt]adv-corrupt.east) -- node[arrowlabel, above] {$C$}
    ([yshift=3pt]chal-corrupt.west);

\draw[->] ([yshift=-3pt]chal-corrupt.west) -- node[arrowlabel, below]
    {$\{\st_u\}_{u \in C}$}
    ([yshift=-3pt]adv-corrupt.east);

\node[chal, below=0.4cm of chal-corrupt] (chal-proc) {\begin{minipage}{9.8cm}
    \textbf{Process epoch:}
    \begin{itemize}[leftmargin=1.2em, itemsep=1pt, topsep=4pt, parsep=0pt]
        \item Set $\mathsf{curEpoch} \gets \mathsf{curEpoch} + 1$.
        \item For each $u \in \mathcal{R}$:\\
        If $\lastreset_u + \mingap > \mathsf{curEpoch}$, abort and the adversary loses. \hfill $\triangleright$ \textit{captures assumption on minimum gap between resets} \\Otherwise set $\st_u \gets \bot$ (device loss) and set $\lastreset_u = \mathsf{curEpoch}$.
        \item For each $u$:\\
              Run $\st_u \gets \UserProcessServerMessages\Init(\st_u, \msgs_u)$.
        \item For $i = 0, \ldots, P{-}1$: for each $u$ in order $\sigma_{i}$:\\
              Run $\st_u \gets \UserProcessServerMessages\Intermediate^{\mathsf{OOB}(u,\,\cdot)}(\st_u, i)$.
        \item For each $u$:\\
              Run $(\st_u, \msgs_u', \post_u) \gets \UserProcessServerMessages\End(\st_u)$.
        \item For each $u$: Record $\post_u^{(\mathsf{curEpoch})} \gets \post_u$; if $u \in \mathcal{R}$, mark $u$ as \emph{uncompromised}.
    \end{itemize}
\end{minipage}};

\node[adv, anchor=center] (adv-msgs) at (advhdr.center |- chal-proc.center) {
    Choose per-user messages $\{\msgs_u\}_{u \in G}$, a reset set $\mathcal{R} \subseteq G$, and $P$ processing orders $\sigma_0, \ldots, \sigma_{P-1}$ (permutations of $G$).};

\draw[->] (adv-msgs.east) -- node[arrowlabel, above] {$\{\msgs_u\}_{u \in G},$\\$\mathcal{R},\, \sigma_0, \ldots, \sigma_{P-1}$}
    (chal-proc.west);

\node[chal, below=0.4cm of chal-proc] (chal-outmsgs) {\begin{minipage}{9.8cm}
    \textbf{End of epoch:}
    \begin{itemize}[leftmargin=1.2em, itemsep=1pt, topsep=4pt, parsep=0pt]
        \item Return the outgoing messages $\{\msgs_u'\}_{u \in G}$ and posts $\{\post_u\}_{u \in G}$ to the adversary.
    \end{itemize}
\end{minipage}};

\node[adv, anchor=center] (adv-outmsgs) at (advhdr.center |- chal-outmsgs.center) {
    Receive $\{\msgs_u'\}_{u \in G}$ and $\{\post_u\}_{u \in G}$ from all users this epoch.
};

\draw[->] (chal-outmsgs.west) -- node[arrowlabel, above] {$\{\msgs_u'\}_{u \in G},$\\$\{\post_u\}_{u \in G}$}
    (adv-outmsgs.east);

\begin{pgfonlayer}{background}
    \node[phase, fit=(adv-corrupt)(adv-outmsgs)(chal-corrupt)(chal-outmsgs)] (loop) {};
\end{pgfonlayer}
\node[font=\bfseries\small, fill=white, inner sep=2pt, anchor=south] at ([yshift=2pt]loop.north) {Repeat (each epoch)};

\node[rounded corners, text width=15.0cm, align=left,
      inner sep=4pt, below=0.5cm of loop] (win) {\begin{minipage}{14.3cm}
    \textbf{Adversary wins if all of the following hold:}
    \begin{itemize}[leftmargin=1.2em, itemsep=1pt, topsep=3pt, parsep=0pt]
        \item there exist an epoch $e$ and two uncompromised users $u, v$ such that $\post_u^{(e)} \neq \post_v^{(e)}$; and
        \item every $\PoM$ generated during the game satisfies $\VerifyPoM(\serverparams, \PoM) = \reject$.
    \end{itemize}
\end{minipage}};

\begin{pgfonlayer}{background}
    \node[draw, rounded corners, inner sep=10pt,
          fit=(advhdr)(chalhdr)(adv-pp)(chal-pp)(adv-setup)(chal-setup)(loop)] (gamebox) {};
\end{pgfonlayer}

\node[draw, rounded corners, text width=15.0cm, align=left,
      inner sep=4pt, below=0.3cm of win] (oob) {\begin{minipage}{14.3cm}
    \textbf{Oracle $\mathsf{OOB}(u, v)$:}
    If $v \notin N_0(u)$, return $\bot$.
    Otherwise, run $\st_v' \gets \OOBshare(u, v)$
    using the current challenger-maintained states, and update
    $\st_v \gets \st_v'$.
\end{minipage}};

\end{tikzpicture}
\end{adjustbox}
\caption{Consistency security game for \sgFORT, parameterised by the
number of intermediate phases~$P$ and a minimum of $\mingap$ epochs
between resets. The adversary controls the server and communicates
with the challenger, who manages honest client states.}
\label{fig:security-game}
\end{figure*}

\begin{figure*}[tbp]
\centering
\begin{adjustbox}{width=\textwidth}
\begin{tikzpicture}[
    >=Stealth,
    every node/.style={font=\small},
    adv/.style={rounded corners, text width=3.5cm, align=left, inner sep=4pt},
    chal/.style={rounded corners, text width=9.8cm, align=left, inner sep=4pt},
    phase/.style={draw, dashed, rounded corners, inner xsep=8pt, inner ysep=12pt},
    arrowlabel/.style={font=\footnotesize, inner sep=2pt, text width=3.0cm, align=center},
]
\pgfdeclarelayer{background}
\pgfsetlayers{background,main}

\def\Ax{0}
\def\Cx{9.0}

\node[rounded corners, text width=3.5cm, align=center, font=\bfseries]
    (advhdr) at (\Ax, 0) {Adversary (Clients $\mathcal{A}$)};

\node[rounded corners, text width=10.0cm, align=center, font=\bfseries]
    (chalhdr) at (\Cx, 0) {Challenger (Honest Server $\mathcal{S}$)};

\node[chal, below=0.6cm of chalhdr] (chal-pp) {\begin{minipage}{9.8cm}
    \textbf{Parameter generation:}
    \begin{itemize}[leftmargin=1.2em, itemsep=1pt, topsep=4pt, parsep=0pt]
        \item Run $\setup(1^\lambda) \to \pp$.
    \end{itemize}
\end{minipage}};

\node[adv, anchor=center] (adv-pp) at (advhdr.center |- chal-pp.center) {
    Receive $\pp$.
};

\draw[->] (chal-pp.west) -- node[arrowlabel, above]
    {$\pp$}
    (adv-pp.east);

\node[chal, below=0.4cm of chal-pp] (chal-setup) {\begin{minipage}{9.8cm}
    \textbf{Setup:}
    \begin{itemize}[leftmargin=1.2em, itemsep=1pt, topsep=4pt, parsep=0pt]
        \item Run $(\serverst, \serverparams) \gets \ServerInit(\pp, G, \{\username_u\}_{u \in G})$.
        \item Initialise $\mathsf{curEpoch} \gets 0$ and the server-message transcript $T \gets \emptyset$.
        \item Return $\serverparams$ to the adversary.
    \end{itemize}
\end{minipage}};

\node[adv, anchor=center] (adv-setup) at (advhdr.center |- chal-setup.center) {
    Choose a social graph $G$ and a set of usernames $\{\username_u\}_{u\in G}$, one per node.
};

\draw[->] ([yshift=3pt]adv-setup.east) -- node[arrowlabel, above, align=center]
    {$G$,\\ $\{\username_u\}_{u\in G}$}
    ([yshift=3pt]chal-setup.west);

\draw[->] ([yshift=-3pt]chal-setup.west) -- node[arrowlabel, below]
    {$\serverparams$}
    ([yshift=-3pt]adv-setup.east);

\node[chal, below=1.6cm of chal-setup] (chal-proc) {\begin{minipage}{9.8cm}
    \textbf{Process epoch:}
    \begin{itemize}[leftmargin=1.2em, itemsep=1pt, topsep=4pt, parsep=0pt]
        \item Set $\mathsf{curEpoch} \gets \mathsf{curEpoch} + 1$.
        \item Run \\
  $(\{\msgs'_u\}_{u \in G},\, \serverst') \gets {}$\\
  \hspace*{1em}$\ServerProcessMessages(\serverst,\, \{\msgs_u\}_{u \in G},\, \post)$
  on the adversary-supplied per-user incoming messages 
              $\{\msgs_u\}_{u \in G}$ and the epoch post $\post$;
              update $\serverst \gets \serverst'$.
              \hfill $\triangleright$ \textit{honest server faithfully follows the protocol}
        \item Append $(\mathsf{curEpoch},\, \{\msgs'_u\}_{u \in G})$ to $T$.
        \item Return the outgoing messages $\{\msgs'_u\}_{u \in G}$ to the adversary.
    \end{itemize}
\end{minipage}};

\node[adv, anchor=center] (adv-proc) at (advhdr.center |- chal-proc.center) {
    Acting as the collective of (possibly malicious) users, choose
    per-user incoming messages $\{\msgs_u\}_{u \in G}$ (which, in the
    first epochs, may include $\UserInit$ registration messages) and a
    post $\post$ to submit to the server this epoch.
};

\draw[->] ([yshift=3pt]adv-proc.east) -- node[arrowlabel, above]
    {$\{\msgs_u\}_{u \in G},\, \post$}
    ([yshift=3pt]chal-proc.west);

\draw[->] ([yshift=-3pt]chal-proc.west) -- node[arrowlabel, below]
    {$\{\msgs'_u\}_{u \in G}$}
    ([yshift=-3pt]adv-proc.east);

\begin{pgfonlayer}{background}
    \node[phase, fit=(adv-proc)(chal-proc)] (loop) {};
\end{pgfonlayer}
\node[font=\bfseries\small, fill=white, inner sep=2pt, anchor=south] at ([yshift=2pt]loop.north) {Repeat (each epoch)};

\node[draw, rounded corners, text width=15.0cm, align=left,
      inner sep=8pt, below=1.0cm of loop] (forgebox)
{\begin{minipage}{14.3cm}
    \textbf{Forgery.}
    The adversary halts and outputs a candidate proof of misbehaviour $\PoM$
    (which may reference any server-issued message recorded in the transcript $T$).
    \smallskip

    \textbf{Adversary wins if} $\VerifyPoM(\serverparams, \PoM) = \accept$.
\end{minipage}};

\begin{pgfonlayer}{background}
    \node[draw, rounded corners, inner sep=10pt,
          fit=(advhdr)(chalhdr)(adv-pp)(chal-pp)(adv-setup)(chal-setup)(loop)] (gamebox) {};
\end{pgfonlayer}

\end{tikzpicture}
\end{adjustbox}
\caption{Non-frameability security game for \sgFORT.
The adversary controls all clients and communicates with an honest-server
challenger; it wins by producing a $\PoM$ that verifies, even though the
server has faithfully followed the protocol throughout the game.}
\label{fig:pom-security-game}
\end{figure*}

\section{Our solution: \fortress}
\label{sec:fortress-formal}

In this section, we give our instantiation of \sgFORT, which we call \fortress ~(Forking-Resistant Transparency log Relying on Social graph Structure).  We first fix notation and recall the dictionary-commitment primitive on which \fortress ~relies.

We then describe the protocol at a glance, summarising in prose how users, the server, and the underlying social graph interact across epochs to detect server misbehaviour.
Finally, we give the formal specification of \fortress\ as a tuple of algorithms ($\setup$, $\ServerInit$, $\UserInit$, $\NeighborShare$, $\ServerProcessMessages$, $\UserProcessServerMessages$, $\VerifyPoM$, and $\OOBshare$), instantiating each component of \sgFORT.


\subsection{Notation}
On each epoch $t$, each user $u$ holds a pair \textbf{$(sk(u,t), pk(u, t))$}  of secret and public keys  that are used by $u$ to sign \emph{at the end of epoch $t$}. We interchangeably use terms verification and public keys. We denote by  $pk(u, t, v)$ a verification key that $v$ believes $u$ has used to sign messages at epoch $t$. If there is no server misbehavior $pk(u, t, v)=pk(u, t)$. If a user $v$ updates/resets (terms used interchangeably) at epoch $t$ then, the server should send a notification about this update(reset) of user $v$ to neighboring user $u$ along with the new public key, we denote this combined information and notification by \textbf{$Update(pk(u, t, v))$}.

\subsubsection{Dictionary commitments}
\label{sec:fortress-dc}

All the public keys that were used by users to form signatures in epoch $t$ are managed and stored by the server using a dictionary commitment. A \emph{dictionary commitment}
$\mathsf{DC} = (\DCSetup, \DCCommit, \DCProve, \DCVerify)$ lets the
server compactly commit to its current key directory (of epoch $t$)
$D : \username_u \mapsto pk(u,t)$ as a short \emph{digest}
$d(t) \gets \DCCommit(\pp, D)$.  Moreover the dictionary commitment scheme allows the server to produce, for any username
$\username_u$, a succinct \emph{ {Proof of Inclusion}}
$\PoI(pk(u, t), d(t)) \gets \DCProve(\pp, \mathit{aux}, \username_u)$ that the
claimed key $pk(u, t)$ is the value corresponding to the label $\username_u$ in the
dictionary committed to by $d(t)$. Anyone can check this proof
with $\DCVerify(\pp, d(t), \username_u, pk(u, t), \PoI)$.

The single security property we require of $\mathsf{DC}$ is
\emph{position-binding}: it is computationally infeasible to
produce a label $\ell$, a digest $d$, and two proofs that open
$d$ at $\ell$ to two distinct values. In our setting this means
that once the server publishes a server-signed digest $d(t)$,
it is bound to a single key per username; any attempt to open
$d(t)$ to two different keys for the same user will fail to produce the accompanying proofs. 
Position-binding can be
instantiated, for example, from a sparse Merkle tree under a
collision-resistant hash. The full syntactic definition,
the correctness condition, and the position-binding game appear
in Appendix~\ref{sec:digest-scheme}.

While an honest server is expected to provide the same digest $d_t$ to all users, a malicious server might not.   
We use $d(t, u)$ to denote the view that user $u$ has about digest $d(t)$.

\subsubsection{Acceptance flags}
Users use public keys provided by server, but they do not fully trust them before verification. In order to denote if some value $p'= pk(u, t, v)$ stored by $v$ as a public key of $u$ at epoch $t$ has been verified and thus trusted/accepted by user $u$ or not, user $u$ stores a flag $Accept(pk(u, t, v), u, t, v)\in\{True, False\}$. For sake of brevity, we denote it as $Accept(pk(u, t, v))$ but this still refers to a particular public key considered by $v$ for $u$ at epoch $t$.

\subsection{Protocol at a glance.} At a very high level, the protocol runs as follows. 
At \emph{initialization}, each user generates a signing key pair and registers the verification key (that we also refer to as public key) with the server; 
We assume that the direct and 2-hop neighbours securely exchange their correct initial keys in the beginning.
In every epoch, the server sends each user $u$ a single signed message containing the current digest $d(t{-}1)$, the post $\post(t{-}1)$, and the keys and $\PoI$s for every $v \in N_0(u) \cup N_1(u) \cup \{u\}$; 
each user $u$ in turn signs its own view ($d(t{-}1,u)$, $\post(t{-}1)$) under $pk(u,t,u)$ and asks the server to forward this message to all of its direct and 2-hop contacts---which 
simultaneously vouches for every key $u$ has verified using $\PoI$. When $u$ receives such a message from a neighbour $v$, 
it checks signature validity under the key it currently holds for $v$, digest equality $d(t{-}2,u) = d(t{-}2,v)$, and $\PoI$ validity for the keys.

If a user $v$ resets their public key, the server is expected to deliver an $\mathsf{Update}$ announcement of the reset and the new public key to every user in $N_0(v), N_1(v)$ in place of the normal signed view message from $v$. 
$u$ marks this key as tentative. If $u$ is a direct neighbor of $v$, $u$ will \emph{accept} this key only once the \emph{majority-confirmation rule} fires: 
$u$ accepts the new $pk(v, \cdot, u)$ only once at least $d/2+1$ already-accepted neighbours $w \in N_0(v)$ (none of them reset too recently) have implicitly vouched for the same key by signing the digest containing it.
Two-hop acceptance follows transitively through an accepted common neighbor $w$: if $u$ is $v$'s two hop neighbor, it will only \emph{accept} $v$'s new key once it receives a signed digest implying that $w$ has already seen it. (This in turn means that either $v$ will accept it or $v$ will be able to produce a PoM.)

Recall that the goal of this protocol is to show that, within a fixed number of epochs, either all users agree on the same post, or at least one user produces a PoM.  To that end it is helpful if we can show the following invariant: if the server ever serves user $u$ the wrong key for one of their neighbors $v$, then either 1) a PoM is produced or 2) $u$ resets before she has a chance to use $v$'s 
key in any way that would affect other users; this helps us guarantee that inconsistencies can't propagate through the graph.  Note that in the protocol as described so far, the server could for example distribute inconsistent keys (and hence digests) to $u$ and $u$'s neighbors.  In the next epoch the server would then be expected to deliver signed views from each user which would reveal this inconsistency.  However, there is an exception: if the server claims that some of the users reset, it is not expected to deliver signed views from those users.  For example, if the server tells $u$ that all its neighbors reset and vice versa, then all users will continue to use the keys represented by the inconsistent digests for at least one more epoch, and our invariant will not hold.   To prevent this each user $u$ will additionally run two \emph{threshold checks} every epoch: if too many direct neighbors' keys are unaccepted, $u$ invokes $\OOBshare$ with enough of them to bring the count below the threshold; if too many of $v$'s neighbours are unaccepted in $u$'s view, $u$ invokes $\OOBshare$ with $v$ to cross-check recent digests. 
Any digest or key mismatch discovered during such invocations yields a publicly verifiable $\PoM$.


A similar issue would occur if we had users output the server's post as soon as it was delivered: it is possible that a server could give inconsistent posts, and then users reset before they detect the inconsistency.  To address this instead have the users wait several epochs before outputing each $post$, as discussed in \ref{sec:correctnessdef}; this ensures that they have exchanged enough information to produce a PoM in case of inconsistency.

Finally, a malicious server might falsely update a user's key and distribute this false key to everyone \emph{including} the user. In this case there would be no contradictory server signed digests to use for a PoM. However the user will be able to clearly detect that the server is misbehaving.  
In this case we have the user invoke $\OOBshare$ with all of their neighbors to distribute their correct keys; these neighbors will in turn include the key corrections along with the digest in their signed view.  The key insight here is that it is only important that the user's direct and 2-hop neighbors have the users' correct key, as that is the only thing we use in our protocol, so we do not have to worry about propagating this corrections further.

Observe that, \emph{$\OOBshare$ invocations} occur in exactly the following situations: at a user's own reset (to a non-recently-reset neighbour, plus any recently-reset neighbours), when a user detects that their key is incorrect in the directory, 
when a per-neighbour message check locally fails (missing message or bad signature), or when one of the threshold checks fires. 
In an honest execution most of these triggers do not fire (except the first one), so no reactive $\OOBshare$ invocations are made.

Whenever one of the checks above fails in a way that does not reduce to an $\OOBshare$ invocation, 
it yields a Proof of Misbehaviour (PoM) that anyone can verify against the server's public parameters. Concretely, a $\PoM$ takes one of three forms:
(i)~a server-signed triple $(d, pk, \PoI)$ at some position $v$ for which $\DCVerify$ rejects---i.e.\ the server signed an invalid proof of inclusion against its own digest;
(ii)~two server-signed epoch-tagged digests $(t, d_1)$ and $(t, d_2)$ with $d_1 \neq d_2$ (two distinct digests claimed by the server for the same epoch~$t$); or
(iii)~two server-signed reset notifications $\mathsf{Update}(t_1, v, \cdot)$ and $\mathsf{Update}(t_2, v, \cdot)$ for the same user $v$ with $t_1 \neq t_2$ and $|t_1 - t_2| < 5$. (Recall that by assumption honest users will not rest more than twice per 5 epochs, so an honest server should not allow this.) Any of these is incontrovertible public evidence that the server misbehaved.

\subsection{The \fortress\ protocol}
\label{sec:fortress-formal-protocol}

\begin{description}

\item[$\setup(1^\lambda) \to \pp$.]
Run $\pp \gets \DCSetup(1^\lambda)$ and output $\pp$.

\item[$\ServerInit(\pp, G, \{\username_u\}_{u \in G}) \to (\serverst, \serverparams)$.] 
\begin{enumerate}
    \item Generate $(\sk, \vk) \gets \Sigkeygen(1^\lambda)$.
    \item Set $\serverst \gets (\pp,\, G,\, \{\username_u\}_{u \in G},\, \sk)$.
    \item Set $\serverparams \gets (\pp, \vk)$.
    \item Output $(\serverst, \serverparams)$.
\end{enumerate}

\item[$\UserInit(\username, \serverparams) \to (\st, \msgs)$.]
\begin{enumerate}
    \item Generate $(\sk, \vk) \gets \Sigkeygen(1^\lambda)$.
    \item Initialise $\st$:
    $pk(u, 0, u) \gets \vk$,\,
    $sk(u, 0) \gets \sk$,\,
    $\mathit{epoch} \gets 0$,\,
    $\st.\mathsf{Pending} \gets \bot$.
    \item Set $\msgs \gets (\username, \vk)$.
    \item Output $(\st, \msgs)$.
\end{enumerate}

\item[$\NeighborShare(\st_1, \st_2) \to (\st_1', \st_2')$.]
For neighbouring users $u_1, u_2$ (direct or two-hop):
\begin{enumerate}
    \item $u_1$ sends $pk(u_1, 0, u_1)$ to $u_2$; $u_2$ sends
    $pk(u_2, 0, u_2)$ to $u_1$.
    \item Each user stores the neighbour's key as accepted:
    $pk(u_2, 0, u_1) \gets pk(u_2, 0, u_2)$,\,
    $\mathit{Accept}(pk(u_2, 0, u_1)) \gets \mathsf{True}$, and
    symmetrically.
    \item Output $(\st_1', \st_2')$.
\end{enumerate}

\item[$\ServerProcessMessages(\serverst, \msgs, \post) \to (\msgs', \serverst')$.]
At the boundary between epochs $t{-}1$ and $t$:
\begin{enumerate}
    \item Advance the epoch counter from $t{-}1$ to $t$.
    \item Forward all user-to-user messages from epoch $t{-}1$,
    server-signed.
    \item For each user $v$ that submitted a key-update message:
    if fewer than $5$ epochs have elapsed since $v$'s previous
    allowed update, ignore; otherwise record $pk(v, t{-}1)$ in the
    directory $D$ (and call this an \emph{allowed} update of $v$ at
    epoch $t{-}1$).
    \item $(\digest_{t-1}, \mathit{aux}) \gets \DCCommit(\pp, D)$.
    \item For each user $u$, build $\msgs'_u$ as the following
    record, signed by the server as a single unit under $\vk$
    (so every named field below is bound atomically to the epoch
    label $t{-}1$ and to every other field in the bundle):
    \begin{itemize}
        \item the epoch-tagged digest $(t{-}1,\, d(t{-}1))$;
        \item for each $v \in (N_0(u) \cup N_1(u)) \cap \mathrm{dom}(D)$:
        $pk(v, t{-}1)$ and $\PoI(pk(v, t{-}1), \digest_{t-1}) \gets \DCProve(\pp, \mathit{aux}, v)$;
        \item for each $v \in N_0(u) \cup N_1(u)$ whose key-update
        message was \emph{allowed} (i.e.\ recorded in $D$) in step~3
        above: $\mathsf{Update}(t{-}1,\, v,\, pk(v, t{-}1))$ tagged
        with the epoch $t{-}1$ in which the key-update message was received;
        \item the post $\post(t)$.
    \end{itemize}
    \item Update $\serverst' \gets (D, \mathit{aux}, \digest_{t-1}, t{-}1)$.
    \item Output $(\msgs', \serverst')$.
\end{enumerate}

\item[$\UserProcessServerMessages.\Init(\st, \msgs) \to \st'$.]
\begin{enumerate}
    \item Parse $\msgs$ into $\st$:
    \begin{enumerate}
        \item the server-signed $d(t{-}1, u)$;
        \item for every $v \in (N_0(u) \cup N_1(u)) \cup \{u\}$, the
        key $pk(v, t{-}1, u)$ with its $\PoI$ and $\post(t{-}2, v)$;
        \item for every $v \in N_0(u) \cup N_1(u)$, the digest
        $d(t{-}2, v)$ signed under $pk(v, t{-}1, u)$ (the latter
        either equal to $pk(v, t{-}2, u)$ or freshly delivered with
        an $\mathsf{Update}$ notification);
        \item $\post(t{-}1)$ signed by the server.
    \end{enumerate}
    \item Output $\st' \gets \st$.
\end{enumerate}

\item[$\UserProcessServerMessages.\Intermediate^{\OOBshare}(\st, \phase) \to \st'$.]
We give the four phases as state-update rules. Each invocation of
$\OOBshare$ is realised here as setting the pending slot followed
by a state-to-state invocation
$(\st_u, \st_v) \gets \OOBshare(\st_u, \st_v)$, with the relevant
$\tau$ and $p$ taken from Section~\ref{sec:formal-oobshare}. The
state of the responder $v$ is the state held by~$v$ in the security
game (Fig.~\ref{fig:security-game}); any inconsistency surfaced by
such an invocation (e.g.\ two server-signed digests for the same
epoch) immediately produces a $\PoM$ via the rules of $\VerifyPoM$
below.

\medskip
\noindent\textbf{$\phase = 1$ (own-key handling).}

\begin{description}
    \item[Case~1 (no reset).]
    \begin{enumerate}
        \item $pk(u, t, u) \gets pk(u, t{-}1, u)$.
        \item Verify the server's signatures on every received
        message; if any verification fails, output the offending
        signature as part of a $\PoM$ together with the conflicting
        server-signed digest.
        \item \emph{Identity-fraud precaution.} If $u$ reset at
        epoch $t{-}1$ and the server included some
        $pk'(u, t{-}1, u) \neq pk(u, t{-}1, u)$ in $d(t{-}1)$, then
        form the correction factor
        $\rho_u \gets (pk(u, t{-}1, u),\, d(t{-}1, u))$ (correct key
        bound to the wrong server-signed digest) and, for every
        $v \in N_0(u)$, set
        $\st_u.\mathsf{Pending} \gets (v, \mathsf{IdentityFraudSign}, \rho_u)$
        and invoke $\OOBshare(\st_u, \st_v)$.
    \end{enumerate}

    \item[Case~2 (reset).] $\st = \bot$.
    \begin{enumerate}
        \item Generate $(\sk, \vk) \gets \Sigkeygen(1^\lambda)$ and
        set $pk(u, t, u) \gets \vk$.
        \item Pick any $v_\star \in N_0(u)$ that did not reset in
        the last $5$ epochs, set
        $\st_u.\mathsf{Pending} \gets (v_\star, \mathsf{NewDevice}, \bot)$
        and invoke $\OOBshare(\st_u, \st_{v_\star})$.
        \item For every $w \in N_0(u)$ that is reported to have
        reset at epoch $t{-}1$ or $t{-}2$, set
        $\st_u.\mathsf{Pending} \gets (w, \mathsf{RecentReset}, w)$
        and invoke $\OOBshare(\st_u, \st_w)$.
    \end{enumerate}
\end{description}

\medskip
\noindent\textbf{$\phase = 2$ (after-fraud delivery and missing
direct-neighbour messages).}

\begin{enumerate}
    \item \emph{Identity-fraud delivery.} Let
    $\Sigma_u = \{\sigma_v\}_{v \in N_0(u)}$ be the neighbour
    signatures on $\rho_u$ collected during Phase~1. For every
    $w \in N_0(u)$, set
    $\st_u.\mathsf{Pending} \gets (w, \mathsf{IdentityFraudDeliver}, (\rho_u, \Sigma_u))$
    and invoke $\OOBshare(\st_u, \st_w)$.
    \item \emph{Missing direct-neighbour digest.} For every
    $v \in N_0(u)$ from which neither an $\mathsf{Update}$
    notification nor a digest signed under $pk(v, t{-}2, u)$ was
    received, set
    $\st_u.\mathsf{Pending} \gets (v, \mathsf{DigestExchange}, t{-}1)$
    and invoke $\OOBshare(\st_u, \st_v)$.
\end{enumerate}

\medskip
\noindent\textbf{$\phase = 3$ (missing two-hop-neighbour messages).}

\begin{enumerate}
    \item For every $w \in N_1(u)$ such that there exists
    $v \in N_0(u) \cap N_0(w)$, if neither an $\mathsf{Update}$
    notification nor a digest signed under $pk(w, t{-}2, u)$ was
    received, set
    $\st_u.\mathsf{Pending} \gets (v, \mathsf{DigestExchange}, t{-}1)$
    and invoke $\OOBshare(\st_u, \st_v)$ to fetch $w$'s digest as
    recorded by $v$.
\end{enumerate}

\medskip
\noindent\textbf{$\phase = 4$ (per-neighbour processing and threshold checks).}

For each $v \in N_0(u) \cup N_1(u)$:
\begin{description}
    \item[Case~1 (digest received).] Let
    $\sigma$ be the signature on $d(t{-}2, v)$.
    \begin{enumerate}
        \item If $\sigma$ does not verify under $pk(v, t{-}2, u)$,
        set
        $\st_u.\mathsf{Pending} \gets (v, \mathsf{DigestExchange}, t{-}2)$
        and invoke $\OOBshare(\st_u, \st_v)$.
        \item If $d(t{-}2, u) \neq d(t{-}2, v)$, output a $\PoM$
        consisting of both server-signed digests.
        \item Verify $\PoI(pk(v, t{-}1, u), d(t{-}1, u))$ via
        $\DCVerify$; if it rejects, output a $\PoM$ consisting of
        the server-signed triple
        $(d(t{-}1, u), pk(v, t{-}1, u), \PoI)$.
        \item Set $pk(v, t{-}1, u) \gets pk(v, t{-}2, u)$.
        \item \emph{Case~1.1 ($v \in N_0(u)$).} If no
        $\mathsf{Update}(v, t{-}2)$ was received, retain
        $\mathit{Accept}(pk(v, t{-}1, u)) \gets \mathit{Accept}(pk(v, t{-}2, u))$.
        Otherwise, count the users $w \in N_0(v)$ that did not
        update in the last $3$ epochs and have
        $\mathit{Accept}(pk(w, t{-}1, u)) = \mathsf{True}$. If at
        least $d/2 + 1$ such $w$ exist, set
        $\mathit{Accept}(pk(v, t{-}1, u)) \gets \mathsf{True}$;
        otherwise set
        $\st_u.\mathsf{Pending} \gets (v, \mathsf{KeyAdoption}, v)$
        and invoke $\OOBshare(\st_u, \st_v)$.
        \item \emph{Case~1.2 ($v \in N_1(u)$ via
        $w \in N_0(u) \cap N_0(v)$).} If
        $\mathit{Accept}(pk(v, t{-}2, u)) = \mathsf{True}$, retain it.
        Otherwise, if $v$'s last reset was at epoch~$t{-}3$ or
        earlier, $\mathit{Accept}(pk(w, t{-}1, u)) = \mathsf{True}$,
        and $d(t{-}1, u) = d(t{-}1, w)$, set
        $\mathit{Accept}(pk(v, t{-}1, u)) \gets \mathsf{True}$.
    \end{enumerate}

    \item[Case~2 (update received).]
    \begin{enumerate}
        \item Set $pk(v, t{-}1, u) \gets$ the key from
        $\mathsf{Update}(pk(v, t{-}1, u))$.
        \item Set
        $\mathit{Accept}(pk(v, t{-}1, u)) \gets \mathsf{False}$.
        \item Verify $\PoI(pk(v, t{-}1, u), d(t{-}1, u))$ via
        $\DCVerify$; on rejection, output the corresponding $\PoM$.
        \item \emph{Case~2.1 ($v \in N_0(u)$).} If $u$ also reset at
        $t{-}1$, or if $v$ has another update less than $5$ epochs
        ago, or if $t < 5$, set
        $\st_u.\mathsf{Pending} \gets (v, \mathsf{KeyAdoption}, v)$
        and invoke $\OOBshare(\st_u, \st_v)$.
        \item \emph{Case~2.2 ($v \in N_1(u)$ via
        $w \in N_0(u) \cap N_0(v)$).} Under the same conditions as
        Case~2.1, set
        $\st_u.\mathsf{Pending} \gets (w, \mathsf{KeyAdoption}, v)$
        and invoke $\OOBshare(\st_u, \st_w)$.
    \end{enumerate}
\end{description}

\smallskip
\noindent\emph{Threshold check on direct neighbours.} Let
$k = |\{v \in N_0(u) : \mathit{Accept}(pk(v, t{-}1, u)) = \mathsf{False}\}|$.
If $k \geq d/2 + 1$, pick any $k - d/2 - 1$ such neighbours
$v_1, \ldots, v_{k - d/2 - 1}$ and, for each, set
$\st_u.\mathsf{Pending} \gets (v_i, \mathsf{KeyAdoption}, v_i)$ and
invoke $\OOBshare(\st_u, \st_{v_i})$.

\smallskip
\noindent\emph{Threshold check on two-hop neighbours via a common
direct neighbour~$w$.} Symmetrically: if more than $d/2$ neighbours
of $w \in N_0(u)$ are unaccepted in $u$'s view, set
$\st_u.\mathsf{Pending} \gets (w, \mathsf{DigestExchange}, t{-}1)$
and invoke $\OOBshare(\st_u, \st_w)$.

\item[$\UserProcessServerMessages.\End(\st) \to (\st', \msgs', \post)$.]
\begin{enumerate}
    \item Set $\post \gets \post(t{-}1)$.
    \item Build $\msgs'$ to contain: the user's own view
    $(d(t{-}1, u), \post(t{-}1), \{pk(v, t{-}1, u)\}_v)$ signed under
    $pk(u, t, u)$, with a server-routing request to deliver to all
    $v \in N_0(u) \cup N_1(u)$; if $u$ reset this epoch, additionally
    a $\mathsf{KeyUpdate}(pk(u, t, u))$ with a request to add the
    new key to the directory.
    \item Advance $\mathit{epoch} \gets t$, and set
    $\st' \gets \st$.
    \item Output $(\st', \msgs', \post)$.
\end{enumerate}

\item[$\VerifyPoM(\serverparams, \PoM) \to \mathsf{Accept}/\mathsf{Reject}$.]
A valid $\PoM$ is one of:
\begin{enumerate}
    \item a server-signed triple $(d, pk, \PoI)$ at some position $v$
    with $\DCVerify(\pp, d, v, pk, \PoI) = \mathsf{Reject}$;
    \item two server-signed epoch-tagged digests $(t, d_1)$ and
    $(t, d_2)$ with $d_1 \neq d_2$;
    \item two server-signed $\mathsf{Update}$ notifications
    $\mathsf{Update}(t_1, v, \cdot)$ and $\mathsf{Update}(t_2, v, \cdot)$
    for the same user $v$ with $t_1 \neq t_2$ and $|t_1 - t_2| < 5$.
\end{enumerate}
Output $\mathsf{Accept}$ iff one of these patterns matches and every
server signature in the $\PoM$ verifies under $\vk$.

\item[$\OOBshare(\st_u, \st_v) \to (\st_u', \st_v')$.]
\label{sec:formal-oobshare}
The request that $u$ wishes to issue in $\OOBshare$ is recorded in a
dedicated slot
$\st_u.\mathsf{Pending} = (v,\, \tau,\, p)$ of $u$'s state, where
$v$ is the intended responder, $\tau$ is one of the request types
listed below, and $p$ is the associated parameter. The first step
of $\OOBshare$ is to parse this slot; the response, when produced,
is written into the relevant slot of $\st_u$ and any side-effect on
the responder is written into $\st_v$.
$\OOBshare$ proceeds as follows.
\begin{enumerate}
    \item Parse the pending request: read
    $(v, \tau, p) \gets \st_u.\mathsf{Pending}$. If
    $\st_u.\mathsf{Pending} = \bot$ or $v \notin N_0(u)$, return
    $(\st_u, \st_v)$ unchanged.
    \item Dispatch on $\tau$, as described below.
    \item Clear $\st_u.\mathsf{Pending} \gets \bot$ and output the
    updated states.
\end{enumerate}

The cases for $\tau$ are:
\begin{description}
    \item[$\tau = \mathsf{NewDevice}$, with $p = \bot$.] Issued by a user $u$ that
    has just reset. The responder $v$ reads
    from $\st_v$ the public keys
    $\{ pk(w, t, v),\, \PoI(pk(w, t, v), d(t, v))\}_{w \in N_0(u)}$
    and the last three digests and posts
    $\{ d(t{-}i, v),\, \post(t{-}i, v) \}_{i \in \{0,1,2\}}$, and
    writes them into the corresponding slots of $\st_u$.

    \item[$\tau = \mathsf{RecentReset}$, with $p = w$.] Issued by
    $u$ when neighbour $w \in N_0(u)$ is reported to have reset
    within the last two epochs. The responder $v$ writes into
    $\st_u$ its current $pk(w, t, v)$,
    $\PoI(pk(w, t, v), d(t, v))$, and the digests and posts of the
    last three epochs.

    \item[$\tau = \mathsf{DigestExchange}$, with $p = t'$.] Issued
    by $u$ to verify consistency of a digest at epoch $t'$. The
    responder $v$ writes its $d(t', v)$, together with the server
    signature on it, into $\st_u$.

    \item[$\tau = \mathsf{KeyAdoption}$, with $p = w$.] Issued by
    $u$ to confirm a pending key update of $w$ in $v$'s view. The
    responder $v$ writes $pk(w, t{-}1, v)$, $pk(w, t{-}2, v)$, the
    corresponding $\PoI$s, and $d(t{-}1, v)$, $d(t{-}2, v)$ into
    $\st_u$.

    \item[$\tau = \mathsf{IdentityFraudSign}$, with
    $p = \rho_u = (pk(u, t{-}1, u),\, d(t{-}1, u))$.] Issued by $u$
    after detecting that the server inserted a wrong
    $pk'(u, t{-}1, u)$ into $d(t{-}1)$; here $\rho_u$ is a
    \emph{correction factor} pairing $u$'s real public key with the
    server-signed digest that wrongly encodes a different key for
    $u$. The responder $v$ signs $\rho_u$ under $pk(v, t{-}1, v)$,
    writes the signature into $\st_u$, and records in $\st_v$ that
    it has co-signed $u$'s correction factor at epoch $t{-}1$.

    \item[$\tau = \mathsf{IdentityFraudDeliver}$, with
    $p = (\rho_u,\, \Sigma_u)$.] Issued by $u$ to relay the correction
    factor $\rho_u$ together with the collected neighbour signatures
    $\Sigma_u = \{\sigma_w\}_{w \in N_0(u)}$. The responder $v$
    verifies each $\sigma_w$ under $pk(w, t{-}1, v)$, and on success
    writes $(\rho_u, \Sigma_u)$ into $\st_v$ in place of the wrong
    $pk'(u, t{-}1, v)$ encoded by $d(t{-}1)$.
\end{description}
\end{description}

\section{Security Proof}\label{sec:all_proofs}

This section contains first set of cryptographic and then algorithmic proofs leading to the main security theorem of our protocol. 
\subsection{Non-Frameability}
First we give the proof of non-frameabilty.

\begin{theorem}[Non-frameability]\label{thm:pom-soundness}
Let $\Sigma$ be the signature scheme used by the server (with
verification key $\vk \in \serverparams$). For every PPT adversary
$\mathcal{A}$ playing the $\PoM$-soundness game of
Fig.~\ref{fig:pom-security-game} against \fortress,
\[
\Adv^{\mathsf{PoM}}_{\fortress}(\mathcal{A})
\;\leq\;
\Adv^{\mathsf{EUF\text{-}CMA}}_{\Sigma}(\mathcal{B}_{\Sigma}),
\]
for a PPT reduction $\mathcal{B}_{\Sigma}$ whose running time is
$t_{\mathcal{A}} + \mathsf{poly}(\lambda)$ and which makes
$O(|G| \cdot q_{\mathrm{epoch}})$ signing queries, where
$q_{\mathrm{epoch}}$ is the number of epochs played out by
$\mathcal{A}$.
\end{theorem}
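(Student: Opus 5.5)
The plan is a direct reduction $\mathcal{B}_{\Sigma}$ that receives a verification key $\vk^*$ from the EUF-CMA challenger. It embeds $\vk^*$ as the server key in $\serverparams = (\pp, \vk^*)$, running $\DCSetup$ itself. It then simulates the honest server of Fig.~\ref{fig:pom-security-game} exactly. In each epoch it runs steps 1--4 and 6 of $\ServerProcessMessages$ itself: it maintains $D$, enforces the $5$-epoch update rule, and computes $\digest_{t-1}$ and all $\PoI$s with $\DCProve$. Every server signature is obtained through a query to the signing oracle, namely one query per outgoing bundle $\msgs'_u$ plus the forwarded user-to-user messages. This gives $O(|G|)$ queries per epoch and $O(|G|\cdot q_{\mathrm{epoch}})$ in total. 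The simulation is perfect, so $\mathcal{A}$'s view is identical to the real game. $\mathcal{B}_\Sigma$ also records the transcript $T$ together with the set $Q$ of queried messages.

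When $\mathcal{A}$ outputs $\PoM^*$ with $\VerifyPoM(\serverparams,\PoM^*)=\accept$, I will show that at least one server signature contained in $\PoM^*$ is on a message outside $Q$. That signature is output as the forgery. The core is a case analysis over the three accepted patterns of $\VerifyPoM$, arguing in each case that an honest server never signs such a pattern.

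\emph{Case (i).} The pattern is a signed bundle containing $(d,pk,\PoI)$ at position $v$ on which $\DCVerify$ rejects. Every queried bundle has $\PoI \gets \DCProve(\pp,\mathit{aux},v)$ for the committed $D$ with $pk=D(v)$. By perfect correctness of $\mathsf{DC}$ (Appendix~\ref{sec:digest-scheme}) this proof verifies, so the signed bundle is not in $Q$. If correctness is only statistical, an additional negligible term would appear; I would rely on perfect correctness, e.g.\ for a sparse Merkle tree, to match the stated bound.

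\emph{Case (ii).} The pattern is two signed epoch-tagged digests $(t,d_1)\neq(t,d_2)$. The honest server computes exactly one $\digest_t$ per epoch and tags every bundle for that epoch with it, so two distinct signed values cannot both lie in $Q$.

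\emph{Case (iii).} The pattern is two signed $\mathsf{Update}(t_1,v,\cdot)$ and $\mathsf{Update}(t_2,v,\cdot)$ with $0<|t_1-t_2|<5$. Step 3 records an update only if at least $5$ epochs have elapsed since $v$'s previous allowed update. Honest bundles include only allowed updates, tagged with their true epoch. So allowed-update epochs for $v$ are $\geq 5$ apart, and at most one of the two messages is in $Q$.

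The main obstacle is making ``the signed message'' precise. The server signs whole bundles atomically, whereas a $\PoM$ cites only a field (a digest, a triple, an $\mathsf{Update}$). I would fix the convention that a $\PoM$ component consists of the entire signed bundle together with a pointer to the field. A forgery is then a valid signature on a bundle whose cited field differs from every queried bundle's corresponding field, and hence on a new message. I must also check that user-to-user forwarded messages signed by the server cannot be confused with bundles. For this I would assume domain separation, i.e.\ a type tag in the signed encoding. With this in place, in every case the fresh signature is a valid EUF-CMA forgery, which yields $\Adv^{\mathsf{PoM}}_{\fortress}(\mathcal{A})\le\Adv^{\mathsf{EUF\text{-}CMA}}_{\Sigma}(\mathcal{B}_\Sigma)$, with $\mathcal{B}_\Sigma$'s running time $t_{\mathcal{A}}+\mathsf{poly}(\lambda)$ coming from the simulation cost.
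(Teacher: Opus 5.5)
Your proposal is correct and follows the paper's proof. The paper likewise embeds $\vk^*$ as the server key, obtains every bundle signature from the signing oracle, and rules out forms (F1)--(F3) using perfect $\mathsf{DC}$ correctness, the single $\DCCommit$ per epoch, and the $5$-epoch rule of step~3; it only phrases this as one game hop (abort if $\PoM^{*}$ contains a signature on some $m^{*}\notin T$) followed by showing $W_1=0$, while leaving implicit the domain separation for the server-signed forwarded user-to-user messages of step~2 that you state explicitly (it simply treats every message in $T$ as a step-5 bundle).
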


\begin{proof}
We proceed by a single game hop. Let
$W_i = \Pr[\mathcal{A}\ \text{wins in Game}_i]$.

\paragraph{Notation.}
For each epoch $t \geq 1$, let $D^{(t)}$ denote the directory the
honest server has assembled by the end of step~3 of
$\ServerProcessMessages$ in epoch~$t$, and let
$(d(t{-}1), \mathit{aux}^{(t)}) \gets \DCCommit(\pp, D^{(t)})$ be the
commitment produced in step~4. For $v \in \mathrm{dom}(D^{(t)})$,
write $\hat{pk}(v, t{-}1) := D^{(t)}(v)$ and
$\hat{\PoI}(v, t{-}1) := \DCProve(\pp, \mathit{aux}^{(t)}, v)$. By
correctness of $\mathsf{DC}$ (Section~\ref{sec:poibinding}),
\begin{equation}\label{eq:dc-correctness}
\forall v \in \mathrm{dom}(D^{(t)}):\quad
\DCVerify\!\bigl(\pp, d(t{-}1), v, \hat{pk}(v, t{-}1), \hat{\PoI}(v, t{-}1)\bigr) = \accept.
\end{equation}

\paragraph{Game$_0$.}
The original game of Fig.~\ref{fig:pom-security-game}.
$\mathcal{A}$ controls all clients, the challenger runs an honest
server, and records every outgoing bundle in the transcript
$T = \{m_1, m_2, \dots\}$, where each $m_i$ is a complete
server-signed bundle of the shape prescribed by
$\ServerProcessMessages$ step~5 in \fortress\ (signed atomically
under $\vk$, so a server signature in $\PoM^{*}$ always covers some
bundle $m$ from which any extracted digest, opening entry, or
$\mathsf{Update}$ notification is taken). $\mathcal{A}$ wins by
outputting $\PoM^{*}$ such that
$\VerifyPoM(\serverparams, \PoM^{*}) = \accept$. Recall that
$\PoM^{*}$ must take one of the forms (F1), (F2), (F3) of
$\VerifyPoM$ in \fortress, with every server signature in $\PoM^{*}$
verifying under~$\vk$. By construction,
$W_0 = \Adv^{\mathsf{PoM}}_{\fortress}(\mathcal{A})$.

\paragraph{Game$_1$ (signature-forgery abort).}
Identical to Game$_0$, except the challenger now aborts (declaring
$\mathcal{A}$ to lose) if $\PoM^{*}$ contains a verifying signature
on a message $m^{*}$ that was never signed by the simulated honest
server --- equivalently, $m^{*} \notin T$.

\smallskip
\noindent\emph{Reduction to signature unforgeability.}
We construct $\mathcal{B}_{\Sigma}$ as follows. $\mathcal{B}_{\Sigma}$
receives a public key $\vk^{*}$ from its EUF-CMA challenger together
with access to a signing oracle $\mathcal{O}_{\mathsf{Sign}}(\cdot)$
under the secret key the challenger holds for $\vk^{*}$.
$\mathcal{B}_{\Sigma}$ runs $\setup$ and $\ServerInit$ but
substitutes $\vk^{*}$ for the server verification key that
$\ServerInit$ would otherwise have generated internally;
$\mathcal{B}_{\Sigma}$ does \emph{not} know the corresponding $\sk$.
It simulates $\ServerProcessMessages$ for $\mathcal{A}$ as
prescribed; each time the simulation needs the server to sign a
bundle $m$, $\mathcal{B}_{\Sigma}$ queries
$\mathcal{O}_{\mathsf{Sign}}(m)$, attaches the resulting signature,
and records $m$ in $T$. When $\mathcal{A}$ halts and outputs
$\PoM^{*}$, $\mathcal{B}_{\Sigma}$ enumerates every verifying
$(m^{*}, \sigma^{*})$ pair in $\PoM^{*}$. If at least one such pair
satisfies $m^{*} \notin T$, $\mathcal{B}_{\Sigma}$ outputs the first
such pair as its forgery; otherwise it aborts (the Game$_1$ abort
condition does not trigger, so this contributes nothing to
$|W_0 - W_1|$). The simulation is perfectly distributed as Game$_0$,
and $\mathcal{B}_{\Sigma}$ wins EUF-CMA exactly on those executions
where Game$_1$'s abort triggers. Hence
\begin{equation}\label{eq:bound-game1}
|W_0 - W_1| \;\leq\; \Adv^{\mathsf{EUF\text{-}CMA}}_{\Sigma}(\mathcal{B}_{\Sigma}).
\end{equation}

\paragraph{Analysis of Game$_1$.}
In Game$_1$, every server signature in $\PoM^{*}$ covers some bundle
$m \in T$ that was actually emitted by the honest server. Any object
$\VerifyPoM$ extracts from a server signature in $\PoM^{*}$ --- a
triple $(d, pk, \PoI)$, an epoch-tagged digest $(t, d)$, or an
$\mathsf{Update}(t, v, \cdot)$ notification --- is therefore a
sub-component of some $m \in T$. Since (F1), (F2), (F3) exhaust the
accepting patterns of $\VerifyPoM$, it suffices to rule each out in
Game$_1$.

\smallskip
\noindent\emph{(F1) Invalid proof of inclusion.}
Suppose for contradiction $\PoM^{*}$ exhibits (F1): a server-signed
triple $(d, pk, \PoI)$ at position $v$ with
$\DCVerify(\pp, d, v, pk, \PoI) = \reject$. The triple is extracted
from some bundle $m \in T$: $d$ is $m$'s digest field and
$(v, pk, \PoI)$ is one of $m$'s opening entries. By construction of
$\ServerProcessMessages$ (step~5 of \fortress), the digest field of
$m$ is $(t{-}1, d(t{-}1))$ for some epoch~$t{-}1$, and every opening
entry at position~$v$ in $m$ is of the form
$(v, \hat{pk}(v, t{-}1), \hat{\PoI}(v, t{-}1))$ with
$v \in \mathrm{dom}(D^{(t)})$. Hence $d = d(t{-}1)$,
$pk = \hat{pk}(v, t{-}1)$, $\PoI = \hat{\PoI}(v, t{-}1)$, and
$v \in \mathrm{dom}(D^{(t)})$. Equation~\eqref{eq:dc-correctness}
then gives $\DCVerify(\pp, d, v, pk, \PoI) = \accept$, (bu Correctness of $\mathsf{DC}$) contradicting
the (F1) condition. Form (F1) is impossible in Game$_1$.

\smallskip
\noindent\emph{(F2) Inconsistent digests.}
Suppose $\PoM^{*}$ contains two server-signed epoch-tagged digests
$(t, d_1)$ and $(t, d_2)$ with $d_1 \neq d_2$. Each is the digest
field of some bundle in $T$. By construction of
$\ServerProcessMessages$ (step~4 invokes $\DCCommit$ exactly once
per epoch, and step~5 inserts the resulting $(t, d(t))$ into every
bundle emitted in epoch $t{+}1$), every bundle in $T$ whose digest
field carries the epoch tag $t$ has that field equal to $(t, d(t))$.
Hence $d_1 = d_2 = d(t)$, contradicting (F2). Form (F2) is
impossible in Game$_1$.

\smallskip
\noindent\emph{(F3) Excessive resets.}
Suppose $\PoM^{*}$ contains two server-signed reset notifications
$\mathsf{Update}(t_1, v, \cdot)$ and $\mathsf{Update}(t_2, v, \cdot)$
for the same user~$v$ with $t_1 \neq t_2$ and $|t_1 - t_2| < 5$.
Each is a sub-component of some bundle in $T$. By construction of
$\ServerProcessMessages$, an $\mathsf{Update}(t_i, v, \cdot)$
notification is inserted into a bundle only when $v$'s key update
was \emph{accepted} in step~3 of epoch $t_i$, and step~3 accepts at
most one update per user in any window of $5$ consecutive epochs.
Hence the epochs of any two distinct server-emitted $\mathsf{Update}$
notifications for the same $v$ differ by at least $5$, contradicting
$|t_1 - t_2| < 5$ and $t_1 \neq t_2$. Form (F3) is impossible in
Game$_1$.

\smallskip
Since (F1), (F2), (F3) exhaust the accepting patterns of
$\VerifyPoM$ and all three are impossible in Game$_1$,
\begin{equation}\label{eq:bound-game1-final}
W_1 = 0 \quad \text{(exactly, no statistical loss).}
\end{equation}

Combining \eqref{eq:bound-game1} and \eqref{eq:bound-game1-final},
\[
\Adv^{\mathsf{PoM}}_{\fortress}(\mathcal{A}) = W_0
\;\leq\; |W_0 - W_1| + W_1
\;\leq\; \Adv^{\mathsf{EUF\text{-}CMA}}_{\Sigma}(\mathcal{B}_{\Sigma}).
\qedhere
\]

\end{proof}

\subsection{Consistency}
Next we present intuition for our proof of Consistency, as it contains the bulk of our technical work.

\begin{theorem}[Security via Game Hops]\label{thm:main-security}
    Let $\mathcal{A}$ be any efficient adversary against the security game of
    Figure~\ref{fig:security-game}, instantiated with the \fortress{} protocol
    of Section~\ref{sec:fortress-formal}, and let $\thresh = 1$ and $\mingap = 5$. Let
    $\epsilon(\lambda) = \Pr[\mathcal{A} \text{ wins}]$.
    Then there exist efficient algorithms $\mathcal{B}_{SIG}$ and $\mathcal{B}_{PB}$ such that
    \begin{multline*}
        \epsilon(\lambda)
        \;\leq\;
        n \cdot \Adv^{\mathrm{euf\text{-}cma}}_{\Sigma,\,\mathcal{B}_{SIG}}(\lambda) 
        \;+\;
        \Adv^{\mathrm{pos\text{-}bind}}_{\mathsf{DS},\,\mathcal{B}_{PB}}(\lambda),
    \end{multline*}
    where $n = |G|$ is the number of users.
    In particular, if $\Sigma$ is EUF-CMA secure and $\mathsf{DS}$ is position-binding then
    $\epsilon(\lambda) \leq \negl(\lambda)$.
\end{theorem}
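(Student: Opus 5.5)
The plan is a short sequence of game hops that strips away the cryptographic events, followed by a purely combinatorial argument about the idealised game.

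\textbf{Game$_0$} is the consistency game of Figure~\ref{fig:security-game} with \fortress.

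\textbf{Game$_1$} aborts if, at any point, an honest user accepts a valid signature under a key $pk(v,t,v)$ that $v$ generated honestly, on a message $v$ never signed, while $v$ is uncompromised at that time.
\begin{itemize}
\item The reduction $\mathcal{B}_{SIG}$ guesses the index of the user whose key is forged, which costs the factor $n$.
\item It embeds the EUF-CMA challenge key as that user's current key and answers that user's signing needs through the signing oracle.
\item Snapshot compromise of that user before the forgery is handled by noting that the abort event only concerns uncompromised keys. Keys are fresh after each reset, so if the guessed user is compromised the guess is simply wrong; this is absorbed by guessing a (user, key-generation) pair. I would state the loss as $n$ and argue it via a hybrid over users, where each user's key epochs are handled by the oracle.
\end{itemize}

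\textbf{Game$_2$} aborts if the adversary ever delivers to honest users two valid $\PoI$s for the same label against the same digest with distinct keys.
\begin{itemize}
\item This reduces directly to position-binding, giving $\mathcal{B}_{PB}$.
\item After this hop, a server-signed digest $d(t)$ determines a unique key for each user in every honest view.
\end{itemize}

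In Game$_2$ we must show that the adversary's win probability is zero. Equivalently, whenever two uncompromised users $u,v$ output $\post_u^{(e)}\neq\post_v^{(e)}$, some honest user has produced a verifying $\PoM$. The core is an invariant proved by induction on epochs, the "bounded propagation" lemma previewed in the protocol overview.

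\textbf{Invariant.} Suppose an honest user $u$ holds an accepted key $pk(v,t,u)\neq pk(v,t)$ for some neighbour $v$ that is not the compromised one. Then within a constant number of epochs either a $\PoM$ of form (i), (ii) or (iii) is emitted, or $u$ resets before that key influences any signed digest accepted by others.

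I would establish this by case analysis over how a wrong key can become accepted:
\begin{itemize}
\item \textbf{Direct neighbour.} Acceptance requires $d/2+1$ accepted, not-recently-reset neighbours of $v$ signing the same digest. With $\thresh=1$, at most one of these is compromised, so an honest signer has the same digest and hence, by binding, the same key. That key must then be the one $v$ registered, or else $v$'s identity-fraud check fires.
\item \textbf{Two-hop neighbour.} Acceptance goes through an accepted common neighbour $w$ with equal digest, which reduces to the direct case for $w$.
\item \textbf{Reset paths.} $\mathsf{NewDevice}$/$\mathsf{RecentReset}$ calls import state from a neighbour that is not recently reset, so the imported state is correct by the inductive hypothesis.
\item \textbf{Threshold checks.} These rule out the "everyone claims their neighbours reset" evasion. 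At most $d/2$ unaccepted neighbours remain, so the majority of signed digests must be delivered, and a forked digest then yields a form (ii) $\PoM$. Form (iii) catches fake reset notices that arrive too often.
\end{itemize}

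Given the invariant, I would show that neighbouring honest users' digests agree, or a $\PoM$ is produced, by epoch $t+2$, since digest $d(t-2)$ is compared in epoch $t$. I would then use connectivity of $G$ together with the delay before outputting posts: posts are bound to server-signed bundles with the epoch tag, so differing posts imply differing signed bundles per epoch that neighbours compare. This gives global agreement on $\post$ at every output epoch.

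The main obstacle is the invariant: making the induction well-founded amid interleaved resets, snapshot compromises that later "heal" after a reset, and the ordering of OOB calls across phases. I would first try a potential-style argument tracking, for each honest user, the last epoch in which all of its accepted keys were correct. I would show this lag never exceeds $5$ without a $\PoM$, and lean on $\mingap=5$ to bound the overlapping reset windows.
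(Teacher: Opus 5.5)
Your three-game structure (abort on a forgery under an honest user's key, abort on a position-binding break, then a zero-probability combinatorial argument) is the paper's. Restricting the signature abort to keys that are uncompromised at forgery time is the right formulation; the paper's SIG-Abort and its reduction omit this, although its algorithmic part relies on the server signing with snapshot-leaked keys. The loss factor, however, does not come out as $n$. Your $\mathcal{B}_{SIG}$ can embed its single challenge key into only one keypair, and every reset gives a user a fresh, independent keypair, so the guess must range over (user, keypair) pairs. This is exactly the paper's reduction in Lemma~\ref{lem:hop1}, which yields $n\cdot E$; that is why the appendix restatement of the theorem carries the factor $E$. A ``hybrid over users where each user's key epochs are handled by the oracle'' is not available from one EUF-CMA instance. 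Getting $n$ would need a multi-key unforgeability assumption. Negligibility survives because $E$ is polynomial.

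The more serious gap is in the combinatorial core. Your invariant concerns \emph{accepted} wrong keys, and your case split lists ways a wrong key becomes accepted. But in \fortress\ a stored key is used to verify a neighbour's signed digest (Phase~4, Case~1, step~1) regardless of its $\mathit{Accept}$ flag. The forks that matter therefore arise in two other ways, neither covered by your case list:
\begin{enumerate}
\item[(a)] A fake $\mathsf{Update}$ for Alice reaches Bob. The new key is unaccepted, yet at $t{+}2$ Bob checks the server-made signature on ``Alice's'' digest under it.
\item[(b)] A hidden reset. Alice resets, the $\mathsf{Update}$ is suppressed, and Bob keeps her old key, which was accepted while correct and never passes through acceptance again.
\end{enumerate}
The paper handles both by taking the first epoch at which any \emph{stored} key differs from its owner's current key (Definition~\ref{def:inconsistency}) and splitting on how it arose (Observation~\ref{observation:when_inconsistency}).
\begin{enumerate}
\item[(a)] The PoM comes from Bob's count over $N_0(\text{Alice})$ (Phase~4, Case~1.1, otherwise $\mathsf{KeyAdoption}$ with Alice), or from Alice's threshold check on the neighbours whose keys she now holds unaccepted (Theorem~\ref{theorem:3}).
\item[(b)] Only a server that snapshot-compromised Alice before her reset can still sign under her old key. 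Alice's expectation of Bob's digest at $t{+}2$ then closes the case (Theorem~\ref{theorem:2}), and Lemma~\ref{lemma:snapshot_exception} covers the compromise-then-reset exception.
\end{enumerate}

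This minimal-counterexample framing is also what your reset-path bullet needs. State imported via $\mathsf{NewDevice}$ is correct under ``no inconsistency before $t{+}1$'', not under an invariant that only promises eventual detection. The reset subcases are also why the paper's detection window is $t{+}4$ rather than your $t{+}2$.

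Finally, ``differing posts imply differing signed bundles that neighbours compare'' yields no PoM. Bundles legitimately differ across users, and Phase~4 compares only digests. None of the three $\VerifyPoM$ patterns covers two same-epoch bundles with equal digests but different posts. You need the post committed inside the digest (e.g.\ under a reserved label, so that position binding forces equal posts under equal digests), or a new PoM pattern plus a post-equality check. The paper's proof of its final theorem makes the same leap, so this must be repaired rather than borrowed.
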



We prove this via a series of game hops.  First we switch to a game which aborts if there is ever a signature forgery under an honest user's signing key.  This is indistinguishable based on the  EUF-CMA unforgeability of the signature scheme, with a tightness loss of the number of users $n$ because we must guess which user verification key the adversary will forge against.  Then we switch to a game which aborts if there are ever two contradictory proofs of inclusion with respect to the same digest; this is indistinguishable from the previous game based on position binding of the dictionary commitment.  

Finally, we give an  information theoretic proof that in the resulting game the adversary has 0 probability of winning.  This is the bulk of our proof; we give the intuition here, and the details appear in Appendix~\ref{sec:full_algorithmic_proofs}.



\subsection{Consistency: Per-epoch storage lemmas}
First, we state the storage invariant the rest of the
proofs need: at the end of every epoch, every user $u$ holds
a server-signed copy of the digest $d(t,u)$, the keys $pk(v,t,u)$ for
all direct and two-hop neighbours $v$, the corresponding $\PoI$s, and
the post $\post(t{+}1, u)$. The two cases (no reset / reset)
differ only in \emph{how} $u$ obtains them: directly from the server
in the first case, and via an $\OOBshare$ call into a stable
neighbour in the second. Once this invariant is in place,
concluding lemma~\ref{lemma:each_epoch} shown below is the statement that what one user holds, every neighbour also should receive - so any
contradiction between two such stored values is automatically
server-signed and can potentially lead to $\PoM$.

\begin{restatable}{lemma}{EachEpoch}
    \label{lemma:each_epoch}
     Consider a system of users in connected graph $G$, let $\mathcal{A}$ be any efficient adversary against the security game of
    Figure~\ref{fig:security-game}, instantiated with the \fortress{} protocol
    of Section~\ref{sec:fortress-formal}  where signature or digest forgeries are impossible. Then for any users $u, v\in N_0(u)$ at every epoch $t+1$  $u$ receives the digest $d(t-1, v)$, $post(t, v)$ and is signed by $pk(v, t, u)$. The latter value is also signed by the server and is either new if there was $v$'s update notification or set $pk(v, t, u)\leftarrow pk(v, t-1, u)$.
\end{restatable}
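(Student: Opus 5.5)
The argument is a direct, protocol-tracing one that needs no probabilistic machinery: in the game where signature and digest forgeries are excluded, we follow what each honest user emits in $\UserProcessServerMessages.\End$ of epoch $t$ and what the server is forced to deliver in $\UserProcessServerMessages.\Init$ of epoch $t+1$. The plan is an induction on $t$ whose hypothesis is the storage invariant stated just above. At the end of epoch $t$, every user $v$ holds a server-signed $d(t-1,v)$, the post $\post(t,v)$, and its own current key $pk(v,t,v)$. For every neighbour, $v$ holds a key $pk(\cdot,t,v)$ that was either carried over or freshly delivered with an $\mathsf{Update}$. The base case is the trusted setup. By the assumption on $\NeighborShare$, every pair of direct and 2-hop neighbours begins with accepted authentic keys $pk(\cdot,0,\cdot)$. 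By construction of $\ServerInit$, a server verification key is fixed.

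The inductive step splits into two cases. \emph{No reset of $v$ at epoch $t$}: Phase~1, Case~1 sets $pk(v,t,v)\gets pk(v,t-1,v)$. $\End$ then signs $(d(t-1,v),\post(t),\ldots)$ under it, with a routing request to $N_0(v)\cup N_1(v)$. \emph{Reset}: $v$ holds a fresh key and obtains $d(t-1,\cdot)$ and the posts through $\OOBshare$ of type $\mathsf{NewDevice}$ from a non-recently-reset $v_\star$. It then signs that view and submits a $\mathsf{KeyUpdate}$. On $u$'s side, in $\Init$ of epoch $t+1$, step~1(c) either parses a digest signed under the old key $pk(v,t-1,u)$, or parses an $\mathsf{Update}$ carrying a new key inside the server-signed bundle. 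In the first case the update rule of Phase~4, Case~1 gives $pk(v,t,u)\gets pk(v,t-1,u)$. In the second case Case~2 installs the server-signed new key.

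The substantive point is that the server cannot do anything other than deliver exactly this. It cannot substitute a different message under $pk(v,t,u)$, since in the current game hop forging an honest user's signature causes an abort. It cannot omit the message silently, because Phase~2, step~2 triggers a $\mathsf{DigestExchange}$ $\OOBshare$ with $v$, and that call writes $v$'s actual server-signed $d(t-1,v)$ into $\st_u$. Phase~4 step~1(a) handles a signature that fails to verify in the same way. Thus in every branch $u$ ends up storing $d(t-1,v)$ and $\post(t,v)$ as $v$ produced them, authenticated either by $pk(v,t,u)$ or by the OOB channel. In addition, the key $pk(v,t,u)$ is always a server-signed component of some bundle, via $\PoI$ against $d(t,u)$, and $\PoI$ validity is checked in Phase~4 Case~1(c) and Case~2(c).

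I expect the main obstacle to be the reset and snapshot-compromise interplay. If $v$ was compromised, the adversary can sign an arbitrary digest under $v$'s old key. The lemma as stated only claims that $u$ receives a value signed by $pk(v,t,u)$, not that the value is honest, so I would phrase the conclusion at that weaker level and defer authenticity to the later lemmas. A second difficulty arises when the server sends a fake $\mathsf{Update}$ for $v$. I would argue that the fake key is still server-signed and $\PoI$-bound, so the lemma's disjunction (``new if there was an update notification'') still holds literally. The identity-fraud branch is left to subsequent lemmas.
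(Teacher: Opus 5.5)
\textbf{There is a gap: the case where the receiver $u$ itself resets at epoch $t+1$ is missing.}

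For a receiver $u$ that does \emph{not} reset at epoch $t+1$, your argument matches the paper's Lemma~\ref{lemma:1}. There $u$ obtains $d(t-1,v)$, the post and $pk(v,t,u)$ either from the server bundle or, failing that, through an $\OOBshare$ call to $v$. That is your step ``the server delivers it, or Phase~2/Phase~4 triggers $\mathsf{DigestExchange}$''.

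You split on whether the \emph{sender} $v$ reset at epoch $t$. The case the paper treats as the second half of its proof (Lemma~\ref{lemma:2}) is instead the receiver $u$ resetting at epoch $t+1$. In that case $\st_u=\bot$ before $\UserProcessServerMessages.\Init$, so $u$ holds no $pk(v,t-1,u)$. Your sentence ``step~1(c) parses a digest signed under the old key $pk(v,t-1,u)$, and the update rule gives $pk(v,t,u)\gets pk(v,t-1,u)$'' then refers to nothing. The Phase~2 missing-message trigger is also phrased relative to $u$'s stored old keys, so it does not help here. Consequently your inductive step never establishes the invariant at the end of epoch $t+1$ for users in $\mathcal{R}$, and the induction does not close.

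The missing idea is one you already invoke, but only on the sender side and without justifying it. A resetting $u$ makes the $\mathsf{NewDevice}$ call (Phase~1, Case~2) to some $v_\star\in N_0(u)$ that has not reset in the last $5$ epochs. Because $v_\star$ does not reset, the non-reset case applied to $v_\star$ at the same epoch shows what $v_\star$ holds. For every $v\in N_0(u)\subseteq N_1(v_\star)$ (as $u\in N_0(v_\star)$), $v_\star$ holds server-signed $pk(v,t,v_\star)$, its $\PoI$, $d(t,v_\star)$ and the post. These are copied into $\st_u$.

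Adding this case (first the non-reset case, then the reset case via the stable neighbour) gives exactly the paper's two-lemma proof, and no induction is needed. Note that in this branch $pk(v,t,u)$ is neither ``new from an $\mathsf{Update}$'' nor carried over from $u$'s own state; it is inherited from $v_\star$. This is why the paper phrases the invariant as ``stored and server-signed'' rather than through the carry-over rule. Your caveats about snapshot compromise and fake $\mathsf{Update}$s are fine and consistent with how weakly the lemma is stated.
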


\subsection{Consistency: Correct-execution lemma}

\begin{restatable}{lemma}{CorrectExecution}
    \label{lemma:correct_execution}
    If in execution of \fortress ~algorithm (Section \ref{sec:fortress-formal-protocol}) the server completes all the steps correctly and if $v\in N_0(u)$ resets at epoch $t$ then in the end of epoch $t+2$ $u$ accepts the reset.
\end{restatable}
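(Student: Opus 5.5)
We trace an honest execution epoch by epoch, following $v$'s new key through the server's bundles and $u$'s Phase~4 processing. We also use the standing assumption that honest users do not reset more than once per $5$ epochs, and track which of $v$'s neighbours count as ``stable'' for the majority rule. The direct-neighbour rule in Case~1.1 of Phase~4 is the only place $u$ can set $\mathit{Accept}(pk(v,\cdot,u))\gets\mathsf{True}$ without an $\OOBshare$. We therefore aim to show that, by the end of epoch $t+2$, the hypotheses of that rule hold. If instead $u$ falls into Case~2.1 and calls $\OOBshare$ with $\tau=\mathsf{KeyAdoption}$, we show that this call also yields acceptance.

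\textbf{Step 1 (propagation of the new key).} At the end of epoch $t$, $v$ emits $\mathsf{KeyUpdate}(pk(v,t,v))$. At the $t/t{+}1$ boundary the honest server allows it, since $v$ has not reset within $5$ epochs. It then records the key in $D$, commits to $d(t)$, and sends every $w\in N_0(v)\cup N_1(v)$, in particular $u$ and all of $v$'s other neighbours, the same digest together with $\mathsf{Update}(t,v,pk(v,t))$ and a valid $\PoI$. By Lemma~\ref{lemma:each_epoch} and the correctness of $\mathsf{DC}$, no $\PoM$ is triggered. In epoch $t+1$, Case~2 of Phase~4 sets $pk(v,t,u)$ to the new key with $\mathit{Accept}=\mathsf{False}$. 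If a Case~2.1 condition holds (for example $u$ also reset, or $t<5$), $u$ calls $\OOBshare$ with $v$. Since $v$ is honest and holds its own key, $u$ obtains the matching key and $\PoI$, and we argue that this settles acceptance directly. In the remaining case the key stays tentative.

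\textbf{Step 2 (vouching in epoch $t+2$).} Each $w\in N_0(v)$ signs its view $d(t,w)=d(t)$ at the end of epoch $t+1$, and the server forwards these signatures at the start of epoch $t+2$. In Phase~4 of epoch $t+2$, Case~1.1 for $v$ then runs the majority count over $w\in N_0(v)$ that did not update in the last $3$ epochs and whose keys $u$ accepts. Since $v$ has $d$ neighbours and the graph is static, we need to show that at least $d/2+1$ of them qualify. In an honest run, any $w$ whose key $u$ holds was accepted at setup via $\NeighborShare$ or after its own earlier reset. Inductively, we would invoke this same lemma for those earlier resets, so that a $w$ whose reset is older than $3$ epochs is already accepted.

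\textbf{Main obstacle.} The difficulty is the case where many of $v$'s neighbours reset recently, leaving fewer than $d/2+1$ qualifying vouchers. Here we plan to rely on the protocol's fallback: the rule's \emph{otherwise} branch issues $\OOBshare(\st_u,\st_v)$ with $\mathsf{KeyAdoption}$, and with an honest server the returned key and digest match $u$'s own. We would then argue, reading the intended semantics of $\mathsf{KeyAdoption}$, that a matching response causes $u$ to set $\mathit{Accept}$ to $\mathsf{True}$ in the same Phase~4. This gives acceptance by the end of epoch $t+2$ in every branch. We also need to check the timing, namely that the "update at $t$" notification is processed at $t+1$ and the vouches at $t+2$, so that the deadline $t+2$ is exactly met.
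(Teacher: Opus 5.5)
Your treatment of the branch in which $u$ keeps its state through epoch $t+2$ matches the paper's. The update reaches every neighbour of $v$ at $t+1$, and their signed digests arrive at $t+2$. $u$ then accepts either by the majority rule of Phase~4, Case~1.1, or by its $\mathsf{KeyAdoption}$ fallback to $v$; the paper also simply assumes that a matching OOB response sets the flag.

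\textbf{The gap.} Your whole argument presupposes that $u$ carries its epoch-$(t{+}1)$ state into epoch $t+2$: the tentative key for $v$ and the record that $\mathsf{Update}(t,v,\cdot)$ arrived. The reset-gap assumption only limits repeated resets of the same user, so $u$ may legitimately reset at $t+2$, two epochs after $v$. In that case $\st_u=\bot$ before $\mathsf{Init}$. The honest server's epoch-$(t{+}2)$ bundle carries only the $\mathsf{Update}$ notifications allowed at the previous boundary, so it no longer contains $\mathsf{Update}(t,v,\cdot)$. Consequently the ``update received'' branch of Case~1.1 is not entered for $v$, so neither the majority count nor its $\mathsf{KeyAdoption}$ fallback fires. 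Your claim of acceptance ``in every branch'' is therefore unsupported in exactly this case.

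\textbf{The missing step.} The paper handles this branch explicitly through the reset path of Phase~1, Case~2:
\begin{enumerate}
\item $u$ issues $\mathsf{NewDevice}$ to a neighbour $v_\star$ that has not reset in the last $5$ epochs. Necessarily $v_\star\neq v$, since $v$ reset at $t$.
\item From this call $u$ learns that $v$ reset at $t=(t{+}2)-2$, which is within the two-epoch window.
\item $u$ therefore issues $\mathsf{RecentReset}$ to $v$ itself and obtains $v$'s new key directly from $v$, accepting the reset at $t+2$.
\end{enumerate}
Adding this case closes the gap. It needs the same informal reading you already use for $\mathsf{KeyAdoption}$: a direct OOB confirmation from $v$ sets $\mathit{Accept}$. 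With it, your proof coincides with the paper's.
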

This ``warm-up'' lemma shows that in an honest execution, information about a reset propagates fast --- if $v$ resets at epoch $t$, every $u \in
N_0(v)$ accepts the reset by the end of epoch $t{+}2$. Importantly, this lemma sets the ``what should happen'' baseline, all the users expect quick reset delivery or start suspecting some misbehavior.  

\subsection{Consistency: Eventual PoM in case of incorrect key transmission}

We proceed with group of theorems that are the heart of the analysis. Here we define inconsistency between two users and show that \emph{any inconsistency introduced by a malicious server is detected within a constant number of epochs}. 

\begin{restatable}{definition}{Inconsistency}
    \label{def:inconsistency}
We say that there is \textbf{an inconsistency} between user Alice ($A$) and Bob ($B$) at epoch $t+1$ if  Alice stores an incorrect key for Bob: $pk(B, t, A)\neq pk(B, t, B)$ or if Bob stores an incorrect key for Alice: $pk(A, t, A)\neq pk(A, t, B)$. We also say that an inconsistency was passed, if there was inconsistency an epoch before that influenced the current one. 
\end{restatable}

Then, we exhaustively describe all the cases how the first inconsistency could have happened (Observation~\ref{observation:when_inconsistency}): between either direct or
two-hop neighbours, after either an unreported
reset, a reset that did happen but was hidden, or a reset that was reported with the wrong key. Then we provide theorems that cover each of the cases with the same conclusion: 

\begin{quote}
    If Bob stores an incorrect key for Alice ($\mathit{pk}(A, t, B) \neq \mathit{pk}(A, t, A)$), then either:
    \begin{enumerate}
        \item A PoM is produced within 2 epochs;
        \item No messages sent by Bob on this epoch reach their receivers, and exactly on the next epoch Bob resets and receives the correct digest (so-called ``Compromise-recovery'' scenario).
    \end{enumerate}
\end{quote}

Lemma~\ref{lemma:snapshot_exception} (``Compromise-recovery'') covers the only nuance caused by server compromising some users. Indeed, if the server briefly compromises a user, swaps her outgoing messages with correct ones, and then
immediately resets her, the inconsistency exists for one epoch but is never observed by anyone else. The lemma checks that in this scenario no neighbor of the compromised user can receive the inconsistency out of her; every possible case end with either the $5$-epoch reset gap or the protocol's OOB trigger after a neighbour reset. 

\subsection{Consistency: From key consistency to digests \& posts consistency}
Once keys are consistent across the network (modulo compromise-recovery
scenarios that vanish on their own), digests can disagree only if some
honest user $A$ stores $d(t,A)$ different from a neighbour $B$'s
$d(t,B)$. But Corollary~\ref{lemma:each_epoch} says that one of them
will receive the other's digest, signed by the server, within two epochs --- and a server-signed contradiction is itself a $PoM$. The
extra epochs ($t{+}3$, $t{+}4$) in the theorem statement are exactly the slack needed for a recently-reset user to refetch state through
an $\OOBshare$ call before being expected to compare digests.

\ana{@Melissa, you can add in the description below the exact reference to where the game $G_2$ was defined}

\begin{restatable}{theorem}{Mainresult}
   Consider a system of users in connected graph $G$, let $\mathcal{A}$ be any efficient adversary against the security game of
    Figure~\ref{fig:security-game}, instantiated with the \fortress{} protocol
    of Section~\ref{sec:fortress-formal}  where signature or digest forgeries are impossible.  Then for any epoch $t+1>1$ and for any two users $u, v\in G$, either of the following happens: 
\begin{itemize}
    \item at epoch $t+4$ $u$ and $v$ store correct digests and post: $d(t,u)=d(t, v)$, $post(t+1, u)=post(t+1, v)$;
    \item PoM is generated at epochs $t+1, t+2, t+3$ or $t+4$;
\end{itemize}
\end{restatable}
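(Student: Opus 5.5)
The plan is to reduce the statement to the edge case and then propagate along paths in $G$. It suffices to show the following for every pair of \emph{direct} neighbours $u, v$ with $v\in N_0(u)$: either a $\PoM$ appears in epochs $t{+}1,\dots,t{+}4$, or $d(t,u)=d(t,v)$ and $\post(t{+}1,u)=\post(t{+}1,v)$ at epoch $t{+}4$. Since $G$ is connected, any two users $u,v$ are joined by a path $u=w_0,w_1,\dots,w_k=v$. If no $\PoM$ is produced, the edgewise equalities chain by transitivity to $d(t,u)=d(t,v)$ and $\post(t{+}1,u)=\post(t{+}1,v)$. All edges are handled within the same window $t{+}1..t{+}4$, so this step costs no extra epochs.

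For a fixed edge $(u,v)$, I split on whether an inconsistency in the sense of Definition~\ref{def:inconsistency} exists between $u$ and $v$ around epochs $t{+}1,t{+}2$.

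\emph{Inconsistency present.} Use Observation~\ref{observation:when_inconsistency} to classify how the first such inconsistency arose: an unreported reset, a hidden reset, or a reset reported with a wrong key. The per-case theorems then give one of two outcomes.
\begin{itemize}
\item A $\PoM$ within two epochs. This falls inside the window, so the claim holds.
\item The compromise-recovery scenario of Lemma~\ref{lemma:snapshot_exception}. Here none of the stale messages reach anyone, the affected user resets in the next epoch, and via the $\mathsf{NewDevice}$/$\mathsf{RecentReset}$ $\OOBshare$ calls it refetches the last three digests and posts from a stable neighbour. This reduces to the no-inconsistency case, shifted by at most two epochs, and that shift is exactly the slack provided by epochs $t{+}3,t{+}4$.
\end{itemize}

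\emph{No inconsistency.} Then $u$ holds the correct $pk(v,\cdot,u)$, and Lemma~\ref{lemma:each_epoch} says that at epoch $t{+}2$ $u$ receives $d(t,v)$ and $\post(t{+}1,v)$ signed under $pk(v,t{+}1,u)$. By the game assumption, signatures of uncompromised users cannot be forged, so this message is authentic. Phase~4, Case~1, step~2 of $\mathsf{Intermediate}$ then compares $d(t,u)$ with $d(t,v)$. If they differ, the two server-signed epoch-tagged digests form a $\PoM$ of type~(ii). If they agree, the posts must also agree, because each post is bound into the same server-signed bundle that carries the epoch-tagged digest. Any mismatch there is again a pair of contradictory server-signed values, which yields a $\PoM$ by the same argument.

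\emph{Special sub-cases.} The remaining cases are those where the signed digest from $v$ is legitimately absent because $v$ (or $u$) reset at $t{+}1$ or $t{+}2$. These I handle with the $\mathsf{DigestExchange}$ / $\mathsf{KeyAdoption}$ triggers and the reset-time $\OOBshare$. The resetting user obtains server-signed $d(t,\cdot)$ and $\post(t{+}1,\cdot)$ OOB from a neighbour that has not reset within 5 epochs, and $\mingap=5$ guarantees such a neighbour exists. The resetting user then compares these in its next epoch, again producing either equality or a type-(ii) $\PoM$ by epoch $t{+}4$.

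The main obstacle is this last bookkeeping: checking that, across every combination of resets of $u$, $v$ and a snapshot-compromised neighbour, the comparison actually happens by epoch $t{+}4$ rather than $t{+}5$. To bound the delay, I would first enumerate the reset epoch of $u$ and of $v$ within $\{t{+}1,t{+}2,t{+}3\}$. I would then use $\mingap=5$ and $\thresh=1$ to rule out more than one reset per relevant neighbourhood in that window, so that each scenario incurs at most one $\OOBshare$ refetch delay.
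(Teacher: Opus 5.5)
Your skeleton matches the paper's proof: reduce to a pair of neighbours via connectivity, use Lemma~\ref{lemma:each_epoch} to obtain the neighbour's signed digest at $t+2$, and handle inconsistencies through Observation~\ref{observation:when_inconsistency}, Theorems~\ref{theorem:1}--\ref{theorem:4} and Lemma~\ref{lemma:snapshot_exception}. However, two steps fail as written.

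\textbf{The post step.} The claim ``If they agree, the posts must also agree'' does not follow. The digest $d(t)$ commits only to the key directory $D:\username\mapsto pk$; the post is just another field of each per-user signed bundle. A malicious server can therefore sign bundles that carry the same epoch-tagged digest $(t,d)$ but different posts for two neighbours. No pattern of $\VerifyPoM$ matches such a pair, since form (ii) requires $d_1\neq d_2$. Phase~4 never compares posts, and all keys stay correct, so none of Theorems~\ref{theorem:1}--\ref{theorem:4} is triggered either. ``Contradictory server-signed values'' count as a $\PoM$ only if $\VerifyPoM$ says so. The paper's proof passes over the same point (``generates PoM with contradiction with his own value''), so closing it needs a protocol change. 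One option is to commit the post inside $D$ under a reserved label, so that equal digests force equal posts via PB-Abort. Another is to add a $\PoM$ pattern for two server-signed bundles with the same epoch tag and different posts, together with a post-equality check in Phase~4.

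\textbf{The delay bookkeeping.} Your plan for the main obstacle rests on a false premise. $\mingap=5$ only limits how often a \emph{single} user resets, and $\thresh=1$ limits compromises, not resets. The adversary chooses $\mathcal{R}$ freely, so many distinct users in one neighbourhood may reset in the same window. In particular, $\mingap$ alone does not guarantee that a neighbour who has not reset for 5 epochs exists. You therefore cannot reduce to ``at most one reset per neighbourhood.''
\begin{itemize}
\item The paper handles many simultaneous resets through the Phase~4 threshold checks on unaccepted keys inside Theorems~\ref{theorem:3} and~\ref{theorem:4}.
\item It handles late resets by inheritance rather than re-comparison. The routine comparison of $d(t,\cdot)$ happens only in Phase~4 of epoch $t+2$, so after a reset at $t+3$ nothing compares the refetched $d(t,\cdot)$ ``in the next epoch.'' Instead, the refetched values equal the OOB responder's, which were already checked at $t+2$, and any disagreement with the responder falls back to the earlier case.
\end{itemize}
Finally, in your no-inconsistency case, orient the edge so that the sender is the uncompromised endpoint, as the paper does. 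Unforgeability tells you nothing about messages attributed to a snapshot-compromised $v$.
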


This concludes the main statement of our paper - with \fortress\ protocol, regardless of server's actions:
either in every epoch, all users output the same post, or a publicly verifiable Proof of Misbehavior is produced.

\section{Conclusion and future work}\label{sec:future_work}
In this work we show that transparency schemes can be achieved \emph{without any trusted third party at all}. Our Fortress protocol is the first transparency protocol whose consistency guarantees rely solely on the social graph together with occasional out-of-band calls between real-world contacts. Importantly,if the server behaves honestly, then OOB calls are mostly invoked only at time when some user resets, so the amount of such calls is pretty minimal. We believe our feasibility result opens up a new direction in the research of building BB transparency systems. 

While our security proof relies on several simplifying assumptions, we believe the protocol provides a solid foundation that can be carefully adapted to relax each of them without fundamentally altering its structure. Specifically, the full-synchrony and users-always-online assumption could potentially be weakened by decoupling acceptance of an update from a fixed two-epoch boundary --- e.g., accepting a change only once a sufficient number of trusted common neighbors have confirmed the same digest. The static $d$-regular graph assumption appears tractable: variable degrees can be accommodated by adapting update acceptance threshold to each user's degree, and additions or removals of neighbors can be handled through targeted exchanges and, when necessary, a small number of OOB calls.  Finally, relaxing universal honesty and threshold of compromised users per neighborhood is a delicate direction, and likely requires a quantitative analysis of how security degrades with the fraction of dishonest or compromised users, possibly tunable through protocol parameters such as $d$ or the frequency of OOB calls. 

A separate direction concerns user privacy. Our protocol relies on each user verifying updates against their two-hop neighbors, which currently requires partial knowledge of the two-hop neighborhood. A natural mitigation is to route two-hop verification through a dedicated key pair, separate from the user's primary identity key, so that two-hop contacts do not directly learn each other's main identifiers. Such a scheme does not achieve full unlinkability, since an adversarial server or correlated metadata across epochs could still leak partial information about the two-hop graph, and a rigorous treatment of privacy in this setting is an interesting direction for future work.

We leave a thorough treatment of these directions to future work.


\bibliographystyle{splncs04}
\bibliography{citation,transparency}

\appendix

\section*{Generative AI Usage}

In accordance with ACM's  and Eurocrypt's policy, we disclose the use of generative AI tools in the preparation of this paper. Specifically, we used Github co-pilot with Claude Opus 4.6 and 4.7 and Claude Code with Claude Opus to assist with the following tasks:

\begin{itemize}
    \item We used the tool to improve the clarity, grammar, and style of paragraphs that we had already drafted ourselves. The underlying ideas, structure, and technical content were authored by us; the tool was used only to refine wording.
    
    \item Some figures in the paper were initially produced by the tool based on detailed descriptions and specifications written by the authors. The authors then reviewed each figure for technical correctness and, where necessary, edited or regenerated them to ensure accuracy.

    \item For the proofs in section \ref{sec:full_algorithmic_proofs}, the authors first wrote out the full proof and then asked the tool to produce a first draft of the accompanying high-level intuition. These drafts were subsequently reviewed and substantially rewritten by the authors to ensure that the intuition correctly reflected the formal argument.

    \item For the non-frameability proof (Theorem~\ref{thm:pom-soundness}), and the cryptographic game hops in \ref{sec:appendix:game_hops} which are very straightforward, the AI produced a first draft independently, and the authors revised it by hand to ensure correctness.
\end{itemize}

All AI-generated content was manually verified by the authors. In particular, technical claims, proof intuitions, and figures were cross-checked against the underlying formal results, and any inaccuracies introduced by the tool were corrected. The authors take full responsibility for the final content of the paper.

The tool was \emph{not} used to generate novel technical contributions,  theorem statements, or experimental results.



\section{Dictionary Commitment}\label{sec:digest-scheme}

\melissa{We may want to change the intro/explanation text if we decide to focus
on just the bb functinality, not the whole key directory}

We will use this commitment to concisely represent the current claimed public keys for each user.  Note that these may not be correct if the server is dishonest.
Each user will check whether this commitment
is consistent their keys and with any knowledge (e.g. from an OOB call) of
those of their neighbors.  They will then sign it along with any corrections
they know of - this can then be distributed to their neighbors as a concise way
of ensuring that all neighbors' views of one another's keys are consistent even
in the face of an adversarial server. \melissa{Check that the above is/stays accurate}

\begin{definition}[Dictionary Commitment]
A \emph{dictionary commitment} is a tuple of algorithms
$\mathsf{DC} = (\DCSetup, \allowbreak\DCCommit, \allowbreak\DCProve, \DCVerify)$:
\begin{description}
    \item[$\DCSetup(1^\lambda) \to \pp$:]
    Takes the security parameter $\lambda$ and outputs public parameters $\pp$.
    These parameters are used implicitly by all other algorithms.

    \item[$\DCCommit(\pp,\, D) \to (\digest,\, \mathit{aux})$:]
    Takes a finite dictionary $D : \mathcal{L} \rightharpoonup \mathcal{V}$
    (mapping labels to values), and outputs a short digest $\digest$ together
    with auxiliary state $\mathit{aux}$ (e.g.\ the full Merkle tree) that will
    be needed later to generate membership proofs.
    \melissa{currently we make no requirements on what this algorithm does given
    a label not in the dictionary.  I think that's ok?}

    \item[$\DCProve(\pp,\, \mathit{aux},\, \ell) \to \pi$:]
    Takes the auxiliary state and a label $\ell \in \mathsf{dom}(D)$, and
    outputs a proof of inclusion $\pi$ showing that $D(\ell)$ is the value
    associated with label $\ell$ in the dictionary committed to in $\digest$.

    \item[$\DCVerify(\pp,\, \digest,\, \ell,\, v,\, \pi) \to \accept/\reject$:]
    Takes a digest, a label, a claimed value, and a proof, and outputs
    $\accept$ if $\pi$ is a valid witness that $v = D(\ell)$ relative to
    $\digest$, and $\reject$ otherwise.
\end{description}
\end{definition}

\paragraph{Correctness.}
\melissa{We may have to change this to be probabilistic and have a small error
rate, to capture the case of hash collisions.  Maybe not though?  I think we
can probably just have the adversary retry the commitment if there are any hash
collisions, since the dictionary is fixed at commit time and never updated, and
we don't have to give non-membership proofs.}
Intuitively, correctness says that any membership proof produced by an honest
run of the scheme will always be accepted by $\DCVerify$.  Formally, a
dictionary commitment is \emph{correct} if for all $\lambda$, all dictionaries
$D$, and all $\ell \in \mathsf{dom}(D)$:
\[
    \Pr\bigl[\DCVerify(\pp, \digest, \ell, D(\ell), \pi) = \accept\bigr] = 1,
\]
where $\pp \gets \DCSetup(1^\lambda)$,
$(\digest, \mathit{aux}) \gets \DCCommit(\pp, D)$,
and $\pi \gets \DCProve(\pp, \mathit{aux}, \ell)$.

Looking ahead, in our protocol the server will call $\DCCommit$ at the end of
each epoch to produce the current epoch digest, and will call $\DCProve$ to
generate the proof of inclusion $\PoI$ that is sent to each user.  A user who
wishes to check that their entry is correctly recorded in the directory runs
$\DCVerify$.

\paragraph{Notation.}
For conciseness, we write $\PoI(\vk, \digest)$ to denote a proof $\pi$ such that
$\DCVerify(\pp, \digest, \username, \vk, \pi) = \accept$, and we write
$\PoI(\vk, \digest) \vdash \vk$ to assert that this verification accepts.

\subsection{Position-Binding Security Game}\label{sec:poibinding}

\melissa{This section is AI generated - needs detailed review. At the very least, the paragraph at the end about instatiations needs updating}

\melissa{May want to change this text to be more general, especially if we decide to focus on bulletin boards}

The security property we require of the dictionary commitment is
\emph{position-binding}.  Intuitively, this captures the idea that once a
digest has been published, it unambiguously determines the identity bundle
associated with each username: it should be infeasible for any efficient
adversary to produce two different identity bundles for the same username that
both verify against the same digest.  We formalize this via the security game
in Figure~\ref{fig:poibinding-game}.

\begin{figure*}[h]
\centering
\begin{adjustbox}{width=\textwidth}
\begin{tikzpicture}[
    >=Stealth,
    every node/.style={font=\small},
    adv/.style={rounded corners, text width=3.2cm, align=left, inner sep=4pt},
    chal/.style={rounded corners, text width=7.8cm, align=left, inner sep=4pt},
    arrowlabel/.style={font=\footnotesize, inner sep=2pt, text width=3.0cm, align=center},
]

\def\Ax{0}
\def\Cx{8.2}

\node[rounded corners, text width=3.6cm, align=center, font=\bfseries]
    (advhdr) at (\Ax, 0) {Adversary $\mathcal{A}$};

\node[rounded corners, text width=8.4cm, align=center, font=\bfseries]
    (chalhdr) at (\Cx, 0) {Challenger};

\node[chal, below=0.4cm of chalhdr] (chal-setup) {\begin{minipage}{7.4cm}
    \textbf{Setup:}
    \begin{itemize}[leftmargin=1.2em, itemsep=1pt, topsep=2pt, parsep=0pt]
        \item Run $\pp \gets \DCSetup(1^\lambda)$.
        \item Send $\pp$ to $\mathcal{A}$.
    \end{itemize}
\end{minipage}};

\node[adv, anchor=center] (adv-setup) at (advhdr.center |- chal-setup.center) {Receive $\pp$.};

\draw[->] ([yshift=0pt]chal-setup.west) -- node[arrowlabel, above]{$\pp$}
    ([yshift=0pt]adv-setup.east);

\node[chal, below=0.5cm of chal-setup] (chal-chal) {\begin{minipage}{7.4cm}
    \textbf{Winning check:} \\
    $\mathcal{A}$ wins if \textbf{all} of the following hold:
    \begin{itemize}[leftmargin=1.2em, itemsep=2pt, topsep=4pt, parsep=0pt]
        \item $v_0 \neq v_1$;
        \item $\DCVerify(\pp,\, \digest,\, \ell,\, v_0,\, \pi_0) = \accept$;
        \item $\DCVerify(\pp,\, \digest,\, \ell,\, v_1,\, \pi_1) = \accept$.
    \end{itemize}
\end{minipage}};

\node[adv, anchor=center] (adv-chal) at (advhdr.center |- chal-chal.center) {
    Output a label $\ell$, a digest $\digest$, two values $v_0, v_1$, and two proofs $\pi_0, \pi_1$.
};

\draw[->] (adv-chal.east) --
    node[arrowlabel, above]{$\digest,\, \ell,$ \\ $v_0,\pi_0,$ \\ $v_1,\pi_1$}
    (chal-chal.west);

\end{tikzpicture}
\end{adjustbox}
\caption{Position-binding security game $\mathsf{Game}^{\mathsf{pos\text{-}bind}}_{\mathsf{DC},\mathcal{A}}(1^\lambda)$ for a dictionary commitment $\mathsf{DC}$.}
\label{fig:poibinding-game}
\end{figure*}

\begin{definition}[Position-Binding]\label{def:position-binding}
    A dictionary commitment $\mathsf{DC}$ is \emph{position-binding} if for
    every efficient adversary $\mathcal{A}$,
    \[
        \Pr\!\left[\mathcal{A} \text{ wins } \mathsf{Game}^{\mathsf{pos\text{-}bind}}_{\mathsf{DC},\mathcal{A}}(1^\lambda)\right] \leq \mathsf{negl}(\lambda).
    \]
\end{definition}

\begin{remark}
    This can be instantiated using a sparse Merkle tree. For example, we can
    use the ZKS construction from~\cite{seemless}.  Note that we are not
    considering privacy in this work, so we can actually remove the VRF from
    this construction.
\end{remark}

\begin{figure*}[!htbp]
\centering
\begin{adjustbox}{width=\textwidth}
\begin{tikzpicture}[
    >=Stealth,
    every node/.style={font=\small},
    srv/.style={rounded corners, text width=3.8cm, align=left, inner sep=4pt},
    cli/.style={rounded corners, text width=7.0cm, align=left, inner sep=4pt},
    bk/.style={rounded corners, text width=5.0cm, align=center, inner sep=4pt},
    adv/.style={rounded corners, text width=3.5cm, align=left, inner sep=4pt},
    phase/.style={draw, dashed, rounded corners, inner xsep=2pt, inner ysep=3pt},
    arrowlabel/.style={font=\footnotesize, inner sep=2pt, text width=2.2cm, align=center},
]
\pgfdeclarelayer{background}
\pgfdeclarelayer{middleground}
\pgfsetlayers{background,middleground,main}

\def\Sx{0}         
\def\BKx{5.2}      
\def\CLx{12.0}     
\def\Ax{20.0}      

\node[rounded corners, text width=3.8cm, align=center, font=\bfseries]
    (srvhdr) at (\Sx, 0) {Server};

\node[rounded corners, text width=5.0cm, align=center, font=\bfseries]
    (bkhdr) at (\BKx, 0) {Bookkeeping};

\node[rounded corners, text width=7.0cm, align=center, font=\bfseries]
    (clihdr) at (\CLx, 0) {Honest Clients\\[2pt]{\normalfont\small(for each user $u$)}};

\node[rounded corners, text width=3.5cm, align=center, font=\bfseries]
    (advhdr) at (\Ax, 0) {Adversary $\mathcal{A}$};


\node[srv, below=0.6cm of srvhdr] (srv-setup) {\begin{minipage}{3.6cm}
    Run $(\serverst, \serverparams) \gets$\\
    $\boldsymbol{\ServerInit}(\pp, G, \{\username_u\}_{u\in G})$.
\end{minipage}};

\node[bk, anchor=south] (bk-pp) at ([yshift=6pt]bkhdr.center |- srv-setup.center) {\begin{minipage}{4.8cm}
    \begin{itemize}[leftmargin=1.2em, itemsep=1pt, topsep=4pt, parsep=0pt]
        \item Run $\pp \gets \boldsymbol{\mathsf{Setup}}(1^\lambda)$.
        \item Initialise $\mathsf{curEpoch} \gets 0$.
    \end{itemize}
\end{minipage}};

\node[adv, anchor=center] (adv-setup) at (advhdr.center |- srv-setup.center) {
    Choose a social graph $G$ and a set of usernames $\{\username_u\}_{u\in G}$, one per node.
};

\node[cli, anchor=north] (cli-setup) at (clihdr.center |- srv-setup.south) {\begin{minipage}{6.8cm}
    Run $(\st_u, \msgs_u) \gets \boldsymbol{\UserInit}(\serverparams, \username_u)$.
\end{minipage}};

\node[bk, below=0.5cm of cli-setup.south -| bkhdr] (bk-setup) {\begin{minipage}{4.8cm}
    \begin{itemize}[leftmargin=1.2em, itemsep=1pt, topsep=4pt, parsep=0pt]
        \item For each (direct or 2-hop) neighbor pair $(u,v)$ in $G$:\\
              Run $(\st_u', \st_v') \gets \boldsymbol{\NeighborShare}(\st_u, \st_v)$. \hfill $\triangleright$ \textit{trusted setup}
        \item Store $G$. For each $u$, store $\username_u$ and updated state $\st_u'$. Also store $\lastreset_u = 0$.
    \end{itemize}
\end{minipage}};


\draw[->] ([yshift=-2pt]srv-setup.south east) -- node[arrowlabel, above]
    {$\serverparams$}
    ([yshift=-2pt]srv-setup.south east -| cli-setup.west);

\draw[->] ([yshift=2pt]cli-setup.south west) -- node[arrowlabel, below]
    {$\{\msgs_u\}_{u \in G}$}
    ([yshift=2pt]cli-setup.south west -| srv-setup.east);


\node[bk, below=1.8cm of bk-setup.south -| bkhdr] (bk-epoch) {\begin{minipage}{4.8cm}
    \begin{itemize}[leftmargin=1.2em, itemsep=1pt, topsep=4pt, parsep=0pt]
        \item Set $\mathsf{curEpoch} \gets \mathsf{curEpoch} + 1$.
        \item For each $u \in \mathcal{R}$:\\
        If $\lastreset_u + \mingap > \mathsf{curEpoch}$, abort. \hfill $\triangleright$ \textit{min gap between resets} \\Otherwise set $\st_u \gets \bot$ (device loss) and set $\lastreset_u = \mathsf{curEpoch}$.
    \end{itemize}
\end{minipage}};

\node[adv, anchor=center] (adv-epoch) at (advhdr.center |- bk-epoch.center) {
    Choose a post $\post^{(e)}$, a reset set $\mathcal{R} \subseteq G$, and $P$ processing orders $\sigma_0, \ldots, \sigma_{P-1}$ (permutations of~$G$).
};

\node[cli, below=0.8cm of bk-epoch.south -| clihdr] (cli-process) {\begin{minipage}{6.8cm}
    Run $\st_u \gets \boldsymbol{\UserProcessServerMessages\Init}(\st_u, \msgs_u)$.\\[3pt]
    For $i = 0, \ldots, P{-}1$ (in order $\sigma_{i}$):
     Run $\st_u \gets \boldsymbol{\UserProcessServerMessages\Intermediate}^{\mathsf{OOB}(u,\,\cdot)}(\st_u, i)$.\\[3pt]
    Run $(\st_u, \msgs_u', \post_u) \gets \boldsymbol{\UserProcessServerMessages\End}(\st_u)$.
\end{minipage}};

\node[srv, anchor=south] (srv-epoch) at ([yshift=4pt]srvhdr.center |- cli-process.north) {\begin{minipage}{3.6cm}
    Run $(\{\msgs_u'\}, \serverst') \gets$\\
    $\boldsymbol{\ServerProcessMessages}($\\
    $\quad\serverst, \{\msgs_u\}, \post)$.
\end{minipage}};

\draw[->] ([yshift=4pt]cli-process.north west) -- node[arrowlabel, above]
    {$\{\msgs_u\}$}
    ([yshift=4pt]cli-process.north west -| srv-epoch.east);

\draw[->] ([yshift=2pt]cli-process.south west) -- node[arrowlabel, below]
    {$\{\msgs_u'\}$}
    ([yshift=2pt]cli-process.south west -| srv-epoch.east);

\node[bk, below=0.3cm of cli-process.south -| bkhdr] (bk-post) {\begin{minipage}{4.8cm}
    \begin{itemize}[leftmargin=1.2em, itemsep=1pt, topsep=4pt, parsep=0pt]
        \item For each $u$: Record $\post_u^{(\mathsf{curEpoch})} \gets \post_u$.
    \end{itemize}
\end{minipage}};


\node[adv, below=0.4cm of adv-epoch] (adv-end) {
    Receive $\{\post_u\}_{u \in G}$ from all users this epoch.
};

\begin{pgfonlayer}{background}
    \node[draw, thick, blue!60!black, rounded corners, inner sep=8pt,
          fill=blue!3,
          fit=(srvhdr)(clihdr)(bkhdr)(srv-setup)(bk-pp)(cli-setup)(bk-setup)(srv-epoch)(cli-process)(bk-epoch)(bk-post)] (chalbox) {};
\end{pgfonlayer}
\node[font=\bfseries\small, text=blue!60!black, fill=white, inner sep=2pt, anchor=center]
    at (chalbox.north) {Challenger $\mathcal{C}$};

\draw[->] (adv-setup.west) -- node[arrowlabel, above]
    {$G,\;\{\username_u\}$}
    (chalbox.east |- adv-setup.center);

\draw[->] (adv-epoch.west) -- node[arrowlabel, above]
    {$\post,\;\mathcal{R},$\\$\sigma_0, \ldots, \sigma_{P-1}$}
    (chalbox.east |- adv-epoch.center);

\draw[->] (chalbox.east |- adv-end.center) -- node[arrowlabel, above]
    {$\{\post_u\}_{u \in G}$}
    (adv-end.west);

\begin{pgfonlayer}{middleground}
    \node[phase, fit=(srv-epoch)(cli-process)(bk-epoch)(bk-post)] (loop) {};
\end{pgfonlayer}
\node[font=\bfseries\small, fill=white, inner sep=2pt, anchor=south] at ([yshift=2pt]loop.north) {Repeat (each epoch)};


\node[rounded corners, text width=20.0cm, align=left,
      inner sep=4pt, below=0.5cm of loop] (win) {\begin{minipage}{19.5cm}
    \textbf{Adversary wins if:}
    \begin{itemize}[leftmargin=1.2em, itemsep=1pt, topsep=3pt, parsep=0pt]
        \item there exist a user $u$ and an epoch $e \geq \delay$ such that $\post_u^{(e)} \neq \post^{(e - \delay)}$.
    \end{itemize}
\end{minipage}};

\begin{pgfonlayer}{background}
    \node[draw, rounded corners, inner sep=10pt,
          fit=(chalbox)(advhdr)(adv-setup)(adv-epoch)(adv-end)(loop)] (gamebox) {};
\end{pgfonlayer}

\node[draw, rounded corners, text width=20.0cm, align=left,
      inner sep=4pt, below=0.3cm of win] (oob) {\begin{minipage}{19.5cm}
    \textbf{Oracle $\mathsf{OOB}(u, v)$:}
    If $v \notin N_0(u)$, return $\bot$.
    Otherwise, run $\st_v' \gets \OOBshare(u, v)$
    using the current challenger-maintained states, and update
    $\st_v \gets \st_v'$.
\end{minipage}};

\end{tikzpicture}
\end{adjustbox}
\caption{Correctness game for \sgFORT, parameterised by
the detection delay~$\Delta$, the number of intermediate phases~$P$, and a minimum of $\mingap$ epochs between resets.
The honest server and honest clients execute the protocol faithfully; the adversary controls only environmental inputs (social graph, posts, resets, and processing orders).}
\label{fig:correctness-game}
\end{figure*}



\section{Non-frameability}
\label{sec:pom-soundness-appendix}

We formalise the requirement that an honest server cannot be falsely
accused by a $\PoM$ that verifies under $\VerifyPoM$. The full game
is depicted in Fig.~\ref{fig:pom-security-game}; we recall it here
in prose, instantiated against \fortress.

\paragraph{Parameter generation.}
The challenger runs $\setup(1^\lambda) \to \pp$ and hands $\pp$ to
the adversary.

\paragraph{Setup.}
The adversary chooses a social graph $G$ and a username assignment
$\{\username_u\}_{u \in G}$, one per node. The challenger initialises
the server by running
$(\serverst, \serverparams) \gets \ServerInit(\pp, G, \{\username_u\}_{u \in G})$,
initialises a server-message transcript $T \gets \emptyset$ and the
epoch counter $\mathsf{curEpoch} \gets 0$, and returns $\serverparams$
to the adversary.

\paragraph{Per epoch.}
The adversary, acting on behalf of all (possibly malicious) users,
submits per-user incoming messages $\{\msgs_u\}_{u \in G}$ and a post
$\post$. (In the early epochs these incoming messages may include
$\UserInit$ registration messages and the per-epoch
$\mathsf{KeyUpdate}$ submissions produced by
$\UserProcessServerMessages.\End$.) The challenger increments
$\mathsf{curEpoch}$, runs
$(\{\msgs'_u\}_{u \in G},\, \serverst') \gets
\ServerProcessMessages(\serverst,\, \{\msgs_u\}_{u \in G},\, \post)$
honestly, updates $\serverst \gets \serverst'$, appends
$(\mathsf{curEpoch},\, \{\msgs'_u\}_{u \in G})$ to $T$, and returns the
outgoing messages to the adversary.

\paragraph{Forgery.}
At any point the adversary may halt and output a candidate $\PoM$.
The $\PoM$ may freely reference any server-issued message recorded in
$T$. The adversary wins if
$\VerifyPoM(\serverparams, \PoM) = \accept$ --- i.e., $\PoM$ matches
one of the three forms of $\VerifyPoM$ in \fortress\
(invalid $\PoI$; two distinct digests for the same epoch; or two
reset notifications for the same user less than $5$ epochs apart)
and every server signature in $\PoM$ verifies under $\vk$.

\section{Consistency Proof} \label{sec:appendix:game_hops}

\begin{theorem}[Security via Game Hops]\label{thm:main-security}
    Let $\mathcal{A}$ be any efficient adversary against the security game of
    Figure~\ref{fig:security-game}, instantiated with the \fortress{} protocol
    of Section~\ref{sec:fortress-formal}, and let $\thresh = 1$, $\delay = 3$, and $\mingap = 5$.\melissa{check that these are the right values}  Let
    $\epsilon(\lambda) = \Pr[\mathcal{A} \text{ wins}]$.
    Let $E$ be the maximum number of distinct signing keypairs any single honest user
    generates (i.e.\ $1$ plus the maximum number of resets per user).
    Then there exist efficient algorithms $\mathcal{B}_{SIG}$ and $\mathcal{B}_{PB}$ such that
    \begin{multline*}
        \epsilon(\lambda)
        \;\leq\;
        n \cdot E \cdot \Adv^{\mathrm{euf\text{-}cma}}_{\Sigma,\,\mathcal{B}_{SIG}}(\lambda) \\
        \;+\;
        \Adv^{\mathrm{pos\text{-}bind}}_{\mathsf{DS},\,\mathcal{B}_{PB}}(\lambda),
    \end{multline*}
    where $n = |G|$ is the number of users.
    In particular, if $\Sigma$ is EUF-CMA secure and $\mathsf{DS}$ is position-binding then
    $\epsilon(\lambda) \leq \negl(\lambda)$.
\end{theorem}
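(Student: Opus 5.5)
The proof is a sequence of three games, $\mathrm{Game}_0$ (the real consistency game of Fig.~\ref{fig:security-game} with \fortress), $\mathrm{Game}_1$ and $\mathrm{Game}_2$. We write $W_i$ for the probability that $\mathcal{A}$ wins $\mathrm{Game}_i$.

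\textbf{$\mathrm{Game}_1$ (user-signature forgery).} The challenger aborts, and $\mathcal{A}$ loses, as soon as some honest user $u$ accepts a message as validly signed under a key $pk(u,t)$ that $u$ itself generated, although $u$ never signed that message under that key. Signatures produced by a compromised user before its next reset do not count here, because the adversary legitimately learned that user's signing key. Concretely, we exclude a key once $u$ is compromised while holding it. Since a snapshot reveals only the current state, keys generated after a reset are fresh and are not excluded.

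\textbf{Reduction for $\mathrm{Game}_1$.} $\mathcal{B}_{SIG}$ receives $\vk^*$ and guesses uniformly a pair $(u^*,j^*) \in G\times[E]$. It embeds $\vk^*$ as the $j^*$-th keypair of $u^*$, answers that key's signing needs (End messages, $\mathsf{IdentityFraudSign}$ responses) through the signing oracle, and runs every other user honestly. If $\mathcal{A}$ compromises $u^*$ while it holds $\vk^*$, $\mathcal{B}_{SIG}$ aborts. By the definition of the abort event this never happens on the forgery path, because the forged key is by definition uncompromised at that point. When the abort event of $\mathrm{Game}_1$ first occurs on the guessed key, the offending message and signature form a valid EUF-CMA forgery. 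The guess is independent of $\mathcal{A}$'s view, so $|W_0-W_1|\le nE\cdot\Adv^{\mathrm{euf\text{-}cma}}$.

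Forgeries of the server's signature do not need a hop. The server is the adversary, so every server-signed object is legitimately produced by $\mathcal{A}$.

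\textbf{$\mathrm{Game}_2$ (position-binding).} The challenger aborts if, at any point, some honest user's state contains, or a verification performed by an honest user involves, two accepting $\PoI$s for the same digest $d$ and label $\ell$ with distinct values. $\mathcal{B}_{PB}$ runs $\mathrm{Game}_1$ perfectly, since it knows all honest secrets. It scans all $\DCVerify$ calls made by honest users and in $\OOBshare$ transfers, and outputs the first colliding tuple $(d,\ell,v_0,\pi_0,v_1,\pi_1)$. This gives $|W_1-W_2|\le\Adv^{\mathrm{pos\text{-}bind}}$.

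\textbf{Winning probability of $\mathrm{Game}_2$ (main obstacle).} The remaining step is to show $W_2=0$. This is where the real work lies. $\mathrm{Game}_2$ is precisely the setting "signature or digest forgeries are impossible" assumed in Lemma~\ref{lemma:each_epoch}, Lemma~\ref{lemma:correct_execution}, the key-inconsistency results including Lemma~\ref{lemma:snapshot_exception}, and the main theorem (\textbf{Mainresult}). We invoke that theorem for every pair $u,v$ and every epoch $t+1$. It yields either $\post(t+1,u)=\post(t+1,v)$ together with matching digests, or a $\PoM$ generated within epochs $t+1,\dots,t+4$.

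We must also check that every such $\PoM$ is accepted by $\VerifyPoM$. The $\PoM$s produced by honest users consist only of server-signed objects: two distinct digests for the same epoch, a server-signed triple whose $\PoI$ fails, or two $\mathsf{Update}$ notifications for the same user less than 5 epochs apart. These therefore verify.

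\textbf{Connecting the theorem to the winning condition.} Honest users output $\post\gets\post(t-1)$ only with a delay, and with $\delay=3$ the output at epoch $e$ is the post that the theorem has already certified, or else a $\PoM$ has been produced. Two cases need care.
\begin{enumerate}
\item Users that are compromised and later reset are re-marked as uncompromised, so the theorem must cover them. This is the compromise-recovery scenario handled by Lemma~\ref{lemma:snapshot_exception}.
\item The $\thresh=1$ per-neighbourhood condition and the $\mingap=5$ reset gap enforced by the game must match the hypotheses of the lemmas.
\end{enumerate}
I expect the main difficulty to be aligning the delay and epoch indices in this final step.

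Combining the three bounds gives $\epsilon\le nE\cdot\Adv^{\mathrm{euf\text{-}cma}}+\Adv^{\mathrm{pos\text{-}bind}}$.
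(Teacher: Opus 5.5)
Your proposal follows the paper's proof essentially step for step: the same hops $G_0\to G_1\to G_2$ with a uniform guess of one of the $nE$ user keypairs for the EUF-CMA embedding, a position-binding reduction that simply runs the game with all honest secrets known, and $W_2=0$ obtained by appealing to the algorithmic main theorem (the paper also leaves the link to the delayed post output implicit, just as you leave it open). Your one deviation is that you exclude snapshot-compromised keys from the signature-forgery abort and let $\mathcal{B}_{SIG}$ abort if the embedded key is compromised; this is actually a needed repair, because the paper's SIG-Abort as written also fires on legitimate signatures under compromised keys, and its reduction claims a perfect simulation without saying how to reveal the embedded user's state on compromise.
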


\begin{proof}
We proceed via a sequence of three games $G_0, G_1, G_2$, all played with the same adversary $\mathcal{A}$.  Each game is identical to the previous one, except that it adds an abort condition that triggers precisely when $\mathcal{A}$ would otherwise need to break one of our two cryptographic primitives.  We write $W_i$ for the event that $\mathcal{A}$ wins game $G_i$.

\paragraph{Game $G_0$ (original game).}
$G_0$ is the security game of Figure~\ref{fig:security-game}, played exactly as written.
By definition, $\Pr[W_0] = \epsilon(\lambda)$.

\paragraph{Game $G_1$ (abort on signature forgery).}
$G_1$ is identical to $G_0$ with one additional rule:

\begin{quote}
  \textbf{SIG-Abort.}  Whenever any honest user processes an incoming message
  $(m, \sigma)$ that carries a valid signature under some other user $w$'s current verification key
  $\vk_{w}$ , the challenger checks whether
  $w$ ever issued $(m)$ as part of an honest protocol step.  If not — i.e.\ if
  $(m,\sigma)$ is a forgery — the game immediately aborts and $\mathcal{A}$ loses.
\end{quote}

\noindent\textit{Transition $G_0 \to G_1$ (Lemma~\ref{lem:hop1}).}
\begin{multline*}
    \Pr[W_0] - \Pr[W_1]
    \;\leq\;
    \Pr[\text{SIG-Abort in }G_1] \\
    \;\leq\;
    n \cdot E \cdot \Adv^{\mathrm{euf\text{-}cma}}_{\Sigma,\,\mathcal{B}_{SIG}}(\lambda).
\end{multline*}
The two games are identical until SIG-Abort fires, so $\Pr[W_0] \leq \Pr[W_1] +
\Pr[\text{SIG-Abort}]$.  We construct $\mathcal{B}_{SIG}$ that reduces $\Pr[\text{SIG-Abort}]$ to the EUF-CMA advantage; the reduction is given in Lemma~\ref{lem:hop1} below.

\paragraph{Game $G_2$ (abort on position-binding break).}
$G_2$ is identical to $G_1$ with one additional rule:

\begin{quote}
  \textbf{PB-Abort.}  Whenever any honest user $w$ calls 
  $\DCVerify(\pp,\allowbreak d, \username, \vk, \pi) = \accept$ and there exists another
  honest user $w'$ (or $w$ itself at a different point) that previously obtained
  $\DCVerify(\pp, d, \username, \vk', \pi') = \accept$ with $\vk \neq \vk'$
  for the same digest $d$ and username $\username$, the game immediately aborts and
  $\mathcal{A}$ loses.
\end{quote}

\noindent\textit{Transition $G_1 \to G_2$ (Lemma~\ref{lem:hop2}).}
\begin{multline*}
    \Pr[W_1] - \Pr[W_2]
    \;\leq\;
    \Pr[\text{PB-Abort in }G_2] \\
    \;\leq\;
    \Adv^{\mathrm{pos\text{-}bind}}_{\mathsf{DS},\,\mathcal{B}_{PB}}(\lambda).
\end{multline*}
Identical argument: games agree until PB-Abort fires, and $\mathcal{B}_{PB}$ reduces
$\Pr[\text{PB-Abort}]$ to $\Adv^{\mathrm{pos\text{-}bind}}$.

\paragraph{$G_2$ is unwinnable (Lemma~\ref{lem:g2zero}).}
\[
    \Pr[W_2] = 0.
\]
Under the two abort rules, all signatures received by honest users are genuine, and no digest can be opened to two different public keys for the same username.  We show in Lemma \ref{lem:g2zero} that these two guarantees together make it impossible for $\mathcal{A}$ to cause two uncompromised users to disagree on their posts without producing a valid $\PoM$ in $G_2$.

\paragraph{Combining.}
\begin{align*}
    \epsilon(\lambda) &= \Pr[W_0] \\
    &\leq \Pr[W_1] + n \cdot E \cdot \Adv^{\mathrm{euf\text{-}cma}}_{\Sigma,\,\mathcal{B}_{SIG}} \\
    &\leq \Pr[W_2] + \Adv^{\mathrm{pos\text{-}bind}}_{\mathsf{DS},\,\mathcal{B}_{PB}} \\
    &\quad + n \cdot E \cdot \Adv^{\mathrm{euf\text{-}cma}}_{\Sigma,\,\mathcal{B}_{SIG}} \\
    &= \Adv^{\mathrm{pos\text{-}bind}}_{\mathsf{DS},\,\mathcal{B}_{PB}}
          + n \cdot E \cdot \Adv^{\mathrm{euf\text{-}cma}}_{\Sigma,\,\mathcal{B}_{SIG}}.
\end{align*}
\end{proof}

It remains to prove the three lemmas.

\begin{lemma}[$G_0 \to G_1$]\label{lem:hop1}
Let $E$ denote the maximum number of distinct signing keypairs any single honest user
generates over the course of the game (i.e.\ $1$ plus the maximum number of resets
that any user performs; since $\mingap \geq 1$, we have $E \leq T + 1$ where $T$ is
the number of epochs).  Then
    \[
    \Pr[\textup{SIG-Abort in }G_1]
    \leq n \cdot E \cdot \Adv^{\mathrm{euf\text{-}cma}}_{\Sigma,\,\mathcal{B}_{SIG}}(\lambda).
    \]
\end{lemma}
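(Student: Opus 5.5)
We bound the probability of SIG-Abort with a standard guessing reduction to EUF-CMA. Index all honest signing keypairs generated during the game by pairs $(w,j)$ with $w \in G$ and $j \in \{1,\dots,E\}$: the $j$-th keypair of user $w$ is produced either by $\UserInit$ (for $j=1$) or by the $(j{-}1)$-th reset in Phase~1, Case~2 of $\UserProcessServerMessages.\Intermediate$. There are at most $n\cdot E$ such slots. The reduction $\mathcal{B}_{SIG}$ receives $\vk^{*}$ and a signing oracle from its EUF-CMA challenger. It picks $(w^{*}, j^{*})$ uniformly among the $n\cdot E$ slots and embeds $\vk^{*}$ as the $j^{*}$-th key of $w^{*}$. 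It generates every other keypair honestly, runs $\setup$ itself, and plays the challenger of $G_1$ toward $\mathcal{A}$.

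Whenever the simulated $w^{*}$ must sign with its $j^{*}$-th key, $\mathcal{B}_{SIG}$ forwards the message to the signing oracle. This covers three cases: the signed view in $\End$, a $\mathsf{IdentityFraudSign}$ response in $\OOBshare$, and any other honest signing step. $\mathcal{B}_{SIG}$ records each queried message $m$.

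The one delicate point is snapshot compromise. If $\mathcal{A}$ compromises $w^{*}$ while the $j^{*}$-th key is current, $\mathcal{B}_{SIG}$ cannot reveal $\sk$ and must abort. The plan is to argue that such signatures do not count as forgeries for SIG-Abort. The rule is meant to fire only for a valid signature under the current key of a user $w$ whose signing key has not been revealed: once $w$'s state is leaked, signatures under that key are legitimately producible by $\mathcal{A}$, and the consistency analysis handles them through the compromise model. So I would first read the abort condition as restricted to keys that were never exposed through the compromise oracle. Under that reading, whenever SIG-Abort fires on a key $(w,j)$, that key was never revealed, and conditioned on the guess $(w^{*},j^{*})=(w,j)$ the simulation is perfect up to that moment. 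The guess is independent of $\mathcal{A}$'s view, which matches $G_1$ exactly, since $\vk^{*}$ is distributed like an honestly generated key.

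When SIG-Abort fires on a pair $(m,\sigma)$ verifying under $\vk^{*}$ with $m$ never issued by $w^{*}$ in an honest step, then $m$ was never queried to the oracle, so $\mathcal{B}_{SIG}$ outputs $(m,\sigma)$ as a valid forgery. Since the slot that triggers the first abort equals the guess with probability $1/(nE)$, we get $\Pr[\text{SIG-Abort}] \le n\cdot E\cdot \Adv^{\mathrm{euf\text{-}cma}}_{\Sigma,\mathcal{B}_{SIG}}(\lambda)$. The bound $E \le T+1$ follows because the game permits at most one reset per epoch per user. The main obstacle is the compromise interaction and the precise meaning of "current key" when $\mathcal{A}$ supplies a fake key for $w$. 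The abort must refer to the key $w$ honestly generated, not to $pk(w,t,u)$ as stored by $u$, and this is what ensures every abort is a genuine forgery against an unexposed slot.
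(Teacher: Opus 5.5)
Your reduction matches the paper's: guess one of the $n\cdot E$ (user, keypair) slots uniformly, embed $\vk^{*}$ there, route that slot's honest signing steps through the EUF-CMA oracle, and output the signature on which SIG-Abort fires when it hits the guessed slot, losing the factor $n\cdot E$. You also handle snapshot compromise explicitly: SIG-Abort counts only keys not yet revealed, and $\mathcal{B}_{SIG}$ aborts if the guessed key's state is requested. The paper's proof passes over this when it asserts the simulation is perfect, yet it is needed, since otherwise an adversary could trigger SIG-Abort with probability $1$ by compromising a user and signing with the leaked key.
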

\begin{proof}
We build $\mathcal{B}_{SIG}$ as follows.

\begin{enumerate}
    \item \textbf{Setup.}  $\mathcal{B}_{SIG}$ receives a challenge verification key
          $\vk^*$ from the EUF-CMA challenger.  It picks a target user index $j \in [n]$
          and a target keypair index $k \in [E]$, each uniformly at random and
          independently.  Index $k = 1$ refers to user $j$'s initial keypair;
          index $k = \ell$ refers to the keypair produced at $j$'s $(\ell-1)$-th reset.
          $\mathcal{B}_{SIG}$ embeds $\vk^*$ as the verification key for user $j$'s
          $k$-th keypair.  All other keypairs for all users (including the remaining
          keypairs of user $j$) are generated honestly via $\Sigkeygen$.

    \item \textbf{Simulation.}  $\mathcal{B}_{SIG}$ runs $G_1$ faithfully.  Whenever
          an honest protocol step requires user $j$ to sign a message $m$ \emph{while
          using their $k$-th keypair}, $\mathcal{B}_{SIG}$ queries its EUF-CMA signing
          oracle to obtain $\sigma = \Sign(\sk^*, m)$ and uses that signature.  Signing
          requests for user $j$ under any other keypair are answered with the honestly
          generated secret key for that keypair.

    \item \textbf{Extraction.}  If SIG-Abort fires because a forged message under
          some honest user $j'$'s $k'$-th keypair is accepted, and $(j', k') = (j, k)$,
          then $\mathcal{B}_{SIG}$ outputs the forgery $(m^*, \sigma^*)$ to its EUF-CMA
          challenger.
\end{enumerate}

Note that SIG-Abort can only fire if the adversary produced a valid signature under some
honest user's verification key for a message that user never signed under that key.  Let
$(j', k')$ be the user index and keypair index of the forged key; our guess $(j, k) =
(j', k')$ is correct with probability $1/(n \cdot E)$.  Since we answer all signing
queries for user $j$'s $k$-th keypair faithfully via the EUF-CMA oracle (and all other
keypairs are simulated with honest secret keys), the simulation is perfect.  Therefore
\[
    \Adv^{\mathrm{euf\text{-}cma}}_{\Sigma,\,\mathcal{B}_{SIG}}(\lambda)
    \;\geq\; \frac{1}{n \cdot E} \cdot \Pr[\text{SIG-Abort}].
\]
Rearranging gives
\[
\Pr[\text{SIG-Abort}] \leq n \cdot E \cdot
\Adv^{\mathrm{euf\text{-}cma}}_{\Sigma,\,\mathcal{B}_{SIG}}(\lambda).
\]
Since $n$ and $E$ are both polynomial in $\lambda$, a negligible EUF-CMA advantage
implies negligible abort probability.
\end{proof}

\begin{lemma}[$G_1 \to G_2$]\label{lem:hop2}
    \[
    \Pr[\textup{PB-Abort in }G_2]
    \leq \Adv^{\mathrm{pos\text{-}bind}}_{\mathsf{DS},\,\mathcal{B}_{PB}}(\lambda).
    \]
\end{lemma}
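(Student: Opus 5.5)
We prove the bound by a direct, tight reduction: $\mathcal{B}_{PB}$ simulates game $G_2$ in full and outputs the two conflicting openings at the moment PB-Abort fires. No guessing is needed, because the position-binding game of Figure~\ref{fig:poibinding-game} has no secret to embed. The only public input is $\pp$, and $\mathcal{B}_{PB}$ can simply use the challenger's $\pp$ in place of running $\setup$ itself.

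\textbf{Construction of $\mathcal{B}_{PB}$.} On input $\pp$ from the position-binding challenger, $\mathcal{B}_{PB}$ sets $\pp$ as the output of $\setup(1^\lambda)$, which is just $\DCSetup(1^\lambda)$ in \fortress. It then plays the challenger of $G_2$ against $\mathcal{A}$ exactly as specified:
\begin{itemize}
\item it generates all honest users' signing keys itself;
\item it runs $\UserInit$, $\NeighborShare$, and all phases of $\UserProcessServerMessages$, including every $\OOBshare$ invocation;
\item it answers compromise queries by revealing the states it holds;
\item it enforces the SIG-Abort rule, which it can check because it knows which messages each honest user signed.
\end{itemize}
Along the way it keeps a log of every tuple $(d, \username, \vk, \pi)$ on which some honest user ran $\DCVerify$ and obtained $\accept$. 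Whenever a new accepting call $(d, \username, \vk, \pi)$ occurs, it searches the log for an earlier entry $(d, \username, \vk', \pi')$ with $\vk' \neq \vk$. If one exists, this is exactly the PB-Abort condition, and $\mathcal{B}_{PB}$ outputs $(\ell, \digest, v_0, \pi_0, v_1, \pi_1) = (\username, d, \vk', \pi', \vk, \pi)$ to its challenger.

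\textbf{Analysis.} The simulation is perfect: $\pp$ is distributed exactly as in $G_2$, and every other component is computed honestly by $\mathcal{B}_{PB}$ with the same randomness distribution. So the probability that PB-Abort fires in the simulation equals $\Pr[\text{PB-Abort in }G_2]$. When it fires, the output tuple satisfies all three winning conditions of the position-binding game:
\begin{itemize}
\item $v_0 \neq v_1$ by the abort condition;
\item both $\DCVerify$ calls accepted, and $\DCVerify$ is deterministic with the same $\pp$.
\end{itemize}
Hence $\Pr[\text{PB-Abort in }G_2] \le \Adv^{\mathrm{pos\text{-}bind}}_{\mathsf{DS},\,\mathcal{B}_{PB}}(\lambda)$. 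The running time of $\mathcal{B}_{PB}$ is that of $\mathcal{A}$ plus a polynomial simulation overhead, with the log having at most polynomially many entries.

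\textbf{Main obstacle.} The one point needing care is that $\DCVerify$ must be deterministic, or at least that its recorded accept verdicts replay under the challenger's check. I would state this assumption explicitly; it holds for the Merkle-tree instantiation. If $\DCVerify$ were randomized, I would instead argue through the game's ``accept with probability at least as high'' semantics, but I expect determinism to suffice.
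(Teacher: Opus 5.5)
Your proposal is correct and matches the paper's proof: embed the challenger's $\pp$ as the output of $\setup$, simulate $G_2$ faithfully (halting on SIG-Abort), and when PB-Abort fires output the two conflicting accepting openings $(d,\username,\vk_0,\pi_0,\vk_1,\pi_1)$, which gives a tight bound with no guessing loss. Your explicit remark that $\DCVerify$ must be deterministic, so that the recorded accept verdicts replay in the challenger's check, is an assumption the paper leaves implicit; stating it makes the argument slightly more careful.
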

\begin{proof}
We build $\mathcal{B}_{PB}$ that interacts with the position-binding game of Figure~\ref{fig:poibinding-game} as follows.

\begin{enumerate}
    \item \textbf{Setup.}  $\mathcal{B}_{PB}$ receives $\pp \gets \DCSetup(1^\lambda)$
          from the challenger.  It then
          sends $\pp$ to the adversary as the output of Setup in the Consistency game.

    \item \textbf{Simulation.}  $\mathcal{B}_{PB}$ runs the rest of $G_2$ faithfully, halting if
          SIG-Abort fires (which occurs with negligible probability by
          Lemma~\ref{lem:hop1}).

    \item \textbf{Extraction.}  If PB-Abort fires because two calls
          \begin{gather*}
              \DCVerify(\pp, d, \username, \vk_0, \pi_0) = \accept, \\
              \DCVerify(\pp, d, \username, \vk_1, \pi_1) = \accept, \\
              \vk_0 \neq \vk_1,
          \end{gather*}
          both occurred for honest users, $\mathcal{B}_{PB}$ outputs
          $(d,\,\username,\allowbreak\vk_0,\,\pi_0,\,\vk_1,\,\pi_1)$ to the
          position-binding challenger.
\end{enumerate}

Both $\pi_0$ and $\pi_1$ are valid proofs, and the output is a winning transcript for the position-binding game.
Therefore
\[
    \Adv^{\mathrm{pos\text{-}bind}}_{\mathsf{DS},\,\mathcal{B}_{PB}}(\lambda)
    \;\geq\; \Pr[\text{PB-Abort in }G_2]. \qedhere
\]
\end{proof}

\begin{lemma}[$G_2$ is unwinnable]\label{lem:g2zero}
    $\Pr[W_2] = 0$.
\end{lemma}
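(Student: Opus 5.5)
The plan is to reduce $\Pr[W_2]=0$ to the main result stated earlier (Theorem \emph{Mainresult}), which is formulated exactly for executions in which signature and digest forgeries are impossible. That is precisely what $G_2$ guarantees once the SIG-Abort and PB-Abort rules are in force.

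First I would argue that any execution of $G_2$ that does not abort satisfies the hypotheses of Lemma~\ref{lemma:each_epoch}, Lemma~\ref{lemma:correct_execution} and the inconsistency-detection results. Every signature an honest user accepts under a neighbour's current key was genuinely produced by that neighbour; a snapshot-compromised user counts as having ``issued'' what the adversary signs, and this case is handled separately by Lemma~\ref{lemma:snapshot_exception}. In addition, no digest $d$ opens to two keys for one username. The argument then proceeds by contradiction. Suppose the adversary wins $G_2$: there is an epoch $e$ and uncompromised users $u,v$ with $\post_u^{(e)}\neq\post_v^{(e)}$, and no valid $\PoM$ is ever produced.

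Since $\End$ outputs $\post(t{-}1)$, the posts output at epoch $e$ correspond to the server posts stored as $\post(t{+}1,\cdot)$ for $t=e-2$. Adjusting the index with $\delay=3$, I would apply Theorem \emph{Mainresult} at the appropriate $t$. Because $G$ is connected, I would take a path $u=w_0,\dots,w_k=v$ and apply the theorem to each adjacent pair. Each pair either stores equal digests and posts by epoch $t{+}4$, or a $\PoM$ is generated in the window $t{+}1,\dots,t{+}4$. In the first case, transitivity along the path gives $\post_u^{(e)}=\post_v^{(e)}$, a contradiction. In the second case, I must check that the generated $\PoM$ actually verifies under $\VerifyPoM$. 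It consists of server-signed objects, and in $G_2$ honest users only act on server signatures that verify (Phase 1, step 2 checks them). The $\PoM$ therefore matches one of forms (i)–(iii) with valid server signatures, which contradicts the second winning condition.

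Two subtleties remain. The first is that the game only allows $\thresh=1$ compromised neighbour per neighbourhood and requires $\mingap=5$; the earlier lemmas (the majority rule with $d/2+1$ vouches, and compromise-recovery) must be invoked under exactly these parameters, so I would verify that the abort conditions of the game enforce them. The second is that ``uncompromised'' users in the winning condition may have been compromised and later reset. Lemma~\ref{lemma:snapshot_exception} shows such transient inconsistencies never reach neighbours, so the path argument still goes through.

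The main obstacle is the alignment of epoch indices between the post recorded at epoch $e$ in the game and the $\post(t{+}1,\cdot)$ of the theorem. I would also need to make sure a $\PoM$ raised at epochs $t{+}1$ through $t{+}4$ lies within the game's lifetime. For the second point, I would argue that if the game ends earlier, the posts at $e$ were produced using the same stored values that Lemma~\ref{lemma:each_epoch} already forces to agree.
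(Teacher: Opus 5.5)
Your overall route is the paper's: the paper discharges this lemma with the algorithmic section culminating in the \emph{Mainresult} theorem (Section~\ref{sec:digest-conclusion}), which is stated exactly for executions without signature or digest forgeries. Your path-and-transitivity step is redundant, since that theorem already quantifies over arbitrary $u,v\in G$ and uses connectivity inside its proof.

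The gap is in the step you flag yourself. You first write $t=e-2$, which reflects $\End$ releasing $\post(t{-}1)$ in the epoch it is received. Under that rule the theorem's horizon $t+4=e+2$ lies after the output epoch, and the reduction genuinely fails. A server can equivocate on the post it delivers in epoch $e$ and the game can end there; the conflicting signed views would only be exchanged in later epochs, so no $\PoM$ is generated. Your fallback does not rescue this. Lemma~\ref{lemma:each_epoch} only says which signed messages are delivered and under which keys; it forces no stored values to agree.

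The missing idea is that the $\delay=3$ output rule must be built into $\End$, and the theorem instantiated with $t=e-4$. The post output at epoch $e$ is then the one received at epoch $e-3$ and stored as $\post(t{+}1,\cdot)$. That is exactly what the theorem equates at epoch $t+4=e$. Its $\PoM$ window $e-3,\dots,e$ also lies inside any game that reaches epoch $e$, so no ``game ends early'' case exists.

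Separately, your argument that the generated $\PoM$ verifies does not follow. Valid server signatures are necessary, but $\VerifyPoM$ accepts only three content patterns. You must trace each $\PoM$ in the case analysis to one of them:
\begin{itemize}
\item a digest mismatch gives form (ii);
\item a failed $\DCVerify$ gives form (i);
\item two $\mathsf{Update}$s within $5$ epochs give form (iii), with a genuine double reset excluded by the $\mingap$ abort.
\end{itemize}
This does not cover the case your contradiction hypothesis is about: two users holding the same digest but different posts. The \emph{Mainresult} proof has the receiver ``generate PoM with contradiction with his own value''. But no step of Phase~4 compares posts, and two server-signed bundles with equal epoch-tagged digests and different posts match none of (i)--(iii).

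This hole is inherited from the paper rather than introduced by you, but your justification claims it is closed. Closing it requires binding the post into what form (ii) compares. One option is to commit to the post inside the digest; another is to add a post-equivocation pattern to $\VerifyPoM$ and recheck non-frameability.
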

This we will prove in the following section.

\subsection{Algorithmic proof}\label{sec:full_algorithmic_proofs}
We pay reader's attention that some of theorems stated in this section were already formulated in the main part of the paper and thus are not always have augmenting numbering.

\subsubsection{Lemmas about the execution of the protocol on each epoch}\label{sec:per_epoch}


\begin{framed}
\emph{Reminder about communication at epoch $t$: } 
\begin{itemize}
    \item At the end of epoch $t$ user $u$ sends messages signed by $pk(u, t, u)$ and if $u$ did not reset at epoch $t$ than $pk(u, t, u):= pk(u, t-1, u)$.
    \item For all the messages $u$ receives at epoch $t$ from a user $v$, $u$ should verify that they are signed correctly with $pk(v, t-1, u)$. This value is either set to $pk(v, t-1, u):= pk(v, t-2, u)$ stored before, or set to a value just received from a server in case of update notification.
\end{itemize}   
\end{framed}

\begin{lemma}\label{lemma:1}
    If a user $u$ did not reset at epoch $t+1$, then in the end of epoch $t+1$ for any $v\in N_0(u)\cup N_1(u)$, $u$ stores $pk(v,t, u)$, $d(t, u)$, $PoI(pk(v,t, u), d(t, u))$ as well as $post(t+1, u)$ all signed by the server. 
\end{lemma}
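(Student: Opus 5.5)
The plan is to prove Lemma~\ref{lemma:1} by tracing the honest code of $u$ through epoch $t+1$ in game $G_2$, using only what the protocol forces $u$ to parse and check. Since $u$ did not reset at epoch $t+1$, its state is not $\bot$ when $\UserProcessServerMessages.\Init$ is called. So we are in Case~1 of Phase~1, and none of the $\mathsf{NewDevice}$/$\mathsf{RecentReset}$ $\OOBshare$ calls that could overwrite these slots with another user's data are executed. The argument therefore reduces to showing that the server bundle $\msgs'_u$ delivered at the boundary into epoch $t+1$ supplies every listed item, and that $u$ either stores each item with a valid server signature or aborts with a $\PoM$.

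The first step is to read off $\UserProcessServerMessages.\Init$ at epoch $t+1$ (so ``$t-1$'' there becomes $t$). It parses the server-signed digest $d(t,u)$. For every $v \in N_0(u)\cup N_1(u)\cup\{u\}$ it parses the key $pk(v,t,u)$ with its $\PoI$, and it parses $\post(t+1)$ signed by the server. By step~5 of $\ServerProcessMessages$, all of these lie inside a single bundle signed atomically under $\vk$. So if $u$ has them at all, it has them under one server signature bound to the epoch tag $t$.

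The second step handles a malicious server that omits fields or sends bad signatures. Phase~1, Case~1, item~2 verifies the server signature on every received message, and failure yields a $\PoM$. Phase~4, Case~1 item~3 and Case~2 item~3 run $\DCVerify$ on $\PoI(pk(v,t,u),d(t,u))$, and rejection yields a $\PoM$ of form~(i). Thus in every run where no $\PoM$ is produced, the stored triple $(pk(v,t,u), d(t,u), \PoI)$ is exactly the server-signed one. The only exception would be the Case~2 $\mathsf{KeyAdoption}$ $\OOBshare$ or the identity-fraud correction $\rho$, so I will check that these write only auxiliary slots or server-signed values obtained from the responder, and not unsigned replacements.

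The main obstacle is the bookkeeping in Phase~4, Case~1 item~4, ``set $pk(v,t-1,u)\gets pk(v,t-2,u)$''. Read literally, this seems to replace the freshly signed key with an older one. I would argue that when no $\mathsf{Update}$ for $v$ arrives, the two values must coincide: both are openings of server-signed digests for $v$, and an honest user $v$ cannot change its key without a reset. A discrepancy would therefore give either a missing-$\mathsf{Update}$ inconsistency, which Lemma~\ref{lemma:each_epoch} treats separately, or two openings against which $u$ itself holds server signatures. To keep this lemma self-contained, I expect to state that the stored $pk(v,t,u)$ is the one carried in the signed bundle and defer any key-equality question to the consistency lemmas. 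If that reading fails, I would weaken the claim to ``signed by the server in some epoch's bundle received by $u$'', which is what later lemmas actually use.
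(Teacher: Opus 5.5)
Your second step shows only that \emph{whatever} $u$ holds at the end of epoch $t+1$ is the server-signed content of the epoch-$t$ bundle (unless a $\PoM$ is output). It never shows that $u$ actually holds $d(t,u)$, $pk(v,t,u)$, its $\PoI$ and $\post(t+1,u)$, and that is what the lemma asserts. The case you announce, a server that omits fields, is not covered by the checks you cite. Suppose the server withholds $u$'s bundle, or drops the entry for some $v\in N_0(u)\cup N_1(u)$. Then no received signature fails and no $\PoI$ is rejected by $\DCVerify$. Nor is there any server-signed object from which one of the three $\VerifyPoM$ patterns could be built. In such a run $u$ stores no item and produces no $\PoM$, so your dichotomy ``stores it or outputs a $\PoM$'' fails. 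Your phrase ``if $u$ has them at all'' is exactly where the argument leaks. The bad-signature branch does not rescue it either. $\VerifyPoM$ accepts only if every server signature in the $\PoM$ verifies under $\vk$, so a bundle with an invalid signature is, as evidence, no different from no bundle.

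The missing idea is the one the paper's proof rests on. When the server fails to deliver a message or part of one, $u$ gets the data through an $\OOBshare$ call to the respective neighbour. For a two-hop $v$ this is a common neighbour $w\in N_0(u)\cap N_0(v)$; cf.\ Phases~2--3 and reason~5 in Appendix~\ref{sec:OOB_reasons}. The responder then hands over its own server-signed copy; for example, $\mathsf{DigestExchange}$ returns the digest together with the server's signature. This fallback is what makes the conclusion unconditional. That matters because Lemma~\ref{lemma:2} uses this lemma to guarantee that a stable neighbour really holds these items for a resetting user to fetch. A version weakened to ``stores them or a $\PoM$ is produced'' would have to be threaded through that lemma and the corollary.

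Separately, the check you plan for the identity-fraud path will not go through. $\mathsf{IdentityFraudDeliver}$ overwrites the responder's stored key for the victim with the neighbour-cosigned correction $(\rho_u,\Sigma_u)$, which is not a server-signed opening. That path has to be carved out explicitly; the paper's proof is silent on it as well.
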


\begin{proof}
In the beginning of each epoch $u$ receives all this information from the server or calls OOB to the respective neighbor in $\UserProcessServerMessages\Intermediate^{\OOBshare}$, phase ``Precaution in case of server’s misbehavior''.
\end{proof}

\begin{lemma}\label{lemma:2}
    If user $u$ resets at epoch $t+1$, then in the end of epoch $t+1$ for any $v\in N_0(u)$, $u$ stores $pk(v,t, u)$, $d(t, u)$, $PoI(pk(v,t, u), d(t, u))$ as well as $post(t+1, u)$ all signed by the server. 
\end{lemma}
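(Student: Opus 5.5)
The plan is to trace what a user $u$ that resets at epoch $t+1$ does inside $\UserProcessServerMessages$ during that epoch, and to show that every item in the statement ends up in $\st_u$ with a server signature attached. Since $\st_u=\bot$ at the start of the epoch, Phase~1 runs Case~2.

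\textbf{Step 1: the $\mathsf{NewDevice}$ call.}
By the reset-gap assumption ($\mingap=5$), $u$ can pick a neighbour $v_\star\in N_0(u)$ that has not reset in the last $5$ epochs and invoke $\OOBshare$ with $\tau=\mathsf{NewDevice}$. I will argue that such a $v_\star$ exists: in $G_2$ the adversary cannot make all direct neighbours look recently reset without triggering a form~(iii) $\PoM$ or the threshold check. If it cannot be guaranteed in full, I would state it as a premise inherited from the protocol's Case~2 step~2.

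Because $v_\star$ did not reset at epoch $t+1$, Lemma~\ref{lemma:1} applies to $v_\star$ at the previous epoch boundary. So $v_\star$ holds server-signed $pk(w,t,v_\star)$ and $\PoI$ for every $w\in N_0(u)$. These lie in $N_0(v_\star)\cup N_1(v_\star)$, as needed, because $u\in N_0(v_\star)$ and hence $N_0(u)\subseteq N_0(v_\star)\cup N_1(v_\star)\cup\{v_\star\}$. It also holds the server-signed digests and posts of the last three epochs.

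The $\mathsf{NewDevice}$ branch of $\OOBshare$ copies exactly these values into $\st_u$. Signatures are copied along with the values, so $u$ stores $pk(w,t,u):=pk(w,t,v_\star)$, $d(t,u):=d(t,v_\star)$ and $\PoI(pk(w,t,u),d(t,u))$, all server-signed.

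\textbf{Step 2: the post.}
Independently of the OOB call, $\UserProcessServerMessages.\Init$ parses the server-signed $\post(t)$, and the message also carries $\post(t+1)$ as a field of the bundle signed atomically in step~5 of $\ServerProcessMessages$. This gives the signed $\post(t+1,u)$. If that field is missing, I would fall back on the post obtained from $v_\star$, which is signed by the same argument.

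\textbf{Step 3: recently reset neighbours.}
For any $w\in N_0(u)$ reported to have reset at $t$ or $t-1$, the value $v_\star$ holds for $w$ may be an update that $v_\star$ has not yet accepted. Phase~1 step~3 handles this with a $\mathsf{RecentReset}$ call to $w$ itself. That call overwrites $pk(w,t,u)$ with the value stored by $w$, together with the server-signed $\PoI$ against $d(t,w)$. I then need to check two things:
\begin{itemize}
\item the overwritten key and $\PoI$ are still tied to a server-signed digest;
\item if $d(t,w)\neq d(t,v_\star)$, the two server-signed epoch-tagged digests form a form~(ii) $\PoM$, so the storage claim only needs to hold in the non-$\PoM$ branch.
\end{itemize}
Otherwise, under the PB-Abort rule of $G_2$, position binding forces the two $\PoI$s to agree on the key.

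\textbf{Main obstacle.}
I expect the hardest part to be justifying that $v_\star$ really does satisfy the hypothesis of Lemma~\ref{lemma:1} at the right epoch, i.e. that it holds epoch-$t$ data when $u$ queries it during epoch $t+1$. This depends on the processing order $\sigma_0$: $v_\star$ may not yet have processed epoch $t+1$. I would resolve this by noting that $\OOBshare$ reads $v_\star$'s state from the end of epoch $t$, and that this state is exactly what Lemma~\ref{lemma:1}, applied at epoch $t$, guarantees. I would then reindex accordingly.
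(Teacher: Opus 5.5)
Your Step~1 is the paper's proof. It makes $u$'s $\mathsf{NewDevice}$ call to a neighbour $v_\star$ that has not reset recently, applies Lemma~\ref{lemma:1} to $v_\star$ at epoch $t+1$, and copies the server-signed keys, $\PoI$s, digest and post into $\st_u$ (using $N_0(u)\subseteq N_1(v_\star)$). The paper takes the post from $v_\star$ as well. Your Steps~2 and~3 go beyond it: the paper never considers the later $\mathsf{RecentReset}$ overwrite, and you leave the first check of your Step~3 unresolved.

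Your resolution of the ``main obstacle'' is wrong, and carrying it out would break the statement. $v_\star$'s end-of-epoch-$t$ state cannot contain $d(t)$, because the server only commits to $d(t)$ at the $t/t{+}1$ boundary. Lemma~\ref{lemma:1} applied at epoch $t$ gives only $d(t-1,v_\star)$ and $pk(\cdot,t-1,v_\star)$. ``Reindexing'' would therefore leave $u$ holding epoch-$(t-1)$ data at the end of epoch $t+1$, contradicting both the lemma and your own Step~1. The obstacle is not real. In the game of Fig.~\ref{fig:security-game}, $\UserProcessServerMessages.\Init$ runs for every user before any $\Intermediate$ phase. So when $u$'s Phase-1 call executes, $\st_{v_\star}$ already contains $v_\star$'s server-signed epoch-$(t+1)$ bundle ($d(t)$, keys, $\PoI$s, post), whatever $\sigma_0$ is. This is what the paper tacitly relies on when it invokes Lemma~\ref{lemma:1} at epoch $t+1$; its proof of that lemma says the data arrives at the beginning of the epoch.

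Your justification for the existence of $v_\star$ also fails. $\mingap$ constrains each user separately, so the adversary may legitimately place all of $u$'s neighbours in $\mathcal{R}$ within the last five epochs. That produces no form-(iii) $\PoM$, and the threshold check only triggers $\OOBshare$ calls; it does not prevent the situation. The paper simply presupposes such a neighbour (Phase~1, Case~2, step~2). You should state it as an explicit premise rather than claim to derive it from $G_2$.
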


\begin{proof}
In the middle of epoch $t+1$  $u$ calls OOB some user $w\in N_0(u)$ who did not reset in the last epochs (i.e 5 epochs) in $\UserProcessServerMessages\Intermediate^{\OOBshare}$ phase ``Precaution in case of server’s misbehavior''. By previous lemma for any $v\in N_0(u)\subset N_1(w)$ $w$ is aware of $pk(v,t, w), d(t, w)$, $PoI(pk(v,t, w)$ $, d(t, w)),$$ post(t+1, w)$ signed by the server. They are thus transmitted to $u$ and stored respectively as $pk(v,t, u), d(t, u)$, $ post(t+1, u)$, $PoI(pk(v,t, u), d(t, u))$.  
\end{proof}

\EachEpoch*

\begin{proof}
    Combines lemmas~\ref{lemma:1} and ~\ref{lemma:2}.
\end{proof}

%

\subsubsection{Correct execution of the protocol}\label{sec:correct_execution}

\begin{observation}
If protocol described above is executed correctly by both server and all clients, then on every epoch $t$ digests $d(t-1, .)$ stored by any two users are equal; and if user $u$ resets on epoch $t$, then all users in $N_0 \cup N_1$ receive this update correctly at epoch $t+1$ (but may not have accepted it yet).
\end{observation}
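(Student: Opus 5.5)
The observation has two parts: agreement of digests in every epoch, and timely delivery of every reset to $N_0\cup N_1$. I would prove both by walking through one epoch of \fortress\ with an honest server and honest clients, and then inducting on the epoch $t$.

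\textbf{Digest agreement.} Fix an epoch $t$. In $\ServerProcessMessages$ the honest server does three things:
\begin{itemize}
\item it assembles a single directory $D^{(t)}$ in step~3;
\item it calls $\DCCommit$ exactly once in step~4, producing $d(t{-}1)$;
\item in step~5 it puts the same epoch-tagged pair $(t{-}1,d(t{-}1))$ into every bundle $\msgs'_u$.
\end{itemize}
A user who did not reset parses $d(t{-}1,u)$ straight from its bundle in $\UserProcessServerMessages.\Init$, so $d(t{-}1,u)=d(t{-}1)$. This is the same argument as case (F2) in the proof of Theorem~\ref{thm:pom-soundness}. A user who reset obtains the digest through the $\mathsf{NewDevice}$ call to some $v_\star$ that has not reset in the last $5$ epochs, as in Lemma~\ref{lemma:2}. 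That $v_\star$ holds its own server-delivered copy, which equals $d(t{-}1)$ by the previous sentence. So every user stores the same $d(t{-}1,\cdot)$.

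I also need to check that an honest run never overwrites the stored digest. $\mathsf{IdentityFraudDeliver}$ is the only path that could replace a stored key, and it only runs when the server encodes a wrong key for some $u$. The honest server records exactly the key submitted in $u$'s $\mathsf{KeyUpdate}$, so this trigger never fires. Since the digest agrees in every epoch, this holds for all $t$ and no induction hypothesis on earlier digests is needed.

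\textbf{Reset delivery.} Suppose $u$ resets at epoch $t$. In $\End$, $u$ emits $\mathsf{KeyUpdate}(pk(u,t,u))$. The minimum-gap constraint $\mingap=5$ in the game guarantees that step~3 of the next $\ServerProcessMessages$ treats this as an \emph{allowed} update. The server therefore records $pk(u,t)$ in $D$ and places $\mathsf{Update}(t,u,pk(u,t))$, with a valid $\PoI$ by correctness of $\mathsf{DC}$, into the bundle of every $v$ with $u\in N_0(v)\cup N_1(v)$. Neighbourhoods are symmetric, so this is exactly $N_0(u)\cup N_1(u)$. At epoch $t{+}1$ each such $v$ parses the update in $\Init$ and handles it in Phase~4, Case~2. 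There the key is stored as $pk(u,t,v)$ with $\mathit{Accept}$ set to $\mathsf{False}$, and the $\PoI$ check passes. This is precisely ``received correctly but not yet accepted.''

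\textbf{Main obstacle.} The delicate point is the indexing. I have to confirm that the key $u$ generates at epoch $t$ is the one the server commits in $d(t)$, and that it is delivered in the bundle processed at epoch $t{+}1$, not $t{+}2$. This depends on the convention that messages sent at the end of epoch $t$ are handled at the epoch boundary and appear in $\Init$ of epoch $t{+}1$. I would settle this first by tracing the notation of Lemma~\ref{lemma:each_epoch}, then check the case where the receiver $v$ itself reset at $t{+}1$: it learns $u$'s key through the $\mathsf{RecentReset}$ and $\mathsf{NewDevice}$ OOB responses, which carry the same server-committed key.
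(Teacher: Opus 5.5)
Your proposal is correct and matches the paper's reasoning: the paper gives no separate proof of this observation, and the opening paragraph of the proof of Lemma~\ref{lemma:correct_execution} uses the same split (direct delivery in the honest server's bundle, or an OOB call to a non-reset direct neighbour when the receiver itself resets), while your digest argument is the (F2) reasoning from Theorem~\ref{thm:pom-soundness}. Two small repairs are needed. First, since the game runs $\UserProcessServerMessages.\Init$ on the server bundle even after $\st_v \gets \bot$, a receiver that resets at $t{+}1$ already gets $\mathsf{Update}(t,u,\cdot)$ and $d(t)$ from its own bundle; you need this for two-hop receivers, because $\mathsf{NewDevice}$ and $\mathsf{RecentReset}$ only carry direct-neighbour keys. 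Second, $\mathsf{IdentityFraudDeliver}$ is not the only OOB path that writes keys or digests ($\mathsf{NewDevice}$, $\mathsf{RecentReset}$, $\mathsf{KeyAdoption}$ and $\mathsf{DigestExchange}$ do too, and some of them fire in honest runs), so your no-overwrite step should instead argue that every OOB write copies values that originated in honest server bundles and therefore changes nothing.
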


\CorrectExecution*
\begin{proof}
First, we observe if $v$ resets at $t$, then at epoch $t+1$ all users $w\in N_0(v)$ receive the reset of $v$. It is received directly from the server together with the current digest if user $w$ did not reset at epoch $t+1$. Or if $w$ resets at $t+1$, then still there is at least one of other users in $N_0(w)$ who did not reset at $t+1$, got reset of $v$ from the server and after being reached by $w$ OOB got the reset of $v$. So also, at the end of epoch $t+1$ all users in $N_0(v)$ send confirmation of reset to each other. 

Afterwards, if $u$ does not update at epoch $t+2$, then $u$ accepts the reset either by calling $v$ OOB or by receiving enough digests $d(t, .)=d(t, u)$ from users in $N_0(v)$ with  public keys accepted by $u$. 
If $u$ resets at $t+2$,  $u$ immediately calls OOB one of the direct neighbors ($N_0(u)$), gets to know about reset of $v$ and calls $v$ OOB, thus accepting the reset. 

This concludes the proof that for any reset at epoch $t$, at epoch $t+2$ $u$ sets $Accept(pk(v, t, u)):=True$. 
\end{proof}

Now we can use these results to detect misbehavior when incorrectness in public key transmission happens. 




\subsubsection{Incorrect transmission of public keys} \label{sec:incorrect_transmission_proofs}

\Inconsistency*
In this section, we show that if there is any inconsistency, then either 1) inconsistent key in the very next epoch will stop being used (users storing them will reset); or 2) it will be detected and users will generate PoM.

First, we look at a special case that causes option 1).  

\begin{definition}
    Consider a situation when the server compromises user Alice($A$), $A$ gets incorrect digest from the server ($d(t, A)$) at epoch $t+1$, the server instead of messages of $A$ sends to all $v\in N_0(A)\cup N_1(A)$ messages replaced with those of correct information and digest, and $A$ resets at epoch $t+2$. We call this a \emph{``Compromise-recovery''} scenario.
\end{definition}

\begin{lemma}
\label{lemma:snapshot_exception}
     If ``Compromise-recovery'' scenario happens and there were no inconsistencies before epoch $t+1$ then either all the inconsistencies of Alice($A$) at epoch $t+1$ will be never passed to any other user and will disappear at epoch $t+2$, or PoM is generated at epoch $t+1$.    
\end{lemma}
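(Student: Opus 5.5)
We will argue entirely inside game $G_2$, so no signature under an honest user's current key can be forged and no digest opens to two different keys for one username. The overall strategy is to trace every channel through which Alice's wrong view at epoch $t{+}1$ (the incorrect digest $d(t,A)$ and any keys $pk(\cdot,t,A)$ it encodes) could reach another user. We then show that each channel is blocked, or forces a server-signed contradiction that Alice or a neighbour turns into a $\PoM$ at epoch $t{+}1$. The channels are:
\begin{enumerate}
\item[(a)] Alice's outgoing signed view at the end of epoch $t{+}1$;
\item[(b)] $\OOBshare$ calls answered by Alice (as a stable neighbour for $\mathsf{NewDevice}$, for $\mathsf{RecentReset}$, $\mathsf{DigestExchange}$, $\mathsf{KeyAdoption}$, or an identity-fraud co-signature);
\item[(c)] the keys Alice's neighbours use to verify her messages.
\end{enumerate}

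For channel (a), the scenario stipulates that the server delivers, in place of Alice's real messages, messages carrying the correct digest, signed with Alice's leaked key. By Lemma~\ref{lemma:each_epoch} every $v\in N_0(A)\cup N_1(A)$ at epoch $t{+}2$ receives $d(t,A)$ as it appears in these replaced messages, so $v$ only ever sees the correct digest. Alice's honest messages are never delivered. Since Alice resets at $t{+}2$, her state is set to $\bot$. Her next signed view is therefore built from data fetched via $\mathsf{NewDevice}$ from a stable neighbour, which by Lemma~\ref{lemma:2} holds the correct server-signed digest. Hence nothing from $d(t,A)$ survives into $t{+}2$ through Alice's own state.

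For channel (b), we case on which $\OOBshare$ types can hit Alice during epoch $t{+}1$:
\begin{itemize}
\item \textbf{$\mathsf{DigestExchange}$ or $\mathsf{KeyAdoption}$.} Alice hands over her server-signed $d(t,A)$. The caller compares it with its own server-signed $d(t,\cdot)$, and since $d(t,A)$ differs, a form (ii) $\PoM$ is produced at epoch $t{+}1$.
\item \textbf{$\mathsf{NewDevice}$ or $\mathsf{RecentReset}$ from a resetting neighbour $w$.} Alice may only be chosen for $\mathsf{NewDevice}$ if she did not reset in the last $5$ epochs. Because Alice resets at $t{+}2$, the $\mingap=5$ constraint guarantees no other reset of Alice is near. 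We then use Phase~1 Case~2 step~3 together with the rule that $w$ must, in later epochs, compare digests with its neighbours. If $w$ copies $d(t,A)$, then at $t{+}2$ $w$ receives server-signed digests from its other neighbours (Lemma~\ref{lemma:each_epoch}), and the mismatch yields a $\PoM$. To confine this to epoch $t{+}1$ as the statement requires, we will instead argue that $w$ already holds a server-signed $d(t,w)$ from its own server bundle at $t{+}1$ and compares it on receipt.
\item \textbf{Identity-fraud co-signing.} Alice signs only $\rho_u$, which carries $u$'s own server-signed digest, so no wrong key of Alice's propagates.
\end{itemize}

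For channel (c), Alice's key itself is not wrong in anyone's store: the replaced messages verify under Alice's true $pk(A,t,A)$, so her neighbours' stored key is correct. After her reset at $t{+}2$, the new key is handled by the ordinary reset path, which by assumption on earlier epochs is consistent. Combining (a)--(c): either a $\PoM$ is produced at $t{+}1$, or no user other than Alice ever stores or uses anything derived from $d(t,A)$, and Alice's wrong state is erased by the reset at $t{+}2$.

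The main obstacle is channel (b) in the case where a resetting neighbour $w$ picks Alice for $\mathsf{NewDevice}$ during epoch $t{+}1$. There the wrong digest is copied wholesale, and we must pin down exactly when $w$'s comparison fires so that the $\PoM$ appears at epoch $t{+}1$ rather than later. We would first try the argument that $w$'s own server bundle for $t{+}1$ already contains a server-signed $d(t,w)$ conflicting with Alice's copy. If the protocol text does not support that immediate comparison, we would fall back to showing the $\PoM$ arises at $t{+}2$ and note that the lemma should be read with that slack.
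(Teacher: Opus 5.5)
Your handling of Alice's outgoing messages (channel (a)) matches the paper. The case you single out as the main obstacle, however, is left open: a neighbour that resets at $t{+}1$ and picks Alice as its $\mathsf{NewDevice}$ responder. This is the only case the paper's proof actually has to handle, and the immediate comparison you propose does not go through, for two reasons. First, in \fortress\ a resetting user has $\st=\bot$ and gets its digests, keys and posts \emph{from} this very call (Lemma~\ref{lemma:2}), so it holds no independent $d(t,w)$ to check Alice's copy against. Second, even if $w$ kept the digest field of its own bundle, the adversarial server chooses that field. It can hand $w$ exactly the bad digest it gave Alice, and then no conflicting pair of server-signed digests exists at $t{+}1$. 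Your fallback, a $\PoM$ at $t{+}2$, proves a weaker statement than the lemma. Its epoch bound matters where the lemma is invoked, in Theorems~\ref{theorem:3} and~\ref{theorem:4}.

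The missing idea is to treat the inconsistency as a wrong stored \emph{key} (Definition~\ref{def:inconsistency}), not a wrong digest, and to trace where that key came from.
\begin{enumerate}
\item Suppose the resetting Bob inherits $pk(C,t,A)\neq pk(C,t,C)$ for some $C\in N_0(B)$. Since there are no earlier inconsistencies, $pk(C,t{-}1,A)=pk(C,t{-}1,C)$.
\item Alice is the compromised user, so $C$, which lies in her extended neighbourhood, cannot be. A genuine reset of $C$ that went unreported to Alice would therefore be caught by a failing signature check.
\item Hence the wrong key can only have entered Alice's state through a server-signed reset notification for $C$ at epoch $t$.
\item Bob inherits that notification and sees that his neighbour $C$ reset within two epochs of his own reset. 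Phase~1 Case~2 step~3 ($\mathsf{RecentReset}$) then forces him to call $C$ out of band during epoch $t{+}1$ itself, and the discrepancy with $C$'s own state yields the $\PoM$.
\end{enumerate}
You do mention this step, but only as a lead-in to digest comparison. It is exactly what pins detection to epoch $t{+}1$, regardless of which digests the server hands out.
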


\begin{proof}
    Let's assume contradictory and consider the first time when such inconsistencies were passed to some other user from $A$. Since messages of $A$ don't reach any of her neighbors at epoch $t+1$ and get erased when $A$ resets at epoch $t+2$, the only way for such inconsistencies pass is when some Bob ($B\in N_0(A)$) reset at epoch $t+1$ and calls $A$ OOB. 
    
    Let's consider some inconsistency that was passed to $B$: for some Carol ($C\in N_0(B)$) $pk(C, t, A)\neq pk(C, t, C)$. Since there were no passed inconsistencies before $t+1$, $pk(C, t-1, A)=pk(C, t-1, C)$. Since $A$ is compromised, $C$ as a direct neighbor cannot be compromised.  So if $C$ reset but $A$ did not get the reset notification it would be detected. Therefore for such inconsitency to happen $A$ should have received $Reset(C, t)$ that she passes to $B$ in OOB call after his reset. So $B$ calls $C$ OOB and generates PoM.  
    
\end{proof}

Now, we will consider the first inconsistency that may have possibly be passed to the next epochs and describe cases how it could have happened.

\begin{observation}\label{observation:when_inconsistency}
We consider that this inconsistency happen at epoch $t+1$. Since inconsistency happens between neighbors, they are either direct, or two-hop.     
We first assume that it happened between direct neighbors Alice ($A$) and Bob ($B\in N_0(A)$) such that Bob stores an incorrect key for Alice: $pk(A, t, A)\neq pk(A, t, B)$.
\begin{itemize}
    \item Case 1: Alice does not reset at epoch $t$ and neither server sent any reset notification about Alice to Bob at epoch $t+1$.
    \item Case 2: Alice resets at epoch $t$, but at epoch $t+1$ Bob did not receive any information about it. 
    \item Case 3: At epoch $t+1$ Bob received a reset message of Alice with incorrect new key: $pk(A, t, B)\neq pk(A, t, A)$ (Alice could have had reset of not).
\end{itemize}

Further, we consider the case if the first inconsistency happen  between two-hop neighbors Alice and Carol ($C$) such that they both have direct neighbor Bob.  We also break it into sub-cases similar to Cases 1-3.

Case 4: at epoch $t+1$ Carol stores $pk(A, t, C)\neq pk(A, t, A)$.

\begin{itemize}
    \item Case 4.1: Alice did not reset at epoch $t$ and neither server sent any reset notification about Alice to Carol at $t+1$. 
    \item Case 4.2: Alice resets at epoch $t$ but Carol did not receive any update message about Alice at epoch $t+1$.
    \item Case 4.3: At epoch $t+1$ Carol received a reset message of Alice with incorrect new key: $pk(A, t, C)\neq pk(A, t, A)$ (Alice could have had reset of not). 
\end{itemize}
\end{observation}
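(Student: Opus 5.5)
The observation is a case enumeration, so the plan is to show that the listed cases are exhaustive. I would fix the \emph{first} epoch $t+1$ at which some inconsistency in the sense of Definition~\ref{def:inconsistency} exists, excluding those handled by Lemma~\ref{lemma:snapshot_exception}. Minimality then gives one fact used throughout: at epoch $t$ all stored keys were correct, so $pk(X,t-1,Y)=pk(X,t-1,X)$ for every neighbouring pair $X,Y$.

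First I would argue that an inconsistency can only involve neighbours. A user $Y$ stores keys $pk(X,\cdot,Y)$ only for $X\in N_0(Y)\cup N_1(Y)$, since $\NeighborShare$, $\ServerProcessMessages$ (step~5) and $\OOBshare$ write keys only for these users. Hence the pair is either direct neighbours or two-hop neighbours with some common direct neighbour. Definition~\ref{def:inconsistency} is symmetric in the two users, so I would name them so that the \emph{receiver} (Bob, respectively Carol) is the one holding the wrong key for Alice. This gives the top-level split into Cases 1--3 (direct neighbours) versus Case 4 (two-hop neighbours).

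Second, within each top-level case, I would inspect how the value $pk(A,t,\cdot)$ held by the receiver at epoch $t+1$ was written. By the protocol (Phase 4, together with Lemmas~\ref{lemma:1} and~\ref{lemma:2}), there are only two ways:
\begin{itemize}
\item it comes from an $\mathsf{Update}$ notification for Alice, received either from the server or through an $\OOBshare$ copy of a neighbour's server-signed state when the receiver reset;
\item it is carried over as $pk(A,t-1,\cdot)$.
\end{itemize}
If a notification was received and the key is wrong, we are exactly in Case 3 (respectively 4.3), whether or not Alice reset. If no notification was received, the key is the carried-over one. I would then split on whether Alice reset at $t$: if she did, this is Case 2 (4.2); if she did not, this is Case 1 (4.1). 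The four combinations of ``reset / no reset'' and ``notification / none'' are thereby covered, and the combination ``notification received with the correct key'' produces no inconsistency.

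The main obstacle is the reset path of the receiver. There, keys come not from the server but from an $\OOBshare$ with some $w$, and I must check that this does not create a fifth route. My plan is to use Lemma~\ref{lemma:2}: whatever $w$ passes on is a server-signed $pk(A,t,w)$, either bundled with an $\mathsf{Update}$ or carried over. So the receiver's key is still classified by the same two attributes (notification present or not, and Alice reset or not), just through $w$'s view. I would also note, though the enumeration itself does not need it, that Case 1 yields no inconsistency unless the wrong key arrives via such a copy, which is why Case 1 is kept in the list.
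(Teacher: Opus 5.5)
Your proposal is correct and follows the paper's approach. The paper asserts this observation without a separate proof, and the implicit justification is exactly your partition: direct versus two-hop, whether Alice reset at $t$, and whether the receiver got an $\mathsf{Update}$ for Alice and with which key, where the remaining combination (an $\mathsf{Update}$ carrying the correct key) produces no inconsistency.

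One minor inaccuracy, which does not affect exhaustiveness, is your claim that there are only two ways a key gets written. For a two-hop receiver who resets at $t+1$, $\mathsf{NewDevice}$ copies only $N_0$ keys, and Lemma~\ref{lemma:2} covers only $N_0$. So her key for Alice comes directly from the server's bundle, not by carry-over or an OOB copy. That key still falls into Case~4.1 or~4.2, because the cases are defined by the reset and notification attributes rather than by the route through which the key was written.
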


Now, we step by step prove that if any of these cases happen, Proof of Misbehavior(PoM) will be generated.

\subsubsection{Case 1}
\begin{theorem} \label{theorem:1}
    Consider $t+1$  to be the first epoch when direct inconsistency happens. Then it is impossible that it happens between some Alice and Bob such that Alice does not reset at epoch $t$ and neither server sent any reset notification about Alice to Bob at epoch $t+1$.  
\end{theorem}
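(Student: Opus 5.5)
The plan is to argue by contradiction, tracing how Bob's stored key for Alice evolves from epoch $t$ to epoch $t+1$. Suppose $t+1$ is the first epoch with a direct inconsistency, and that it falls under Case~1. Then Alice did not reset at epoch $t$, so $pk(A,t,A)=pk(A,t-1,A)$ by Phase~1, Case~1 of $\Intermediate$. Since $t+1$ is the first inconsistent epoch, at epoch $t$ there was no inconsistency between Alice and Bob, so $pk(A,t-1,B)=pk(A,t-1,A)$.

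The first step is to show that Bob's view of Alice's key cannot change without an update notification. If Bob did not reset at epoch $t+1$ and received no $\mathsf{Update}$ for Alice, then by Lemma~\ref{lemma:each_epoch} (equivalently Lemma~\ref{lemma:1}), Bob's new value is set by $pk(A,t,B)\leftarrow pk(A,t-1,B)$. This is Phase~4, Case~1, step~4. Chaining the three equalities gives $pk(A,t,B)=pk(A,t-1,B)=pk(A,t-1,A)=pk(A,t,A)$, which contradicts the assumed inconsistency. Any key the server additionally opens for Alice under $d(t,B)$ is irrelevant here, because Bob overwrites it with his retained value. If the server's $\PoI$ for a different key fails, Bob outputs a $\PoM$ of form (i). In $G_2$, $\PoI$ openings of two different keys under the same digest cannot both verify.

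The second step handles the sub-case where Bob himself resets at epoch $t+1$. Bob then obtains $pk(A,t,B)$ through the $\mathsf{NewDevice}$ $\OOBshare$ with some $v_\star\in N_0(B)$, as in Lemma~\ref{lemma:2}. If $v_\star=A$, Alice hands over her own key, and it is correct. Otherwise $A\in N_1(v_\star)$ (or $A\in N_0(v_\star)$), and $v_\star$'s key for Alice was consistent at epoch $t$ by minimality of $t+1$. Since $v_\star$ did not reset recently and received no $\mathsf{Update}$ for Alice, the first step applied to $v_\star$ gives $pk(A,t,v_\star)=pk(A,t,A)$. Bob copies this value, so again there is no inconsistency. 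There is a subtlety: the minimality hypothesis concerns direct inconsistencies, but $v_\star$ may be a two-hop neighbour of Alice. I would interpret "first inconsistency" as covering both direct and two-hop pairs, as Observation~\ref{observation:when_inconsistency} suggests.

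The main obstacle is the interaction with compromise. A compromised $v_\star$, or a compromised Alice, does not alter honest state under snapshot compromise, but the server could forge messages. In $G_2$, however, signature forgeries on messages accepted by honest users are excluded. The OOB channel is also authentic. So the only worry is that $v_\star$ itself held an inconsistent key that was invisible to the minimality argument, for instance in a Compromise-recovery scenario. That case is handled by Lemma~\ref{lemma:snapshot_exception}: such inconsistencies are never passed on to another user. Combining these cases yields the contradiction, so Case~1 cannot occur.
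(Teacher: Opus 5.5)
Your proposal is correct and follows the paper's proof essentially step for step. It uses the same split on whether Bob resets at epoch $t+1$, and the same chain $pk(A,t,B)=pk(A,t-1,B)=pk(A,t-1,A)=pk(A,t,A)$ when he does not; when he does, it copies Alice's key from the non-recently-reset neighbour reached by the $\mathsf{NewDevice}$ call, whose key for Alice was consistent at epoch $t$ and has not changed since. Your extra remarks (the case $v_\star=A$, reading minimality as covering two-hop pairs, and the compromise discussion) only make explicit what the paper leaves implicit; like the paper, you tacitly assume $v_\star$ itself received no $\mathsf{Update}$ for Alice, which the hypothesis guarantees only for Bob, though a reset reported through $v_\star$ would in any case trigger Bob's $\mathsf{RecentReset}$ call to Alice or place the situation in Case~3.
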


\begin{proof}
    If Bob did not update at $t+1$, then by default $pk(A, t, B) = pk(A, t-1, B)$ and since we assumed that this is the first epoch when the first inconsistency happens, then $pk(A, t-1, B) = pk(A, t-1, A) =  pk(A, t, A)$. 

    If Bob updates at $t+1$, he calls OOB some direct neighbor Carol who did not update recently and sets $pk(A, t, B):= pk(A, t, C) = pk(A, t-1, C) = pk(A, t-1, A) =  pk(A, t, A)$, where the later equalities hold because of consistency between Alice and Carol at $t-1$ and the fact that both of them did not update in the last two epochs.
    
    Thus, Bob stores correct key for Alice. 
\end{proof}

\subsubsection{Case 2}

\begin{theorem}\label{theorem:2}
    If $t+1$ is the first epoch when direct inconsistency happens, it happens between Alice and Bob, Alice did just reset (at epoch $t$), but Bob did not receive any reset message for Alice, then at epoch $t+1, t+2$ Alice, Bob or some of their direct neighbors generates a Proof of Misbehavior.  
\end{theorem}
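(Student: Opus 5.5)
The plan is to follow Alice's fresh key and her signed messages through epochs $t+1$ and $t+2$. We split on whether Bob himself resets at $t+1$, and argue that every branch triggers either a server-signed contradiction (a $\PoM$ of form (ii) or (i)) or an $\OOBshare$ call that surfaces one.

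\emph{Step 1: Bob does not reset at $t+1$.} Since $t+1$ is the first inconsistency, Bob still holds $pk(A,t{-}1,B)=pk(A,t{-}1,A)$, the old key. By Lemma~\ref{lemma:each_epoch}, at epoch $t+1$ Bob must receive Alice's signed view $d(t{-}1,A)$, or else an $\mathsf{Update}$ notification. By hypothesis no update reached him, so there are two sub-cases. If the message is missing, Phase~2 (missing direct-neighbour digest) makes Bob call $\OOBshare(B,A)$ with $\mathsf{DigestExchange}$. If a message is present, it must verify under the old key. Alice, having reset, signs only under her new key $pk(A,t,A)$. In $G_2$ the old key's secret is either erased or compromised, and SIG-Abort excludes forgeries only for honest users' current keys. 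So the adversary can deliver a verifying message under the old key only if Alice was compromised before $t$ and the server replays or signs with that old key.

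We handle both sub-cases as follows. In the OOB branch, Alice returns her server-signed $d(t{-}1,A)$ together with her own server-signed view of the directory at $t$. That view contains $\PoI$ for her new key, or an $\mathsf{Update}(t,A,\cdot)$ notification, which the server is bound to deliver to Bob. If Bob's server-signed bundle for the same epoch lacks that notification, then the two server-signed digests for epoch $t$ differ. By position-binding (PB-Abort), the same digest cannot open Alice's label to both the old and the new key, so this yields a form-(ii) $\PoM$.

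\emph{Step 2: Bob resets at $t+1$.} Bob makes a $\mathsf{NewDevice}$ call to a stable neighbour $w$, and also makes $\mathsf{RecentReset}$ calls to any neighbour reported to have reset at $t$ or $t{-}1$. If $w$ is told of Alice's reset, Bob will call Alice directly and obtain the correct key, which contradicts the assumed inconsistency. Otherwise $w$ also lacks Alice's update. In that case $w$ is a direct neighbour of Alice with a stale key, which puts us in the situation of Step~1 one hop over: $w$ detects the missing or invalid signature and calls Alice OOB. Note that $w$ is not also a freshly reset user, by the choice of $v_\star$.

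\emph{Step 3: epoch $t+2$.} In the compromised-Alice branch of Step~1 the server may have delivered signatures under Alice's old key. In that branch we use Alice's own Phase~1 identity-fraud check at $t+1$, or her Phase~4 digest comparison at $t+2$. At $t+2$ she receives Bob's signed $d(t,B)$ and compares it with $d(t,A)$, which contains her new key. Any mismatch is a form-(ii) $\PoM$. Equality is impossible by position-binding, since $d(t,B)$ encodes the old key for Alice (otherwise Bob's $\PoI$ check at $t+1$ would have failed and produced a form-(i) $\PoM$).

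The main obstacle is the snapshot-compromise branch, where the adversary holds Alice's old signing key and forges fully consistent traffic to Bob. That branch has to be closed through Alice's own honest processing at $t+2$ and the threshold $\thresh=1$. In particular, we must check that Bob's neighbours other than Alice, who are uncompromised by assumption, also receive Alice's new key and will cross-check it against Bob's signed digest.
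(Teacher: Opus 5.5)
\textbf{Gap 1: the compromised-Alice branch is never closed.} Your case split matches the paper's. The paper shows that the \emph{only} way through epoch $t+1$ without a $\PoM$ is this: Alice's old key was compromised, and the server delivers to Bob (or, if Bob resets, to his stable neighbour) a view signed under $pk(A,t-1,A)$. The whole proof then hinges on closing that branch at $t+2$. Your Step~3 assumes Alice receives Bob's genuine signed $d(t,B)$ at $t+2$. Your last paragraph concedes the branch is still open and proposes a cross-check via Bob's other neighbours, which you neither carry out nor need. Two ideas are missing:
\begin{enumerate}
\item Alice already uses the single compromise allowed in the extended neighbourhood, so Bob is uncompromised. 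The server therefore cannot forge Bob's view to Alice.
\item The server then has only two options. It can drop Bob's view, which triggers Alice's Phase~2 $\mathsf{DigestExchange}$ with Bob. Or it can send an $\mathsf{Update}(B,t+1)$ instead. That update may be fabricated, or genuine, as in your own Step~2, where Bob really resets at $t+1$; there your Step~3 comparison is not the mechanism that fires. A reset of Bob so close to Alice's own makes Alice call Bob via $\OOBshare$, and the paper extracts a $\PoM$ from that call. If the reset was fake, Bob's real $d(t,B)\neq d(t,A)$ gives form~(ii). If Bob resets again within the gap, form~(iii) applies. Otherwise Bob's later digest contradicts Alice's.
\end{enumerate}

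\textbf{Gap 2: Step~2 lets $w$ call Alice, which it may not be able to do.} You only know $w\in N_0(B)\subseteq N_1(A)$, so in general $w$ is a two-hop neighbour of Alice. It then cannot call her with $\OOBshare$, which is restricted to direct contacts. Step~2 also omits the sub-case where the server serves $w$ a view under Alice's compromised old key. The paper never tries to extract a $\PoM$ from $w$. It notes that $w$ holds the stale $pk(A,t-1,A)$ and still expects Alice's server-routed view under it. So a $\PoM$-free run again forces the server to have signed with Alice's compromised old key. Both the Bob-resets and Bob-does-not-reset branches therefore collapse into the single compromised-Alice situation, which is closed by Alice at $t+2$ as described above.
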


\begin{proof}
From Lemma~\ref{lemma:each_epoch}, we know that user $B$ at epoch $t+1$ either:
\begin{itemize}
        \item resets and calls OOB some $C\in N_0(B)$ that did not reset at last $5$ epochs. From theorem statement we know that Carol did not share to Bob any information about $Reset(A, t)$, so $pk(A, t, C)=pk(A, t-1, C)=pk(A, t-1, A)$ because Carol did not update recently and at epoch $t$ had consistent communication with Alice. Also Carol expects message from Alice at epoch $t+1$ signed by $pk(A, t-1, A)$, but Alice does not store it anymore because of update. Since Carol received message of Alice at $t+1$, server should have compromised Alice's device.

        \item receives a message as if from $A$  signed by $pk(A, t, B)$. Since Bob did not receive any reset message for Alice from server, he sets $pk(A, t, B):=pk(A, t-1, B)$ and thus this message should have been signed by server. This means that server compromised Alice's device.  
\end{itemize}

We see that in both cases the only way not to get a PoM at epoch $t+1$ is for server to have compromised Alice and at the beginning of epoch $t+1$ sent message on her behalf signed with old $pk(A, t-1, A)\neq pk(A, t, A)$. In both cases, Bob stored incorrect key $pk(A, t, B)=pk(A, t-1, A)$ with respective digest $d(B, t)$. Still, at epoch $t+2$ the real Alice expects to get a message with digest from Bob. Then at epoch $t+2$ Alice either: (1) gets the real Bob's message and detects server's misbehavior; or (2) gets message of $Reset(B, t+1)$ which by protocol means Alice calling Bob OOB and discovering PoM (either reset was fake, or Bob resets again at $t+2$ with less than $1$ epoch of pause, or Alice gets $d(B, t+1)$ and creates PoM). 

\end{proof}

\subsubsection{Case 3}

\begin{theorem} \label{theorem:3}
If epoch $t+1$ is the first time when inconsistency between direct neighbors happen, in particular between Alice and Bob, when  Bob received a message about reset but with an incorrect public key of Alice $pk(A, t, B)\neq pk(A, t, A)$ (whether Alice actually reset at $t$ or not), then  
\begin{itemize}
    \item Either this inconsistency will never be passed to any other user and Bob resets at epoch $t+2$ (``Compromise-recovery scenario");
    \item Or PoM will be generated at epoch $t+1$ or $t+2$.
\end{itemize} 
\end{theorem}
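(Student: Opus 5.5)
We work in game $G_2$, so honest users' signatures cannot be forged and no digest opens to two keys for the same username. Two observations anchor the argument. First, Bob's key $pk(A,t,B)$ comes from a server-signed $\mathsf{Update}$ bundle and is bound by a $\PoI$ to $d(t,B)$. Second, since $t+1$ is the first inconsistency, every key stored before $t+1$ is correct. We split first on whether Alice actually reset at epoch $t$, and then on whether Alice's state is compromised at $t+1$.

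\emph{Alice did not reset at $t$.} The real Alice holds $pk(A,t,A)=pk(A,t-1,A)$ and receives the server's bundle for epoch $t+1$, which contains a $\PoI$ for her own key in $d(t,A)$.
\begin{itemize}
\item If $d(t,A)=d(t,B)$, position binding forces Alice's $\PoI$ to open to the fake key, or else $\DCVerify$ rejects and we get a type-(i) $\PoM$. In the former case Alice triggers the identity-fraud precaution, but that yields a correction rather than a $\PoM$. So I would rather use Bob's direct neighbours. By Case~1.1 of phase~4, Bob accepts the update only on $d/2+1$ confirmations, or else makes a $\mathsf{KeyAdoption}$ OOB call to Alice, which exposes the mismatch of server-signed digests or keys.
\item If $d(t,A)\neq d(t,B)$, then at epoch $t+2$ Bob's signed digest $d(t,B)$ must reach Alice (Lemma~\ref{lemma:each_epoch}). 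Because Bob is honest and not compromised, the server can neither suppress nor forge it without triggering a $\mathsf{DigestExchange}$ OOB call. Alice's digest comparison in phase~4 then gives a type-(ii) $\PoM$.
\end{itemize}
In addition, an $\mathsf{Update}(t,A)$ notification for a user who did not reset either conflicts with another $\mathsf{Update}$ within 5 epochs, giving a type-(iii) $\PoM$, or is caught by the $t<5$ rule and Case~2.1, which force an OOB call to Alice.

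\emph{Alice did reset at $t$.} Alice still receives her own $\PoI$, and the same dichotomy applies. If the server inserted the fake key into the digest Alice sees, she detects the mismatch against her fresh key. Otherwise the digests differ, and the signed-digest exchange at $t+2$ together with Lemma~\ref{lemma:each_epoch} produces a type-(ii) $\PoM$.

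\emph{Escape routes.} In each branch I would list every way the server can avoid a $\PoM$ and close it off.
\begin{itemize}
\item Forging Alice's messages to Bob under the fake key requires the fake key to be adversary-chosen. That is consistent with $G_2$, since the key is not Alice's. But it also requires suppressing Alice's genuine signed digest from reaching Bob, which triggers phase~2 of the missing-message check.
\item Faking resets of Alice's or Bob's neighbours to suppress messages is limited by the threshold checks and the 5-epoch reset gap.
\item The remaining escape is to compromise Bob himself, feed him the fake key, substitute correct-looking outgoing messages on his behalf, and have Bob reset at $t+2$. This is exactly the Compromise-recovery scenario, and Lemma~\ref{lemma:snapshot_exception} shows the inconsistency is never passed on.
\end{itemize}
The compromise limit $\thresh=1$ per neighbourhood ensures Alice and Bob are not both compromised, and that Bob's confirming neighbours are mostly honest.

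The main obstacle is the acceptance branch. Bob may hold the fake key as unaccepted and the server may hide all direct evidence for one epoch. We must then show that the majority-confirmation rule and the threshold check force either an OOB call to Alice or a comparison of signed digests with a non-compromised common neighbour $w\in N_0(A)\cap N_0(B)$ by epoch $t+2$. I would first try to pick such a $w$ that is honest and not recently reset; the $d/2+1$ count plus $\thresh=1$ guarantees one exists. Then I would argue that $w$'s signed $d(t,w)$, which opens Alice's key correctly or contains $\rho_A$, conflicts with $d(t,B)$ under position binding, giving a type-(ii) $\PoM$.
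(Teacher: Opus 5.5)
\textbf{Gap in the acceptance branch.} You correctly identify the acceptance branch as the crux, but the way you propose to close it cannot work. A neighbour $w\in N_0(A)$ counts toward Bob's $d/2+1$ confirmations only after its signed digest has passed the equality check of Phase~4, Case~1, step~2; otherwise a type-(ii) $\PoM$ already exists. So every confirmer has $d(t,w)=d(t,B)$, and by position binding its digest opens Alice's label to the same fake key. Honesty of $w$ constrains what $w$ signs, not which digest the server handed it. There is therefore no conflict between $d(t,w)$ and $d(t,B)$, and your claim that $w$'s digest ``opens Alice's key correctly or contains $\rho_A$'' has no basis. Restricting to $w\in N_0(A)\cap N_0(B)$ is also unwarranted: confirmers range over $N_0(A)\subseteq N_1(B)$, and common direct neighbours need not exist.

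The missing idea, which is the heart of the paper's Case~3.3, is to turn the confirmation rule around.
\begin{itemize}
\item If Bob accepts, at least $\lceil d/2\rceil$ uncompromised neighbours $C_i$ of Alice hold the fake key and Bob's digest. The contradiction must therefore be found in \emph{Alice's} view.
\item At $t+2$ Alice expects signed digests from every $C_i$. Any one that is delivered is compared with $d(t,A)$.
\item Suppressing all of them requires fake $\mathsf{Update}$s for at least $d/2$ of Alice's direct neighbours. These keys are necessarily fresh and unaccepted, since there was no inconsistency before $t+1$.
\item Alice's direct-neighbour threshold check then forces $\OOBshare$ calls that yield a $\PoM$.
\end{itemize}
The paper redoes this argument for Alice resetting at $t+1$ or $t+2$, using the keys she fetches from her $\OOBshare$ partner.

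\textbf{Further holes.} The same phenomenon undermines your branch $d(t,A)\neq d(t,B)$. The server can keep Bob's signed digest from Alice, without any $\mathsf{DigestExchange}$ call, by delivering a fake $\mathsf{Update}$ for Bob in its place. Phase~2 fires only when neither an update nor a digest arrives, Lemma~\ref{lemma:each_epoch} explicitly allows this substitution, and a single unaccepted key does not trip the threshold check. Likewise, replacing Alice's genuine message to Bob by one forged under the fake key does not trigger Phase~2, because Bob does receive a digest signed under the key he holds. This is exactly why the paper argues with a majority of Alice's neighbours rather than with Bob alone.

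You also never treat Bob's own resets, which the paper handles separately:
\begin{itemize}
\item At $t+1$, Bob sees a neighbour reset within two epochs of his own and calls Alice out of band (Case~3.1).
\item At $t+2$, one must analyse the stable neighbour Bob calls. This includes the sub-case where Bob was compromised but some neighbour $C$ still received his genuine $t+1$ message. That sub-case is \emph{not} the Compromise-recovery scenario, so Lemma~\ref{lemma:snapshot_exception} does not dispose of it; it needs its own argument (Case~3.2.3).
\end{itemize}

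Finally, in your sub-branch where Alice did not reset and $d(t,A)=d(t,B)$, a $\mathsf{KeyAdoption}$ call to Alice exposes only a key mismatch under equal server-signed digests. That matches none of the $\PoM$ forms (i)--(iii), and if Bob collects his confirmations there is nothing left to compare. Be aware that the paper's Case~3.3.1 tacitly assumes $d(t,A)\neq d(t,C_i)$ as well, so this globally consistent fake-key scenario is not closed by either argument. The protocol's only response to it is the identity-fraud correction, which yields no $\PoM$ and, as written, is triggered only if Alice reset.
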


\begin{proof}
    
    \textbf{Case 3.1: $B$ resets at epoch $t+1$.}
    $B$ received OOB info at epoch $t+1$ that $A$ updated at epoch $t$ with incorrect signature of $A$. Then by protocol, since Bob sees his neighbor's update within $2$ epochs difference with his own, Bob calls Alice OOB at epoch $t+1$ and generates a PoM.

    \textbf{Case 3.2: $B$ resets at epoch $t+2$.}
    We know that when $B$ at $t+1$ received information about reset of $A$ at $t$, $B$ sends this info to all of its direct neighbors. Then at epoch $t+2$ after resetting, he calls OOB a neighbor $v\in N_0(B)$. 
    Since $v$ was chosen not to update in the last several epochs, three cases are possible. 
    
    Case 3.2.1: At epoch $t+2$ $v$ correctly received message of $B$ sent at $t+1$ about update of $A$ at epoch $t$, signed by the server, and transmits it OOB to Bob. Bob realizes that Alice reset two epochs before him, by protocol calls her OOB at epoch $t+2$ and creates a PoM with incorrect message of Alice's update (if Alice did not update on $t+1, t+2$ - with different digests for epoch $t$, if she did update - with proof of two updates within less than $5$ epochs).
    
    Case 3.2.2: $v$ did not receive message of $B$ at $t+2$ or $v$ received incorrect message about Bob's update. Then, when Bob calls him OOB at $t+2$ he generates PoM showing message from server about Bob's update and real update with less than 5 epoch difference.

    Case 3.2.3: Server forged Bob's signature and sent message on behalf of Bob at epoch $t+1$, by previously compromising him. Then either (1) all the messages of Bob sent at epoch $t+1$ were blocked and server sent correct digests and thus all correct keys - then the inconsistency is never passed or influencing other communication - as per Lemma~\ref{lemma:snapshot_exception}; or (2) at least one user $C$ received and stored the actual message from Bob sent at epoch $t+1$. It was delivered to $C$ and again sent by $C$ to $B$ at epoch $t+2$. We know that if $B$ received the correct message from $C$ at $t+3$ he would detect inconsistency with what $v$ shared to him OOB. By Lemma \ref{lemma:each_epoch} $B$ should have received a message from $C$. If $B$ received $Reset(C, t+2)$ he would call $C$ OOB and detect PoM because of wrong reset message. So, the only option left is that $B$ set $pk(C, t+2, B):= pk(C, t+1, B)$ and at epoch $t+3$ $B$ received a message as if from $C$, but from wrong $pk(C, t+1, B)=pk(C, t+1, v)\neq pk(C, t+2, C)$.

    \begin{itemize}
        \item If $C\in N_0(B)$. We know by protocol, that $v$ did not reset after epoch $t-1$. By consistency on epochs before $t+1$ we have $pk(C, t-1, v)=pk(C, t-1, C)$. $C$ was not compromised recently, because only one user in extended neighborhood could have been compromised at the same time. Thus if $v$ stored incorrect $pk(C, t+1, v)\neq pk(C, t+1, C)$ it means that $v$ received $Reset(C, t+1)$ or $Reset(C, t+2)$ transmitted it to $B$ and that called $C$ OOB at epoch $t+2$ (because of neighbor's reset with less than $2$ epochs difference from his own) and detected PoM. 

        \item If $C\notin N_0(B)$ - there is $D\in N_0(B)\cap N_0(C)$, and since previous bullet point was proven, $D$ as direct neighbor of $B$ had received correct digest as from $B$ at beginning of $t+2$. So at epoch $t+2$  users $D$ and $C$ store different digests $d(D, t) \neq d(C, t)= d(B, t)$. Also, consider epoch $t+2$ and each $w\in N_0(D)$ -  $C$ and $D$ should have received digests from them. So for each $w\in N_0(D)$ either $D$ or $C$ at epoch $t+2$ should have received a wrong message from $w$, or received a message about reset of $w$.  At epoch $t$ communication of $w$ with both $D$ and $C$ was not inconsistent, so inconsistency should have happened within last two epochs and new keys couldn't have been updated. If there are at $\geq d/2$ unaccepted keys in $N_0(D)$ for $D$, then he calls some of them OOB and discovers PoM, otherwise there are $\geq d/2$ unaccepted keys in $N_0(D)$ for $C$ and $C$ calls $D$ OOB at $t+2$ and discovers PoM. 
    \end{itemize}

\textbf{Case 3.3: $B$ does not reset at epochs $t+1, t+2$.} Then at epoch $t+2$ $B$ expects same digests $d(t, C_i)=d(t, B)$ from several such users $\{C_1, C_2, ...C_{\lceil d/2 \rceil}+1\}\subset N_0(A)\subset N_1(B)$ such that any $C_i$ did not change their keys recently (i.e. $5$ epochs) and such that holds $Accept_B(pk(C_i, t+1, B)) = True$. If Bob does not receive enough of such digests, then Bob calls Alice OOB at epoch $t+2$ and generates PoM (either with different digests for epoch $t$, or if Alice updated at $t+1$ or $t+2$ with two updates of Alice within $5$ epochs).
    
    Otherwise, Bob received those digests and for each $C_i$ holds $pk(C_i, t+1, B) = pk(C_i, t, B) = pk(C_i, t-1, B)$. Given that $t+1$ was the first epoch with inconsistency, $pk(C_i, t-1, B)= pk(C_i, t-1, C_i)$ and given that $B$ did not update at epoch $t+1, t+2$ we conclude that $pk(C_i, t+1, B) = pk(C_i, t, B) = pk(C_i, t-1, B) = pk(C_i, t-1, C_i) = pk(C_i, t+1, C_i)$. It is possible that one of $C_i$s was under snapshot attack and for him Bob in fact did not have consistent communication, but for all other ${\lceil d/2 \rceil}$ users $C_i$ and $B$ and $C_i$ know correct signatures of each other at epoch $t+2$. From now on, we consider only those ${\lceil d/2 \rceil}$ users $C_i$ who had consistent communication with Bob at epochs $t+1, t+2$. We just showed that all those $C_i$ indeed stored a notification about $Reset(A, t)$ and $pk(A, t, C_i)= pk(A, t, B_i)\neq pk(A, t, A)$. 

    Now we consider several cases regarding if Alice reset at epochs $t, t+1, t+2$:

    \begin{itemize}
    \item \textbf{Cases 3.3.1: Alice does not reset at epoch $t+1$ or $t+2$.}
    
    Then $pk(A, t+1, C_i)=pk(A, t, C_i)\neq pk(A, t, A)=pk(A, t+1, A)$ and Alice expects a message from all $C_i$-s at epoch $t+2$. If she gets correct message from any $C_i$, she will detect PoM. Otherwise, Alice at epoch $t+2$ has inconsistency for each $C_i$: $pk(C_i, t+1, A)\neq pk(C_i, t, C_i)$. Since epoch $t+1$ is the first epoch with inconsistencies, Alice should have received those incorrect keys within the last two epochs and thus they cannot be accepted yet. There are too many of them $\geq d/2$ and Alice calls OOB at least some of them at epoch $t+1$ or $t+2$ and generates a PoM then.

    \item \textbf{Cases 3.3.2: Alice resets at epoch $t+1$.}

    Then at epoch $t+1$ she calls OOB a direct neighbor $D$. If it is one of $C_i$-s, PoM is generated. Otherwise, it is someone who did not update in the last 5 epochs and for each $C_i$ either 1) $pk(C_i, t, D) = pk(C_i, t-1, D)$ (and thus is correct) or 2) $pk(C_i, t, D) \neq pk(C_i, t-1, D)$ and because of no inconsistencies before epoch $t+1$, key   $pk(C_i, t, D)$ is unaccepted. Afterwards, on epoch $t+2$ Alice expects messages from direct neighbors, if she has at least  one neighbor out of $C_i$ who fell in the case 1) above, and server did not send any reset for him at $t+2$, Alice will detect a PoM. Otherwise, for all $C_i$ Alice will either have unaccepted key received from $D$ at $t+1$ or from server at $t+2$. In any way, for each of $C_i$ keys are unaccepted, there is $\geq d/2$ of them, Alice calls at least one of them OOB and detects PoM.   
    
    \item \textbf{Cases 3.3.3: Alice resets at epoch $t+2$.}

    This case is mostly similar to case 3.3.2, as Alice after update at epoch $t+2$ calls direct neighbor $D$  and gets information about keys-signatures of users $\{C_i\}$. Just in this case we verify if $pk(C_i, t+1, D)=pk(C_i, t, D)=pk(C_i, t-1, D)$ and conclude that either key is correct and will allow Alice consistent view on message of $C_i$, or equality does not hold and key is unaccepted. Similarly to case 2.3.2 Alice will call one of $C_i$ OOB after having $\geq d/2$ neighbors with unaccepted keys and discovers PoM. 
    \end{itemize}

\end{proof}

\subsubsection{Case 4 }

\begin{theorem}\label{theorem:4}
Consider that epoch $t+1$ is the first time when inconsistency between two-hop neighbors happen, and there were no direct inconsistency at epoch $\leq t+1$. This inconsistency happened between Alice and Carol with common direct neighbor Bob: Carol stores $pk(A, t, C)\neq pk(A, t, A)$.  Then 
\begin{itemize}
    \item Either this inconsistency will never be passed to any other user and $C$ resets at epoch $t+2$ (``Compromise-recovery scenario");
    \item PoM will be generated at epoch $t+1, t+2$ or $t+3$.
\end{itemize} 
\end{theorem}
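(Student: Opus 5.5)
We follow the sub-case split of Observation~\ref{observation:when_inconsistency} (Cases 4.1, 4.2, 4.3) and reduce each to the direct-neighbour analysis of Theorems~\ref{theorem:1}--\ref{theorem:3}. The bridge is the common direct neighbour Bob. By hypothesis there is no direct inconsistency up to epoch $t+1$. Hence Bob holds the correct key $pk(A,t,B)=pk(A,t,A)$, Carol holds the correct key for Bob, and by Lemma~\ref{lemma:each_epoch} Carol receives at each epoch Bob's signed digest, which (by the dictionary-commitment binding enforced in $G_2$) implicitly vouches for Bob's view of Alice's key.

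\textbf{Case 4.1} (Alice did not reset, and Carol got no update). Here we copy the proof of Theorem~\ref{theorem:1}. If Carol did not reset, she keeps $pk(A,t-1,C)=pk(A,t-1,A)=pk(A,t,A)$. If she did reset, her $\mathsf{NewDevice}$ call goes to a stable direct neighbour $D$, and $D$'s key for Alice is correct by the induction hypothesis (no earlier inconsistency). Either way the inconsistency cannot arise.

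\textbf{Case 4.2} (Alice reset at $t$, but Carol received nothing). For the server to avoid the Phase~3 or Phase~4 checks, Carol must receive at $t+1$ a message signed under the old $pk(A,t-1,A)$. In $G_2$ this is only possible if Alice was snapshot-compromised. By the one-compromise-per-extended-neighbourhood assumption, neither Bob nor Carol is compromised. At $t+1$ Bob holds the correct new key, so Bob's signed view $d(t,B)$ commits to Alice's new key, while $d(t,C)$ must commit to the old one. Bob's view reaches Carol by epoch $t+2$. If it is delivered, then $d(t,B)\neq d(t,C)$ and Phase~4 Case~1 yields a PoM of form~(ii). If it is withheld, Phase~3 triggers a $\mathsf{DigestExchange}$ call to Bob, which exposes the same mismatch. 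If Carol instead resets at $t+1$, her OOB call to a stable neighbour imports correct data and we are in the situation of Case~4.1.

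\textbf{Case 4.3} (Carol received an update for Alice with the wrong key). We split on whether Carol resets at $t+1$, at $t+2$, or at neither, mirroring Cases 3.1--3.3.
\begin{itemize}
\item If Carol resets at $t+1$ or $t+2$, the $\mathsf{RecentReset}$ and $\mathsf{KeyAdoption}$ calls routed through Bob (Case~2.2 of Phase~4) fetch Bob's key and digest for Alice. Bob's key is correct, so the two server-signed digests for epoch $t$ differ and a PoM follows. The alternative is the compromised-Carol situation, which is handled by Lemma~\ref{lemma:snapshot_exception} exactly as in Case~3.2.3.
\item If Carol does not reset, she cannot accept Alice's new key until Case~1.2 fires. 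That rule requires an accepted common neighbour $w$ with $d(t,C)=d(t,w)$. Every such $w$ is a direct neighbour of Alice, so its key for Alice is correct, which forces a digest mismatch and a PoM. If no confirmation arrives, the two-hop threshold check on Bob fires and Carol calls Bob OOB via $\mathsf{DigestExchange}$, again exposing the mismatch by $t+2$ or $t+3$.
\end{itemize}

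The main obstacle is the no-reset branch of Case~4.3. There the server may claim resets for several of Bob's neighbours to keep Carol's unaccepted count below the threshold while suppressing every honest confirmation. We will handle it by counting, as in Case~3.3.1. Each suppressed neighbour of Bob contributes an unaccepted key in Carol's view. Once more than $d/2$ neighbours of Bob are unaccepted, the two-hop threshold check fires. The three-epoch budget ($t+1$ to $t+3$) is exactly what is needed for those fake resets to be either confirmed, which is impossible since Bob's view is correct, or caught.
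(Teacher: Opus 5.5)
\textbf{Verdict: genuine gaps.} Your split into Cases~4.1--4.3 with Bob as the bridge matches the paper, but the decisive branch, Case~4.3, does not go through, and two bridges in the earlier cases fail.

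\textbf{Case 4.3, no-reset sub-case.} Whether Carol ever \emph{accepts} Alice's key is beside the point. The danger is that $d(t,C)\neq d(t,B)$ and Carol's signed view carrying $d(t,C)$ is already travelling to all of $N_0(B)$. Your counting then looks only at Carol's side, and that is not enough. The server may also hand some $v\in N_0(B)\setminus N_0(A)$ the bad digest $d(t,C)$; this only creates further two-hop inconsistencies with Alice, which your hypotheses allow. It then masks \emph{those} $v$ from Bob with fake $\mathsf{Update}$s and masks the remaining ones from Carol. Carol's count of unaccepted neighbours of Bob can then stay at most $d/2$. Your remark that confirmation is impossible because Bob's view is correct says nothing about fake resets shown to Bob. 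The missing idea is the paper's two-sided pigeonhole. Each $v\in N_0(B)$ sends one and the same signed digest to both Bob and Carol. So, apart from the single compromised user, it contradicts one of $d(t,B)\neq d(t,C)$ and must be hidden by a recent fake reset on that side. Hence either Bob's direct-neighbour threshold check or Carol's two-hop threshold check on Bob fires, and the resulting OOB call exposes two server-signed digests for epoch~$t$.

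\textbf{Case 4.3, Carol resets at $t+2$.} Your reset branch does not work here. The reset sets $\st_C=\bot$, erasing $d(t,C)$ and the $\mathsf{Update}$ for Alice. Neither $\mathsf{RecentReset}$ fires (Alice $\notin N_0(C)$) nor the Case~2.2 $\mathsf{KeyAdoption}$ (the update arrived before the reset). So Carol never holds two conflicting digests. The real issue is that her end-of-$(t{+}1)$ message carrying $d(t,C)$ may already have reached some honest $D$. Lemma~\ref{lemma:snapshot_exception} covers only the case where nobody received it. The paper tracks such a $D$ separately: for $D\in N_0(C)$ through the key Carol imports from her stable OOB responder, and for $D\notin N_0(C)$ through a threshold argument around a common neighbour $E$. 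This is where the $t+3$ in the statement comes from.

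\textbf{Case 4.1, Carol resets.} $\mathsf{NewDevice}$ returns keys only for $N_0(C)$, and the stable responder $D$ may be three hops from Alice. So ``$D$'s key for Alice is correct'' gives Carol nothing. The same applies to your fallback for Carol resetting at $t+1$ in Case~4.2. The paper instead pins the server-supplied key for Alice: it checks the OOB-obtained $d(t-1,D)$ against Bob's signed $d(t-1,B)$ and uses binding.

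\textbf{Case 4.2.} The Bob$\to$Carol channel you rely on is cheap to cut. The server replaces Bob's signed view with a fake $\mathsf{Update}$ for Bob. Phase~4 Case~2 then compares no digests and makes no OOB call unless a reset happened very recently. (Also, a withheld message from a direct neighbour like Bob is handled in Phase~2, not Phase~3.) The paper uses the Alice--Carol channel instead, because Alice has just reset. Alice's compromise rules out compromising Carol, so Carol's signed digest cannot be forged. Any fake update of Carol shown to Alice falls within two epochs of Alice's own reset and forces an OOB check.
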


\begin{proof}

    \textbf{Case 4.1:}  Alice did not reset at epoch $t$ and neither server sent any reset notification about Alice to Carol at $t+1$. If Carol did not reset at epoch $t$, then proof is identical to that of Theorem \ref{theorem:1}. If Carol resets at epoch $t$, then she does not store any $pk(A, t-1, C)$ at epoch $t$ and receives $pk(A, t, C)$ from server. Also, at $t$ she calls OOB some user $D\in N_0(C)$, gets to know $d(t-1, D)$ and also in the beginning of epoch $t+1$ gets from Bob $pk(A, t-1, B)$ and $d(t-1, B)$. First, Carol checks that $d(t-1, B)=d(t-1, D)$ and ensures it is correct digest. Even if Bob's message was forged by server after Bob was under compromised, still his message of $pk(A, t-1, B)$ is verified by checking that digest. Also, if server did not send any update for Alice to Carol, then $pk(A, t, C)=pk(A, t-1, B)=pk(A, t-1, A)=pk(A, t, A)$.

    \textbf{Case 4.2:} Alice resets at epoch $t$ but Carol did not receive any update message about Alice at epoch $t+1$. If Carol reset at epoch $t+1$ then she cannot store any public key for Alice on that epoch so there would be no inconsistency. So, Carol did not reset at epoch $t+1$. Therefore, Carol stores $pk(A, t, C)\neq pk(A, t, A)$ (because of the last Alice's update) and expects for a message from Alice signed by $pk(A, t, C)$.

    If Carol updates at epoch $t$ she calls OOB some $v\in N_0(C)$ and gets correct $d(t-1, v)$. Then when at $t+1$ she got $pk(A, t, C)$, $pk(A, t-1, C)$ and respective PoIs in $d(t-1, C):=d(t-1, v)$ and $d(t, C)$, Carol can verify that those keys are indeed included in digests and at epoch $t+1$ expects a message from Alice with signature $pk(A, t, C)=pk(A, t-1, C)=pk(A, t-1, A)\neq pk(A, t, A)$.    
    If Carol did not update at epoch $t$, then $pk(A, t, C)=pk(A, t-1, C)=pk(A, t-1, A)\neq pk(A, t, A)$ and Carol also expects a message from Alice with this signature at epoch $t+1$. This signature can be on the message only if Alice was compromised - because Carol should get some  message from Alice at epoch $t+1$. Alice expects a message from Carol at $t+2$. Server cannot compromise Carol because her neighbor Alice was compromised recently, so server cannot forge signature of Carol. Then either message sent by Carol at epoch $t+1$ with $d(t, C)\neq d(t, A)$ was delivered correctly and Alice generates PoM, or Alice received wrong $Reset(C, t+1)$. In the latter case, at epoch $t+2$ Alice calls Carol OOB because of update within less than $2$ epochs after her own and generates PoM. If Carol did not update at $t+2$ then with two different digests, or otherwise with two messages of Carol's reset at epochs $t+1, t+2$. 

 
                

    \textbf{Case 4.3:} At epoch $t+1$ Carol received a reset message of Alice with incorrect new key: $pk(A, t, C)\neq pk(A, t, A)$ (Alice could have had reset of not). 
    Carol did not reset at epoch $t+1$ - otherwise she does not store any $pk(A, t, C)$ and there is no inconsistency. Thus, at epoch $t+1$ Carol sent to Bob and to all $v\in N_0(B)$ a message with $d(C, t)$ and $pk(A, t, C)\neq pk(A, t, A)$. And we assumed that no direct inconsistency happened at $t+1$, so $pk(A, t, B)=pk(A, t, A)$ and thus $d(B, t)\neq d(C, t)$.

    Now, we will look at what Bob and Carol get at epoch $t+2$ (in theory both of them expect correct digests from all in $N_0(b)\subset N_1(C)$. One of them may be compromised, so we consider all the others:
    \begin{itemize}
        \item If neither $B$ nor $C$ does not reset at $t+2$. For each $v\in N_0(B)$ either Bob or Carol did not get their true digest at $t+2$, so either communication with Bob or Carol is inconsistent for $v$. Since at epoch $t$ communication had no inconsistencies, it is possible only by $Bob$ or $Carol$ getting $Reset(v, t) $ or $Reset(v, t+1)$. If there is more than $d/2$ of such users for Bob, he calls some of them OOB and discovers digest difference, generates PoM. Otherwise, there is more than $d/2$ of such resets for Carol in $N_0(B)$, she calls Bob OOB and generates PoM. 
        
        \item If Bob resets at epoch $t+2$, Carol does not. Then Bob calls OOB some $v\in N_0(B)$. First option, $d(t, v)=d(t, C)\neq d(t, A)$ then Bob also stores a message $Reset(A, t)$ calls Alice OOB and generates PoM (if Alice did not update at $t+1, t+2$ - then with two different digest, or about two updates in less than $5$ epochs). $d(t, v)\neq d(t, B)$ and then proof repeats the previous case, just with $v$ receiving Reset messages from $N_0(B)$ and then transmitting them to $B$. 

        \item If Carol resets at epoch $t+2$ and OOB gets the same digests as she stored before $d(t, C)\neq d(t, B)=d(t, A)$ then we repeat the same reasoning as for the previous two cases.
        
        \item Carol resets at $t+2$ and OOB got in the history the digest $d(t, B)=d(t, A)$. If there is not a single user in $N(C)$ who got her messages sent on $t+1$ and all the messages were with correct digest and sent by server after compromising Carol, then the inconsistency is never passed or influencing other communication - as per Lemma~\ref{lemma:snapshot_exception}. We consider the case when there is at least one user $D$ who at epoch $t+2$ got Carol's message with $d(t, C)\neq d(t, A)$.  If at epoch $t+3$ Carol correctly gets message of $D$, then PoM is generated. So at $t+3$ message was not delivered and Carol did not get $Reset(D, t+2)$ (because otherwise OOB call and generation of PoM).

        If $D\in N_0(C)$: $C$ at $t+2$ called OOB some user $v\in N_0(C)$ and got incorrect key for $D$.  By protocol, $v$ did not reset after epoch $t-1$. By consistency on epochs before $t+1$ we have $pk(D, t-1, v)=pk(D, t-1, D)$. $D$ was not compromised recently, because only one user in extended neighborhood could have been corrupted by such attack at the same time. Thus if $v$ stored incorrect $pk(D, t+1, v)\neq pk(D, t+1, D)$ it means that $v$ received $Reset(D, t)$, $Reset(D, t+1)$ or $Reset(D, t+2)$ transmitted it to $B$ and that called $C$ OOB at epoch $t+2$ (because of neighbor's reset with less than $2$ epochs difference from his own) and detected PoM.

        If $D\notin N_0(C)$ - there is $E\in N_0(D)\cap N_0(C)$, and since previous paragraph was proven, $E$ as a direct neighbor of $C$ had received correct digest as from $C$ at beginning of $t+2$. So at epoch $t+2$  users $E$ and $C$ store different digests $d(E, t) \neq d(C, t)= d(B, t)$. Also, consider epoch $t+3$ and each $w\in N_0(E)$ -  $E$ and $D$ should have received digests from them. So for each $w\in N_0(E)$ either $D$ or $E$ at epoch $t+3$ should have received a wrong message from $w$, or received a message about reset of $w$.  At epoch $t$ communication of $w$ with both $D$ and $E$ was not inconsistent, so inconsistency should have happened within last three epochs and new keys couldn't have been updated. If there are at $\geq d/2$ unaccepted keys in $N_0(E)$ for $E$, then he calls some of them OOB and discovers PoM, otherwise there are $\geq d/2$ unaccepted keys in $N_0(E)$ for $D$ and $D$ calls $E$ OOB at $t+2$ and discovers PoM. 
    
    \end{itemize}

\end{proof}

\subsubsection{Consistency of digests and posts}
\label{sec:digest-conclusion}

Even though above we use digest and related information as a tool to prove the consistency between users' public keys, our final goal is to prove that those exact digest and thus also $post$s are the same around the network.

\Mainresult*

\begin{proof}
Let's consider first such epoch $t$ for which exist two users $u, v$ such that no PoM was generated, but either $d(t,u)\neq d(t, v)$ or $post(t+1, u)\neq post(t+1, v)$. Since the graph is connected, in such a case we can also find two neighbors for whom the same holds. We consider those neighbors $A$ and $B$ and work with them. 

Here we discuss several cases: 
\begin{itemize}
    \item \emph{$A$ and $B$ stored the same post and/or digests at epoch $t+1$ and did not reset till epoch $t+4$.} Then they keep holding the same values: $d(t,A)=d(t, B)$, $post(t+1, A)=post(t+1, B)$. 
    
    \item \emph{Some $A$ and $B$ did not store the same post and/or digests already at epoch $t+1$. Either in theorem we consider direct neighbors, here they $A$ and $B$ can also be two-hop neighbors.}  Since only one of them can possibly be compromised, let's consider that $A$ is not compromised. Then by corollary \ref{lemma:each_epoch} at epoch $t+2$ $B$ should have received some message with $post(t+1, A), d(t, A)$ - either he received a correct message and generates PoM with contradiction with his own value, or $A$ was compromised (that we assumed not to be a case), or there was inconsistency between $A$ and $B$ at $t+2$ (and maybe at $t+1$). Following possibility of inconsistencies listed in observation~\ref{observation:when_inconsistency} and Theorems~\ref{theorem:1}, \ref{theorem:2}, \ref{theorem:3}, \ref{theorem:4} we conclude that then either PoM was generated at epoch $t+2, t+3$ or $t+4$ or $B$ was compromised, stored an incorrect key for Alice, resets by epoch $t+3$ and his inconsistent values were never propagated to other users (fully substituted by correct ones with forged signature by server). Thus if no PoM was generated at epochs $t+1, t+2, t+3$ or $t+4$ then at epoch $t+4$ both $A$ and $B$ store $d(t,A)=d(t, B)$, $post(t+1, A)=post(t+1, B)$. Bob holds it because he reset and returned to correct values after being compromised and Alice was not being able to be compromised so from the Theorems above should have been consistent with Bob or PoM would be generated. 

    \item \emph{$A$ and $B$ stored the same post and/or digests at epoch $t+1$ and but one or both of them reset till epoch $t+4$.} Let's consider, for example, that $B$ resets, calls $C\in N_0(B)\subset N_1(A)$ OOB and asks for history of digests, posts. If they were the same to $d(t,A)=d(t, B)$, $post(t+1, A)=post(t+1, B)$, then after OOB call and thus at epoch $t+4$ Bob stores $d(t,A)=d(t, B)$, $post(t+1, A)=post(t+1, B)$. But otherwise, at epoch $t$ there was difference in posts/digests between $B$ and $C$ that as we discussed in case above should have been solved out. 
\end{itemize}

\end{proof}

\section{Out-of-band (OOB) calls}\label{sec:OOB_calls}
In this section we summarize the moments from the protocol when out-of-band (OOB) calls are required and then estimate the necessary and sufficient amount of them.

\subsection{Reasons for OOB calls} \label{sec:OOB_reasons}
User $u$ calls a neighbor OOB at epoch $t$ if one of the following occurs:
\begin{enumerate}
    \item $u$ resets at epoch $t$;
    \item $u$ resets at epoch $t$ and some $v\in N_0(u)$ reset in epoch $t$, $t-1$, or $t-2$;
    \item There are more than $d/2$ direct neighbors of $u$, whose public keys $u$ didn't accept at epoch $t$;
    \item For some $v\in N_0(u)$, there are more than $d/2$ common neighbors of $u$ and $v$ whose public keys $u$ didn't accept at epoch $t$;
    \item At the beginning of epoch $t$, the server failed to deliver some message or part of a message from a neighbor to $u$.
    \item In case of identity fraud attack the user who detects incorrect key posted on her behalf calls OOB all her neighbors and asks them to also call OOB two-hop neighbors to transmit the correct key. 
\end{enumerate}

\subsection{Necessary amount of OOB calls}
First, we note that if any honest user resets their keys, the protocol's consistency goal cannot be achieved without at least one OOB call. The intuition is straightforward: when an honest user Alice resets her key, the server could distribute a fake key on her behalf instead of her real new key, effectively cutting her off from the network. Since she has no OOB channel with any of her neighbors, this misbehavior would go undetected.

Moreover, for reactive use of our transparency system, this OOB call must occur relatively quickly after the update~-- until the call happens, Alice cannot be sure whether she is correctly connected to the rest of the system. Therefore, for each reset within some constant number of epochs (e.g., 3 or 5), an OOB call must occur. Since resets are infrequent, it is unlikely that any other reset occurs in the system within this window. Thus at least one OOB call is required per reset, giving the following intuitive lower bound: \emph{the total number of OOB calls must be \textbf{at least} the total number of resets.}

\subsection{Amount of OOB calls used by our protocol}
The number of OOB calls used by the protocol depends on: (1) the frequency of user resets; (2) the number of neighbors of each user; and (3) the epoch duration. While exact values depend on the specific deployment of our protocol, we estimate them as if the protocol were used for key transparency in a messaging platform (e.g., WhatsApp or Signal).

For the reset frequency: resets can occur due to (a) device replacement (once every 2--3 years), (b) app reinstall (once every several months), or (c) OS or factory reset (once per year). Combining these, we estimate that users reset on average once every couple of months.

For the epoch duration: this depends on message delivery latency and system parameters, and can range from milliseconds to a few minutes. We take a rough estimate of $1$ minute per epoch.

For the neighborhood size: it is reported that most users have between $30$ and $300$ contacts. We therefore take $d = 300$.

With these estimates ($d = 300$, epoch duration of $1$ minute, mean reset interval of one month $\approx 10^4$ minutes), we estimate the number of OOB calls of each type described in Section~\ref{sec:OOB_reasons}. Exact values can be derived assuming a Poisson reset process and using probability approximation bounds (e.g., the Chernoff inequality); here we report only the order of magnitude of the respective probabilities.

\begin{enumerate}
    \item One OOB call is required per reset.
    \item Two neighbors reset within 5 epochs (i.e., within 5 minutes): since each reset occurs on average once every $10^4$ minutes, this has probability of order $10^{-9}$. For each pair of direct neighbors, this is expected to occur once every $\approx 2{,}000$ years.
    \item Under correct server execution, each new public key is accepted within 3 epochs (see Theorem~\ref{lemma:correct_execution}). The probability that 150 or more of $u$'s 300 direct neighbors reset within the same 3 epochs is negligible~-- well below $10^{-100}$.
    \item By an analogous argument, this probability is also well below $10^{-100}$.
    \item If the server is honest, this case never occurs.
    \item If the server is honest, this case never occurs.
\end{enumerate}

In summary, the dominant source of OOB calls is the first reason: the total number of OOB calls is approximately equal to the total number of resets in the system. It is exactly one of the biggest advantages of our protocol --- the estimated amount of OOB calls matches the lower bound and is significantly lower than full OOB gossip communication between all the neighbors on each epoch.

\section{Previous reviews}

We address here recieved questions from a careful reader raised about the case analysis in the proof of Theorem~2 (Appendix~\ref{sec:full_algorithmic_proofs}).

\paragraph{Q1: Are the PoMs in Cases~3.2.1 and~3.3.3 reachable?}
 The first one requires Alice to reset again within 5 epochs of her reset at tt, but the security game aborts immediately when this happens — the PoM cannot occur after abort. The second requires two distinct server-signed digests for the same epoch, which is ruled out in $G1\subset G2 G1 \subset G2$. So, it looks like neither branch can be reached. Is that true?

\emph{Answer.} Whether each branch is reachable does not affect the result. Our main theorem states that in every execution either the game aborts, in which case the adversary loses, or a valid PoM is produced. A branch that can only occur after an abort is therefore already covered.
(i) If Alice actually resets twice within $\mingap$ epochs, the game of Fig.~1 aborts and the adversary loses. If she does not, then one of the two server-signed $\mathsf{Update}$ notifications was fabricated by the server, and the pair is a PoM of form~(iii) in $\VerifyPoM$. Either way the adversary does not win.
(ii) The abort rules of $G_1$ and $G_2$ cover forgeries of \emph{honest users'} signatures (SIG-Abort) and two openings of the \emph{same} digest at the same label (PB-Abort). Neither prevents the malicious server from signing two \emph{different} digests for the same epoch under its own key. Such a pair is precisely a PoM of form~(ii).

\paragraph{Q2: Which protocol step makes Bob call Alice out-of-band in Case~3.3?}
 The trigger ``if Bob cannot collect $\lceil d/2\rceil+1$ confirming digests, he calls Alice OOB'' does not seem to match any protocol line, since Phase~4 Case~1.1 counts over $N_0(\text{Alice})$ rather than $N_0(\text{Bob})$.

\emph{Answer.} The trigger is Phase~4, Case~1.1, instantiated with $u=\text{Bob}$ and $v=\text{Alice}$. On receiving $\mathsf{Update}(\text{Alice},\cdot)$, Bob counts users $w\in N_0(\text{Alice})$ that did not update recently and whose keys he has accepted. Each such $w$ has a signed digest checked for equality in step~2 of the same case. If fewer than $d/2+1$ such $w$ exist, Bob invokes $\OOBshare$ with Alice. The count is thus over $N_0(\text{Alice})\subseteq N_1(\text{Bob})$, as in the proof, where $\{C_1,\dots\}\subseteq N_0(\text{Alice})\subseteq N_1(\text{Bob})$.

\paragraph{Q3: Can the adversary keep a majority of neighbours unaccepted by cycling resets?}
 Assumption~5 bounds each user's reset rate, but not the number of distinct neighbours that are simultaneously unaccepted.

\emph{Answer.} This situation is handled by the two threshold checks of Phase~4. If $k\ge d/2+1$ direct neighbours of $u$ are unaccepted, $u$ calls $k-d/2-1$ of them out-of-band, bringing the count back below the threshold. Symmetrically, if more than $d/2$ neighbours of some $w\in N_0(u)$ are unaccepted in $u$'s view, $u$ calls $w$ out-of-band (with $u=\text{Bob}$, $w=\text{Alice}$ in Case~3.3). Each such call either resolves the discrepancy or yields a PoM. Cycling resets therefore does not suspend the argument; it only increases the number of OOB calls. In an honest execution this happens with negligible probability (Appendix~F).

\paragraph{Q4: Does the argument of Theorem~5 carry over to two-hop neighbours in Case~4.1?}
Theorem~5 is stated for direct neighbours, while in Case~4.1 Carol receives Alice's data only via server routing.

\emph{Answer.} The argument of Theorem~5 does not rely on direct message delivery. It uses only that (a)~in the absence of an $\mathsf{Update}$ notification a user keeps its stored key, $pk(A,t,\cdot)\gets pk(A,t-1,\cdot)$, and (b)~epoch $t+1$ is the first inconsistent epoch, so the stored key equals $pk(A,t-1,A)=pk(A,t,A)$. Both facts hold for every $v\in N_0(u)\cup N_1(u)$ (Lemma~6), and hence for Carol and Alice.
To present a different key to Carol, the server must either send an $\mathsf{Update}$ notification, which is Case~4.3, or open the digest to a different key for Alice, which is ruled out by PB-Abort in $G_2$. The sub-case in which Carol herself resets is treated separately in Case~4.1 via her OOB call and the digest check against Bob's message.

\end{document}